\newcommand{\ignore}[1]{}
\newcommand{\keywords}[1]{\textbf{Keywords:}\quad #1}
\def\a{\alpha}
\def\d{\delta}
\def\lam{\lambda}
\def\g{\gamma}
\def\1{{1\hskip-2.5pt{\rm l}}}
\def\A{\mathcal A}
\def\d{\delta}
\def\f{\varphi}
\def\g{\gamma}
\def\R{\mathbb R}
\long\def\ignore#1{}
\newcommand{\LV}{\left\vert}
\newcommand{\RV}{\right\vert}
\newcommand{\myqed}{\hfill$\square$}
\DeclareMathOperator*{\argmax}{arg\,max}
\DeclareMathOperator*{\argmin}{arg\,min}
\theoremstyle{definition}
\newtheorem{definition}{Definition}[section]
\newtheorem{example}[definition]{Example}
\theoremstyle{plain}
\newtheorem{proposition}[definition]{Proposition}
\newtheorem{lemma}[definition]{Lemma}
\newtheorem{corollary}[definition]{Corollary}
\newtheorem{theorem}[definition]{Theorem}
\newcommand*\patchAmsMathEnvironmentForLineno[1]{%
	\expandafter\let\csname old#1\expandafter\endcsname\csname #1\endcsname
	\expandafter\let\csname oldend#1\expandafter\endcsname\csname end#1\endcsname
	\renewenvironment{#1}%
	{\linenomath\csname old#1\endcsname}%
	{\csname oldend#1\endcsname\endlinenomath}}%
\newcommand*\patchBothAmsMathEnvironmentsForLineno[1]{%
	\patchAmsMathEnvironmentForLineno{#1}%
	\patchAmsMathEnvironmentForLineno{#1*}}%
\title{A Taste for Variety\thanks{This research has been funded by the Deutsche Forschungsgemeinschaft (DFG, German Research Foundation) -- Project Number 461570745.}}
\author{Galit Ashkenazi-Golan\thanks{Department of Mathematics, London School of Economics} \and
	  Dominik Karos\thanks{Center for Mathematical Economics, Bielefeld University} \and Ehud Lehrer\thanks{School of Mathematical Sciences, Tel Aviv University and Durham University, Durham, UK.}}
\begin{document}
	
	
	\maketitle

	\begin{abstract}
		A decision maker repeatedly chooses one of a finite set of actions. In each period, the decision maker's payoff depends on fixed basic payoff of the chosen action and the frequency with which the action has been chosen in the past. We analyze optimal strategies associated with three types of evaluations of infinite payoffs: discounted present value, the limit inferior, and the limit superior of the partial averages. We show that when the first two are the evaluation schemes, a stationary strategy can always achieve the best possible outcome. However, for the latter evaluation scheme, a stationary strategy can achieve the best outcome only if all actions that are chosen with strictly positive frequency by an optimal stationary strategy have the same basic payoff.
	\end{abstract}
	\noindent \keywords{Repeated decision problem; intertemporal choice; time-inconsistent preferences; habit formation}
	\medskip
	
	\noindent \textbf{JEL Classification:} C61, C73, D01, D91
	

	\section{Introduction}
	
	When Phil Connors\footnote{%
		Played by Bill Murray in ``Groundhog Day'', 1993}
	was trapped in a time loop, he initially enjoyed being able to do as he liked without fearing any repercussions.
	Yet, after a while, he became depressed as the rather limited entertainment options available in Punxsutawney did not measure up to his taste for variety.
	In this paper we investigate what Phil's optimal long-term payoff would have been, had he not been able to escape his temporal prison.
	That is, we consider a decision maker who has to repeatedly choose from a finite set of actions and whose stage payoff depends both on the action itself and also on how often she has chosen it in the past.
	
	The model that we propose here looks rather innocuous.
	There is a finite set of actions, each endowed with a fixed basic payoff, and at each period the decision maker has to choose one of them.
	Her stage utility from choosing some action $a$ is $a$'s basic payoff multiplied by a factor that depends on the frequency with which $a$ has been played so far and her taste for variety.
	The greater this frequency, the smaller the utility.
	
	The decision maker is interested in her long-run payoff. We analyze three types of long-term payoff evaluations: the limit inferior and limit superior of the partial averages and the discounted one.
	It turns out that the limit inferior and discounted evaluations share the important feature that their optimal outcomes can be achieved by stationary strategies. However, the optimal strategy for the limit superior evaluation is stationary only in the degenerate case where all actions chosen with strictly positive frequency by an optimal stationary strategy have the same basic payoff.

	
	Classical economic theory assumes static preferences and discounted utility, as proposed by \citet{samuelson1937} and later motivated with an axiomatic foundation by \citet{koopmans1960stationary}.
	Since then, this approach has been challenged in various contexts.
	Arguably, the most developed one is choice under uncertainty.
	Based on the famous example of \citet{allais1953} dynamic consistency was challenged, and two major branches of the literature emerged:
	one focussed on behavioral aspects and challenged expected utility as a whole \citep[e.g.,][]{thaler1981, machina1989}; the other focussed on optimizing stage decisions based on one's experience from the past \citep{gilboa1995case}.
	This form of ``instance-based learning'', which has also found its way into cognitive science \citep{gonzales_lerch_lebiere2003, stewart_chater_brown2006}, asserts a causal connection between past and present behavior rather than dynamic consistency.
	Yet, this paper does not cover dynamic inconsistencies that originate in uncertainty, so we refrain from providing an extensive overview of the literature here and refer to \citet{etner_jeleva_tallon2012}.
	
	In this paper, we exclusively cover complete information.
	In this case as well, several forms of time-inconsistent behavior are present: first, there is the classical present bias in which agents over-discount future payoffs \citep{odonoghue_rabin1999}.
	As we will investigate how the past (rather than the future) affects current decisions, this is not the behavior we are interested in.
	Somewhat closer in spirit is the research on reference-dependent utility \citep{kHoszegi2006model, odonoghue_sprenger2018} if the reference point is based on the past \citep{baucells_weber_welfens2011}.
	However, our decision maker does not derive a reference point based on past choices, but rather obtains (or loses) some utility for making the same choice very often.
	
	Our decision maker's preferences are more closely related to the idea of ``habit formation'' and to the model of \citet{kaiser_schwabe2012}.
		Originally, \citet{becker1988theory} propose a model of ``rational addiction'' in which a decision maker maximizes aggregated future utility whereby the stage utility at any time depends on past consumption.
	In this flavor, axiomatic characterizations of history-dependent consumer preferences over future consumption paths were developed to account for this effect \citep[e.g.,][]{rozen2010foundations, he_dyer_butler2013, rustichini_siconolfi2014}.
	These models play a crucial role in macroeconomic models as they explain some phenomena and fit data better than standard expected utility theory.
	For instance, \citet{boldrin_christiano_fisher2001} introduce habit persistence into a business cycle model, and \citet{constantinides1990} uses habit persistence to resolve the equity premium puzzle \citep[cf.][]{mehra_prescott1985}.

	Outside the scope of economic theory, a similar idea has been brought forward in psychology.
	The ``mere exposure effect'' \citep{zajonc1968}, also called the ``familiarity effect'', describes the change in preferences from simply being exposed to some object.
	Originally, only positive effects were observed in experiments: an object became more popular as the decision maker was exposed to it more often.
	But there are scenarios where this effect is reversed \citep{crisp_hutter_young2008}, or the relation is even non-monotonic: increasing, reaching a satiation point, and decreasing again as exposure increases \citep{zajonc_shaver_tavris_vankreveld1972, williams1987}.
	In particular, research on the interdependences between the mere exposure effect and boredom \citep{bornstein_kale_cornell1990} or the novelty principle \citep{liao_yeh_shimojo2011} has provided a range of stage preferences over objects that depend on past exposure.
	
	The paper is organized as follows.
	In Section \ref{sec:preliminaries} we introduce the necessary notation and provide some examples that highlight the different ways an infinite history of actions might be evaluated.
	In particular, we illustrate by means of an example with two actions that the optimal limes superior cannot be achieved by a stationary strategy.
	In Section \ref{sec:greedy_and_stat} we investigate greedy histories, which maximize the stage utility in each period.
	We observe that such strategies are stationary, but we show that they are far from optimal even within the set of stationary strategies.
	In Section \ref{sec:liminf} we show that the optimal limit inferior can be achieved by a stationary strategy.
	Moreover, we show that the action frequencies of optimal histories are first-order stochastically ordered as the fatigue factor increases: the larger this factor, the more weight the optimal frequency will put on poor actions.
	Section \ref{sec:limsup} deals with the optimal limit superior.
	We show that the sequence of optimal average payoffs after finite time converges against the optimal limit superior and we use this observation to show that the latter cannot be achieved by a stationary strategy unless the optimal stationary strategy chooses the same action in each period.
    Section \ref{sec: discounting} deals with two aspects of discounting: discounting future payoffs and discounting the effect of past uses of actions.
    Discounting future payoffs means that one values future positive payoffs less than present ones. This is because one prefers to have good things now rather than later.
    Discounting the effect of past uses of actions means the impact of past experience on the present utility diminishes with time.
    For example, if one eats the same meal every day, one will eventually get tired of it.
    However, if one had a delicious meal yesterday, he or she would prefer the same meal today less than if he or she had it only a year ago.
    The main result of this section 
    states that the optimal outcome for a relatively patient decision maker can be obtained with stationary strategies.

	\section{Preliminaries}\label{sec:preliminaries}
	
	Let $A$ be a finite set of \emph{actions} that a decision maker has to choose from at each period $t\in\mathbb N\setminus\{0\}$ and let $u: A\to [0,\infty)$ be the decision maker's \emph{basic} payoff function.
	A \emph{finite history of length $T$} is a map $\vec a:\left\{1,\ldots,T\right\}\rightarrow A$, and an \emph{infinite history} is a map $\vec a:\mathbb N\setminus\{0\}\rightarrow A$.
	For $T\in\mathbb N$ we denote the set of histories of length $T$ by $A^T$, where $A^0$ only contains the empty history.
	The set of all finite histories is denoted by $A^{<\infty}$, that is, $A^{<\infty}=\bigcup_{T=0}^{\infty}A^T$, and the set of all infinite histories is denoted by $A^{\infty}$.
	For an infinite history $\vec a\in A^{\infty}$ and a non-negative integer $t\in\mathbb N\setminus\{0\}$ we write $\vec a_t$ for the $t$-th element of the sequences, $\vec a^{t}$ for the finite history $\left(\vec a_1,\vec a_2,\ldots,\vec a_t\right)$, and also  $\vec{a}_0=\vec a^0=\emptyset$.
	A \emph{strategy} is a map $\sigma:A^{<\infty}\rightarrow A$.
	
	We denote the indicator function by $\mathbbm 1$, that is, for a history $\vec a$ we have that $\1_{\vec a_s=a}=1$ if $\vec a_s=a$ and $\1_{\vec a_s=a}=0$ otherwise.
	We define the map $\f:A\times A^{<\infty}\rightarrow\Delta(A)$ as
	\begin{align*}
		\f\left(a\middle\vert \vec a^t\right) =\begin{cases}
			\frac{1}{t}\sum_{s=1}^{t}\1_{\vec a_s=a}, & \text{if } t\geq 1,\\
			0, & \text{if } t=0.
		\end{cases}
	\end{align*}
	That is,  $\f\left(a\middle\vert \vec a^{t-1}\right)$ is the \emph{frequency} of $a$ in the history $\vec a^{t-1}=\left(\vec a_1,\vec a_2,\ldots,\vec a_{t-1}\right)$.
	
	In the repeated decision problem the decision maker experiences some ``fatigue'' when choosing the same action repeatedly.
	More precisely, there is $\gamma\in\left(0,1\right]$ such that when taking action $a\in A$ after history $\vec a^{t-1}$, the \emph{stage payoff} at stage $t$ is
	\begin{align*}
		u_{\g,t}(a;\vec a^{t-1})=\left(1-\gamma\f\left(a\middle\vert \vec a^{t-1}\right)\right)u(a_t).
	\end{align*}
	A large $\gamma$ represents strong fatigue or a strong ``taste for variety'': the stage payoff quickly declines if an action is chosen repeatedly.
	If $\gamma=0$, there is no need for variety, and the maximization of stage payoff and basic payoff are equivalent.
	We exclude this case.
	
	We are interested in the ``maximal'' payoff a decision maker can obtain in such a repeated decision problem.
	Specifically, for an infinite history $\vec{a}\in A^{\infty}$ the decision maker's \emph{average} (\emph{undiscounted}) \emph{utility} at $T$ is
	\begin{align*}
		U^T_{\gamma}\left(\vec{a}\right)=\frac{1 }{T}\sum_{t=1}^{T} u_{\g,t}(a;\vec a^{t-1})=\frac{1 }{T}\sum_{t=1}^{T}\left(1-\gamma\f\left(a_t\middle\vert a^{t-1}\right)\right)u\left(a_t\right).
	\end{align*}
	Surely, $U^T_{\gamma}\left(\vec{a}\right)<\infty$ for all $\vec a\in A^{\infty}$ and all $T\in\mathbb N\setminus\{0\}$.
	Yet, in general, the sequence $\left(U^T_{\gamma}\left(\vec{a}\right)\right)_{T\in\mathbb N\setminus\{0\}}$ will not converge.
	
	\begin{example}\label{exa:1_2_sequence_initial}
		Let $A=\left\{a,b\right\}$ with $u(a)=1$ and $u(b)=10$.
		Consider the history $\vec a$ that is defined by $\vec a_1=a$, $\vec a_2=b$, $\vec a_3=a$ and
		\begin{align*}
			\vec a_t = \begin{cases}
				a, & \text{if there is an odd } m\in\mathbb N\setminus\{0\} \text{ such that } 3\cdot 2^m+1\leq t \leq 3\cdot 2^{m+1},\\
				b, & \text{if there is an even } m\in\mathbb N\setminus\{0\} \text{ such that } 3\cdot 2^m+1\leq t \leq 3\cdot 2^{m+1},
			\end{cases}
		\end{align*}
		for $t\geq 4$.
		That is, $\vec a=\left(a,b,a,b,b,b,a,a,a,a,a,a,b,\ldots\right)$.
		In this sequence, exponentially increasing blocks of consecutive $a$'s and $b$'s are played alternating.
		In particular, from $t\geq 4$ onwards each block is as long as the entire history before the block, so that the frequency of either action fluctuates between $1/3$ at the beginning of each block and $2/3$ at the end.
		The sequence of average utilities of this infinite history does not converge.
		Intuitively, it will be lowest at the end of any $a$-block, and highest at the end of any $b$-block.
		We shall have a closer look at this behavior later.
		\myqed
	\end{example}

	%
	%

	\noindent As the sequence $\left(U^T_{\gamma}\left(\vec{a}\right)\right)_{T\in\mathbb N\setminus\{0\}}$ might not converge for all $\vec a\in A^{\infty}$, there is no ``obvious'' way to compare two infinite histories $\vec a,\vec b\in A^{\infty}$.
	Yet, as every sequence of average utility is bounded, we can use their upper and lower accumulation points for comparisons.
	To keep notation short, define for any $\vec a\in A^{\infty}$
	\begin{align*}
		\overline{V}_{\gamma}\left(\vec{a}\right) &= \limsup_{T\rightarrow\infty} U^T_{\gamma}\left(\vec{a}\right) & & \text{and} &
		\underline{V}_{\gamma}\left(\vec{a}\right) &= \liminf_{T\rightarrow\infty} U^T_{\gamma}\left(\vec{a}\right),
	\end{align*}	
	which are the highest and lowest accumulation points that the sequence of average utilities can reach for the history $\vec a$.
	Moreover, let
	\begin{align*}
		\overline V_{\gamma} &= \sup\left\{\overline V_{\gamma}\left(\vec a\right)\middle\vert \vec a\in A^{\infty}\right\}
		& & \text{and} &
		\underline V_{\gamma} &= \sup\left\{\underline V_{\gamma}\left(\vec a\right)\middle\vert \vec a\in A^{\infty}\right\}.
	\end{align*}
	Thus, for each $v<\overline V_{\gamma}$ there is a history $\vec a$ whose average utility is at least $v$ in infinitely many periods.
	Likewise, for each $v<\underline V_{\gamma}$ there is a history $\vec a$ whose average utility is at least $v$ in all but finitely many periods.
	
	\begin{example}\label{exa:limsup}
		Recall the history $\vec a$ from Example \ref{exa:1_2_sequence_initial}.
		As $t$ gets large, the average frequency of the action that is played in a block is approximated by
		\begin{align}\label{equ:exa_freq}
			x = \int_{0}^1\left( 1-\frac{2}{3s+3}\right)ds = 1-\left(\frac{2}{3}\ln\left(2\right)-\ln\left(1\right)\right) = 1 - \frac{2}{3}\ln(2).
		\end{align}
		Thus, even though the frequencies of $a$ and $b$ do not converge, the average of $\varphi\left(a\middle\vert a^{t-1}\right)$ taken over all $t$ with $\vec a_t=a$ converges towards $x$, and the same is true for the average of $\varphi\left(b\middle\vert a^{t-1}\right)$ taken over all $t$ with $\vec a_t=b$.
		Hence, at the end of any block of $a$'s, the average payoff is approximately
		\begin{align*}
			U_{\gamma}^T\left(\vec a\right) &\approx \frac{2}{3}\left(1-\gamma\left(1 - \frac{2}{3}\ln(2)\right) \right) u(a) + \frac{1}{3}\left(1-\gamma\left(1 - \frac{2}{3}\ln(2)\right) \right) u(b)\\
			&= 4\left(1-\gamma\left(1 - \frac{2}{3}\ln(2)\right) \right).
		\end{align*}
		Observe that for such $T$ it holds that
		\begin{align*}
			u^T\left(a;\vec a\right) &= \left(1-\frac{2}{3}\gamma\right)\leq U_{\gamma}^T\left(\vec a\right) & \text{and} & &
            u^{T+1}\left(b;\vec a\right) &= 10\left(1-\frac{1}{3}\gamma\right)\geq U_{\gamma}^T\left(\vec a\right),
		\end{align*}
		for all $\gamma\in\left[0,1\right]$.
		Thus, $U_{\gamma}^T\left(\vec a\right)$ is minimized at the end of each $a$-block, and we find $\underline V_{\gamma}\left(\vec a\right) = 4\left(1-\gamma\left(1 - \frac{2}{3}\ln(2)\right) \right)$.
		
		In order to find $\overline V_{\gamma}\left(\vec a\right)$ we show that $U_{\gamma}^T\left(\vec a\right)$ achieves its maxima always at the end of $b$-blocks.
		So, consider a (large) $b$-block.
		We want to show that the average utility of $\vec a$ is increasing throughout the entire block.
		So, keeping in mind that the block is large, let $x\in\left[0,1\right]$ and consider the period after a fraction $x$ of the block has passed.
		The frequencies of $a$ and $b$ at this point in time are given by $f_a \approx \frac{2}{3x+3}$ and $f_b=1-f_a\approx\frac{3x+1}{3x+3}$.
		Hence, the stage utility is given by
		\begin{align*}
			v(x)=10\left(1-\gamma f_b\right) \approx \frac{30\left(1-\gamma\right)x + 30-10\gamma}{3x+3}.
		\end{align*}
		The average frequency of $a$ at $x$ (taken over the periods where $a$ has been chosen) is still given in (\ref{equ:exa_freq}).
		The average frequency of $b$ at $x>0$ is given by
		\begin{align*}
			\frac{1}{x}\int_{0}^x 1-\frac{2}{3s+3}ds
			= 1-\frac{1}{x}\left(\frac{2}{3}\ln\left(3x+3\right)-\frac{2}{3}\ln\left(3\right)\right)
			= 1- \frac{2}{3x}\ln\left(x+1\right).
		\end{align*}
		Thus, the average utility at $x$ is approximated by
		\begin{align*}
			U(x) &= f_a \left(1-\gamma\left(1 - \frac{2}{3}\ln(2)\right) \right) u(a) + f_b \left(1-\gamma\left(1- \frac{2}{3x}\ln\left(x+1\right)\right) \right)u(b)\\
			&= \frac{2}{3x+3}\left(1-\gamma\left(1 - \frac{2}{3}\ln(2)\right) \right) + \frac{3x+1}{3x+3} \left(1-\gamma\left(1- \frac{2}{3x}\ln\left(x+1\right)\right) \right)10.
		\end{align*}
		In particular, $U(1)<v(1)$ for all $\gamma\in\left[0,1\right]$.
		As $U$ is increasing at $x$ if and only if $v(x)>U(x)$, and $v$ is falling in $x$, this implies that $U$ reaches its maximum at $x=1$.
		Thus, we obtain
		\begin{align*}
			\overline V_{\gamma}\left(\vec a\right) = U(1) = 7\left(1-\gamma\left(1 - \frac{2}{3}\ln(2)\right) \right)
		\end{align*}
		for the highest limit point that $U^T_{\gamma}\left(\vec a\right)$ can reach.
		\myqed
	\end{example}

	\section{Greedy behavior and stationary strategies}\label{sec:greedy_and_stat}
	
	A simple strategy $\sigma$ that a decision maker might follow is to maximize her stage utility at each $t$, that is, choose her action at $t$ according to
	\begin{align*}
		\vec a_t = \sigma\left(\vec a^{t-1}\right) \in \argmax_{a\in A} \left(1-\gamma\varphi\left(a\middle\vert \vec a^{t-1}\right)\right)u(a).
	\end{align*}
	We call such a strategy a \emph{greedy strategy}.
	In this case the frequency $\varphi\left(a\middle\vert \vec a^{t}\right)$ necessarily converges for all $a\in A$.
	
	\begin{proposition}\label{pro:greedy}
		Let $\vec a\in A^{\infty}$ be the history evolving from a greedy strategy.
		Then $\varphi\left(a\middle\vert \vec a^{t}\right)$ converges for all $a\in A$ and
		\begin{align}\label{equ:frequency_greedy}
			\lim_{t\rightarrow\infty}\varphi\left(a\middle\vert \vec a^{t}\right) = \frac{\displaystyle \gamma - \LV A^*\RV + \sum_{b\in A^*}\frac{u(a)}{u(b)}}{\displaystyle\gamma \sum_{b\in A^*}\frac{u(a)}{u(b)}}
		\end{align}
		for all $a\in A^*$, where $A^*$ is the set of actions that are chosen infinitely often.
	\end{proposition}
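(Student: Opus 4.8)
The plan is to recognize the greedy dynamics as a conditional--gradient (Frank--Wolfe) iteration for a strictly concave potential, to obtain convergence of the frequency vector from an ``almost monotonicity'' estimate for that potential, and then to read off the limit from first--order optimality conditions. Throughout, write $f_a(t):=\f\left(a\mid\vec a^t\right)$ for $t\ge 1$ and $f_a(0):=0$; for every $t\ge1$ the vector $f(t)=(f_a(t))_{a\in A}$ lies in the simplex $\Delta(A)$ and satisfies $f(t)=f(t-1)+\tfrac1t\bigl(e_{\vec a_t}-f(t-1)\bigr)$, where $e_{\vec a_t}$ is the unit vector indexed by $\vec a_t$. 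One may assume $u(a)>0$ for all $a$: an action with zero basic payoff is payoff-irrelevant and, as long as some action has positive payoff, is not chosen infinitely often.

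Define $\Phi\colon\R^A\to\R$ by $\Phi(x)=\sum_{a\in A}u(a)\bigl(x_a-\tfrac{\g}{2}x_a^2\bigr)$, so that $\partial_a\Phi(x)=u(a)(1-\g x_a)$ and hence the greedy rule is exactly $\vec a_t\in\argmax_{a\in A}\partial_a\Phi(f(t-1))$: at time $t$ the iterate is pulled a fraction $1/t$ of the way toward a vertex of $\Delta(A)$ maximizing the linear form $\nabla\Phi(f(t-1))$. First I would prove that $f(t)$ converges. Since $\Phi$ is quadratic with Hessian $H=-\g\,\mathrm{diag}(u(a))\preceq0$, one has the exact identity $\Phi(f(t))-\Phi(f(t-1))=\nabla\Phi(f(t-1))\cdot\d_t+\tfrac12\d_t^{\top}H\d_t$ with $\d_t=\tfrac1t(e_{\vec a_t}-f(t-1))$. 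By the greedy choice and since $f(t-1)$ is a convex combination of vertices, $\nabla\Phi(f(t-1))\cdot\d_t=\tfrac1tG(f(t-1))$, where $G(x):=\max_a\partial_a\Phi(x)-\nabla\Phi(x)\cdot x\ge0$ is the Frank--Wolfe gap, which by concavity dominates $\Phi^\ast-\Phi(x)$ for $\Phi^\ast:=\max_{\Delta(A)}\Phi$; moreover $\tfrac12\d_t^{\top}H\d_t\ge-C/t^2$ for a constant $C$ since $\|e_{\vec a_t}-f(t-1)\|^2\le2$. Thus $\Phi(f(t))\ge\Phi(f(t-1))+\tfrac1tG(f(t-1))-C/t^2$; summing and using $\Phi(f(t))\le\Phi^\ast$ gives $\sum_t\tfrac1tG(f(t-1))<\infty$, hence $\liminf_tG(f(t))=0$, while the bound $\Phi(f(t))\ge\Phi(f(t-1))-C/t^2$ together with $\sum_t t^{-2}<\infty$ forces $\Phi(f(t))$ to converge---necessarily to $\Phi^\ast$, since $\Phi^\ast-\Phi(f(t))\le G(f(t))$ along the subsequence on which $G(f(t))\to0$. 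As $\Phi$ is strictly concave, $\Phi^\ast$ is attained at a unique $f^\ast\in\Delta(A)$, and a routine compactness argument then gives $f(t)\to f^\ast$, i.e.\ $\f\left(a\mid\vec a^t\right)\to f^\ast_a$ for every $a\in A$.

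It remains to identify $f^\ast$ and to verify that its support is $A^\ast$. The Karush--Kuhn--Tucker conditions for maximizing the concave $\Phi$ over $\Delta(A)$ yield a number $\lambda$ with $u(a)(1-\g f^\ast_a)\le\lambda$ for all $a$ and equality whenever $f^\ast_a>0$. On $S:=\{a:f^\ast_a>0\}$ this reads $f^\ast_a=\tfrac1\g(1-\lambda/u(a))$; summing over $S$ and using $\sum_{a\in S}f^\ast_a=1$ determines $\lambda=(|S|-\g)/\sum_{b\in S}u(b)^{-1}$, and substituting back reproduces exactly the displayed formula with $S$ in place of $A^\ast$. Finally, $f^\ast_a>0$ forces $a$ to be chosen infinitely often (since $\f\left(a\mid\vec a^t\right)\to f^\ast_a$), so $S\subseteq A^\ast$; conversely, for $a\in A^\ast$ one evaluates the greedy equality $u(a)(1-\g f_a(t-1))=\max_b u(b)(1-\g f_b(t-1))$ along the infinitely many $t$ with $\vec a_t=a$ and lets $t\to\infty$ to get $u(a)(1-\g f^\ast_a)=\lambda$, so $a\in S$ unless $u(a)=\lambda$; in that boundary case a short computation shows that adjoining $a$ to $S$ changes neither $\lambda$ nor any $f^\ast_b$ and sets $f^\ast_a=0$, so the formula for $A^\ast$ still returns the correct limits.

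The step I expect to be the main obstacle is the convergence of $f(t)$: its per-period change is only of order $1/t$ and therefore not summable, so one cannot conclude directly that $(f(t))$ is Cauchy---the concave potential $\Phi$ is precisely the device whose increments are summable from below, which is what forces convergence. A secondary subtlety is matching the support of $f^\ast$ with $A^\ast$ in the degenerate case $u(a)=\lambda$.
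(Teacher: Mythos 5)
Your proposal is correct, and it takes a genuinely different route from the paper's. The paper's proof is very terse: it asserts (without detailed justification) that for every $\varepsilon>0$ there is a $T$ beyond which the stage utilities $\left(1-\gamma\varphi\left(a\middle\vert\vec a^{t-1}\right)\right)u(a)$ of all actions in $A^*$ lie within $\varepsilon$ of one another, then states that because the frequencies sum to one they must converge, and finally solves the resulting linear system. Your approach replaces this with a self-contained Lyapunov/Frank--Wolfe argument: you identify the greedy iteration as conditional gradient for the strictly concave potential $\Phi(x)=\sum_a u(a)\left(x_a-\tfrac{\gamma}{2}x_a^2\right)$ with step $1/t$ and vertex direction $e_{\vec a_t}$, exploit the exact second-order Taylor identity to get $\Phi(f(t))\ge\Phi(f(t-1))+\tfrac1t G(f(t-1))-C/t^2$, deduce from $\sum t^{-2}<\infty$ and boundedness that $\Phi(f(t))$ converges, from $\sum t^{-1}G(f(t-1))<\infty$ that $\liminf G=0$, and hence $\Phi(f(t))\to\Phi^\ast$; strict concavity and compactness then give $f(t)\to f^\ast$. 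This is a real technical gain: the per-step change in frequency is of order $1/t$ (not summable), so one cannot prove convergence naively, and the concave potential is precisely the device that supplies a summable lower bound on increments. The KKT identification of $f^\ast$ and the subsequent algebra reproducing the displayed formula matches the paper's final computation exactly. Your treatment of the boundary case $u(a)=\lambda$ (an action in $A^*$ with limiting frequency zero) is careful and goes beyond the paper, which silently identifies the support of the limit with $A^*$; the verification that adjoining such an $a$ to the support set leaves the multiplier $\lambda$ and all limiting frequencies unchanged is exactly the computation needed to make that identification rigorous. Overall: a correct proof, more rigorous than the paper's at the convergence step, reaching the same formula.
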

	
	\begin{proof}
		For each $\varepsilon>0$ there is $T\in\mathbb N\setminus\{0\}$ such that
		\begin{align*}
			\LV\left(1-\gamma\varphi\left(a\middle\vert \vec a^{t-1}\right)\right)u(a) - \left(1-\gamma\varphi\left(b\middle\vert \vec a^{t-1}\right)\right)u(b)\RV <\varepsilon
		\end{align*}
		for all $a,b\in A^*$ and all $t\geq T$.
		As $\sum_{a\in A^*}\varphi\left(a\middle\vert \vec a^{t}\right)=1$ for all $t\geq 1$, the frequencies converge.
		Let $f_a = \lim_{t\rightarrow\infty}\varphi\left(a\middle\vert \vec a^{t}\right)$.
		Then $\left(1-\gamma f_a\right)u(a)=\left(1-\gamma f_b\right)u(b)$ for all $a,b\in A^*$.
		Solving for $b$ and summing over all $b$ we find that
		\begin{align*}
			1=\sum_{b\in A^*} f_b = \frac{1}{\gamma}\sum_{b\in A^*} \left(1-\frac{u(a)}{u(b)}\left(1-\gamma f_a\right)\right) = \frac{1}{\gamma}\left(\LV A^*\RV - \left(1-\gamma f_a\right)\sum_{b\in A^*}\frac{u(a)}{u(b)}\right)
		\end{align*}
		Solving for $f_a$ delivers (\ref{equ:frequency_greedy}).
	\end{proof}
	
	\noindent The expression in (\ref{equ:frequency_greedy}) provides a bound on the number of actions that can be played with positive probability.
	In particular, for $\gamma<1$ it is possible that the greedy strategy will only choose a single action that is played at every $t$.
	
	As seen in Example \ref{exa:1_2_sequence_initial}, frequencies do not converge for all $\vec a\in A^{\infty}$.
	Yet, if they do, as for the greedy strategy above, the average utility converges as well.
	We say that a history $\vec a\in A^{\infty}$ is \emph{stationary} if $\lim_{t\rightarrow\infty}\varphi\left(a\middle\vert \vec a^{t-1}\right)$ exists for all $a\in A$.
	In this case we write $\varphi\left(a\middle\vert\vec a\right) = \lim_{t\rightarrow\infty}\varphi\left(a\middle\vert \vec a^{t-1}\right)$.
	If there is no risk of confusion, we will even write $\varphi(a)=\varphi\left(a\middle\vert\vec a\right)$.
	The limit of the average utilities is then given by
	\begin{align}\label{equ:stationary_limit}
		\overline{V}_{\gamma}\left(\vec{a}\right)  = \underline{V}_{\gamma}\left(\vec{a}\right) = \lim_{T\rightarrow\infty}U^T_{\gamma}\left(\vec a\right) = \sum_{a\in A}\varphi\left(a\right)\left(1-\gamma\varphi\left(a\right)\right)u(a).
	\end{align}
	We denote the optimal limit that can achieved by any stationary history by
	\begin{align*}
		V_{\gamma}^* &= \sup\left\{\underline V\left(\vec a\right)\mid \vec a\in A^{\infty} \ \text{is stationary} \right\}.
	\end{align*}
    Finally, we say that a strategy is  \emph{stationary} if it generates a stationary history.

	\begin{example}\label{exa:greedy}
		Let $A=\left\{a,b\right\}$ with $u(a)=1$ and $u(b)=10$.
		If $\gamma\leq 0.9$, the greedy strategy will choose $b$ for all $t$.
		If $\gamma>0.9$, then the frequencies achieved by the greedy strategy are $\varphi\left(a\right) = \frac{10\gamma-9}{11\gamma}$ and $\varphi\left(b\right) = \frac{\gamma +9}{11\gamma}$.
		Thus,
		\begin{align*}
			\underline V_{\gamma}\left(\vec a\right) = \frac{10\gamma -9}{11\gamma}\left(1-\gamma \frac{10\gamma -9}{11\gamma}\right)u(a) + \frac{\gamma +9}{11\gamma}\left(1- \gamma\frac{\gamma +9}{11\gamma}\right)u(b) = \frac{20-10\gamma}{11}.
		\end{align*}
		In particular, for $\gamma=0.9$, only $b$ will be chosen and its stage payoff converges towards 1.
		\myqed
	\end{example}
	
	\noindent The previous example illustrates that the greedy strategy does not deliver particularly high payoffs.
	Indeed, the ``good'' actions are overused so that their stage payoffs become very low, resulting in a low average payoff.
	Finding $V^*_{\gamma}$ is indeed not very difficult; by (\ref{equ:stationary_limit}), it is given by
	\begin{align}\label{equ:max_problem_stationary}
		V_{\gamma}^* = \max_{x\in\Delta(A)}\sum_{a\in A}x_a\left(1-\gamma x_a\right)u(a),
	\end{align}
	where $\Delta(A)$ denotes the set of probability measures over $A$.
	As the objective function is strictly quasi-concave for all $\gamma>0$, the maximization problem in (\ref{equ:max_problem_stationary}) has a unique solution $x^*\in\Delta(A)$.
	In particular, every stationary history $\vec a$ with $\varphi\left(\cdot\middle\vert\vec a\right)=x^*$ is optimal.
	The next proposition specifies these optimal frequencies.
	
	\begin{proposition}\label{pro:stat}
		Let $\vec a\in A^{\infty}$ be the history evolving from an optimal stationary strategy.
		Then
		\begin{align}\label{equ:frequency_stat}
			\varphi\left(a\right) = \frac{\displaystyle 2\gamma - \LV A^*\RV + \sum_{b\in A^*}\frac{u(a)}{u(b)}}{\displaystyle 2\gamma \sum_{b\in A^*}\frac{u(a)}{u(b)}}
		\end{align}
		for all $a\in A^*$, where $A^*\subseteq A$ is the set of actions with $\varphi(a)>0$.
	\end{proposition}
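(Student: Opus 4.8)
The plan is to obtain (\ref{equ:frequency_stat}) from the first-order conditions of the concave program (\ref{equ:max_problem_stationary}), in close analogy with the proof of Proposition \ref{pro:greedy}; the difference is that an optimal stationary strategy equalizes \emph{marginal} stage contributions across the support, not the stage payoffs themselves. Throughout I take the basic payoffs to be positive on the support $A^\ast$, which is anyway needed for (\ref{equ:frequency_stat}) to be well defined (a null action contributes nothing to the objective and only complicates the bookkeeping).

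First I would rewrite the objective as $g(x)=\sum_{a\in A}\bigl(u(a)x_a-\gamma u(a)x_a^2\bigr)$, which is separable and, for $\gamma>0$, concave on $\Delta(A)$ and strictly concave in every coordinate with positive basic payoff; this is the reason (\ref{equ:max_problem_stationary}) has the unique maximizer $x^\ast=\varphi(\cdot\mid\vec a)$ recorded before the proposition. Put $A^\ast=\{a\in A:x^\ast_a>0\}$. Since $x^\ast$ lies in the relative interior of the face $F=\{x\in\Delta(A):x_b=0\text{ for }b\notin A^\ast\}$ and maximizes $g$ over $F$, only the single equality constraint $\sum_{a\in A^\ast}x_a=1$ is active there, so the Lagrange (KKT) condition yields a multiplier $\lambda\in\mathbb R$ with
\[
\frac{\partial g}{\partial x_a}(x^\ast)=u(a)\bigl(1-2\gamma\,\varphi(a)\bigr)=\lambda\qquad\text{for every }a\in A^\ast .
\]

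It then remains to solve this system together with $\sum_{a\in A^\ast}\varphi(a)=1$. From the displayed equations, $\varphi(a)=\tfrac{1}{2\gamma}\bigl(1-\lambda/u(a)\bigr)$; summing over $a\in A^\ast$ and using the constraint gives $2\gamma=|A^\ast|-\lambda\sum_{b\in A^\ast}1/u(b)$, hence $\lambda=\bigl(|A^\ast|-2\gamma\bigr)\big/\sum_{b\in A^\ast}\tfrac{1}{u(b)}$. Plugging this back in and multiplying numerator and denominator of $\varphi(a)$ by $u(a)$ — so that $u(a)\sum_{b\in A^\ast}1/u(b)$ becomes $\sum_{b\in A^\ast}u(a)/u(b)$ — reproduces (\ref{equ:frequency_stat}) verbatim. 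The algebra is routine; I expect the only delicate point to be the appeal to the first-order conditions on the correct face, i.e.\ the observation that once one conditions on the support $A^\ast$ the inequality constraints $x_a\ge 0$, $a\in A^\ast$, are inactive at $x^\ast$, so that the displayed equations both hold at $x^\ast$ and — by strict concavity of $g$ in the $A^\ast$-coordinates — characterize it uniquely.
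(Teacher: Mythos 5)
Your proposal is correct and follows essentially the same route as the paper: the paper also derives the first-order conditions $u(a)(1-2\gamma\varphi(a))=u(b)(1-2\gamma\varphi(b))$ on the support $A^*$ and then solves by summing over $b$, exactly as in the proof of Proposition \ref{pro:greedy}. Your explicit introduction of the Lagrange multiplier $\lambda$ and the remark about the inequality constraints being inactive on the relevant face are just a more carefully spelled-out version of the same KKT argument.
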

	
	\begin{proof}
		The first-order conditions of the maximization problem in (\ref{equ:max_problem_stationary}) are
		\begin{align*}
			\left(1-2\gamma\varphi\left(a\right)\right)u(a)=\left(1-2\gamma\varphi\left(b\right)\right)u(b).
		\end{align*}
		for all $a,b\in A^*$.
		With the same steps as in the proof of Proposition \ref{pro:greedy} one obtains (\ref{equ:frequency_stat}).
	\end{proof}

	\begin{example}\label{exa:opt_stat}
		Let $A=\left\{a,b\right\}$ with $u(a)=1$ and $u(b)=10$.
		Let $\vec a$ be the history evolving from an optimal stationary strategy.
		For $\gamma\leq\frac{9}{20}$ action $a$ will not be played with positive probability.
		For $\gamma>\frac{9}{20}$, the optimal frequencies are $\varphi(a) = \frac{20\gamma -9}{22\gamma}$ and $\varphi(b) = \frac{2\gamma +9}{22\gamma}$.
		Thus,
        \small
		\begin{align*}
			\underline V_{\gamma}\left(\vec a\right) = \frac{20\gamma -9}{22\gamma}\left(1-\gamma \frac{20\gamma -9}{22\gamma}\right)u(a) + \frac{2\gamma +9}{22\gamma}\left(1- \gamma\frac{2\gamma +9}{22\gamma}\right)u(b) = \frac{-40\gamma^2+80\gamma+81}{44\gamma}.
		\end{align*}
		\normalsize
        In particular, this expression is strictly larger than the average utility of the greedy strategy in Example \ref{exa:greedy}.
		\myqed
	\end{example}
	
	\noindent A special case of optimal stationary histories emerges if $A$ contains exactly two elements and $\gamma =1$.
	In this case Proposition \ref{pro:stat} immediately implies the following corollary.
	
	\begin{corollary}
		Let $\gamma=1$, let $A=\left\{a,b\right\}$, and let $\vec a\in A^{\infty}$ be an optimal stationary history.
		Then $\varphi(a)=\varphi(b)=\frac{1}{2}$.
		In particular, $V_1^*=\frac{1}{4}\left(u(a)+u(b)\right)$.
	\end{corollary}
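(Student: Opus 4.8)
The plan is to reduce the statement to the maximization problem (\ref{equ:max_problem_stationary}) and to exploit the special algebraic structure that $\gamma=1$ and $|A|=2$ create. Write $x=\varphi(a)$, so that $\varphi(b)=1-x$, and recall from (\ref{equ:stationary_limit}) that the limit of the average utilities of a stationary history with frequency $x$ on $a$ equals
\begin{align*}
	g(x) = x\bigl(1-\gamma x\bigr)u(a) + (1-x)\bigl(1-\gamma(1-x)\bigr)u(b).
\end{align*}
Setting $\gamma=1$ and using $1-(1-x)=x$, both summands acquire the common factor $x(1-x)$, so that $g(x)=x(1-x)\bigl(u(a)+u(b)\bigr)$.

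First I would observe that, up to the nonnegative constant $u(a)+u(b)$, this is just the downward parabola $x\mapsto x(1-x)$ on $[0,1]$, which — provided $u(a)+u(b)>0$ — has the unique maximizer $x=\tfrac12$; this is consistent with the uniqueness of the maximizer of (\ref{equ:max_problem_stationary}) noted above for $\gamma>0$. Since an optimal stationary history $\vec a$ must, by (\ref{equ:max_problem_stationary}) and (\ref{equ:stationary_limit}), have $\varphi(\cdot\mid\vec a)$ equal to that unique maximizer, we conclude $\varphi(a)=\varphi(b)=\tfrac12$, and then $V_1^*=g\bigl(\tfrac12\bigr)=\tfrac14\bigl(u(a)+u(b)\bigr)$.

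Alternatively, and more in the spirit of a corollary to Proposition \ref{pro:stat}, I would argue by cases on the support $A^*=\{c\in A:\varphi(c)>0\}$. If $|A^*|=2$, then $2\gamma-|A^*|=0$, so the numerator of (\ref{equ:frequency_stat}) equals $\sum_{c\in A^*}u(a)/u(c)$ while the denominator equals twice that quantity, giving $\varphi(a)=\tfrac12$ at once; by symmetry $\varphi(b)=\tfrac12$. If $|A^*|=1$, say $A^*=\{b\}$, then $\varphi(b)=1$ and, by (\ref{equ:stationary_limit}), the limit utility is $u(b)(1-\gamma)=0$, which is strictly smaller than the value $\tfrac14\bigl(u(a)+u(b)\bigr)$ attained by the frequency vector $\bigl(\tfrac12,\tfrac12\bigr)$ whenever $u\not\equiv 0$; hence a single-action support is incompatible with optimality, and the same conclusion follows.

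There is no real obstacle here: the computation collapses precisely because the fatigue factors at $a$ and at $b$ sum to $2-\gamma(x+(1-x))=1$ when $\gamma=1$. The only point requiring care is the degenerate case $u(a)=u(b)=0$, in which every stationary history is optimal and the claim $\varphi(a)=\varphi(b)=\tfrac12$ holds only in the vacuous sense that $V_1^*=0$; I would simply exclude this case or flag it explicitly, since it carries no content.
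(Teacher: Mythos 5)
Your proposal is correct, and both of your arguments succeed. The second one — arguing by cases on $|A^*|$ and plugging $\gamma=1$, $|A^*|=2$ into formula (\ref{equ:frequency_stat}) so that the numerator $2\gamma-|A^*|+\sum_{c\in A^*}u(a)/u(c)$ reduces to $\sum_{c\in A^*}u(a)/u(c)$ and the ratio collapses to $\tfrac12$ — is exactly what the paper intends when it says the corollary "immediately" follows from Proposition \ref{pro:stat}; you add value by explicitly ruling out the $|A^*|=1$ case, which the paper leaves implicit. Your first argument is a genuinely more elementary route: at $\gamma=1$ with two actions the objective of (\ref{equ:max_problem_stationary}) factors as $x(1-x)\bigl(u(a)+u(b)\bigr)$, so the unique maximizer $x=\tfrac12$ and the value $\tfrac14\bigl(u(a)+u(b)\bigr)$ are both read off in one line, with no need to invoke the first-order conditions or the general frequency formula at all. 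The direct computation is arguably the cleaner proof for this special case; your caveat about $u(a)=u(b)=0$ is reasonable though harmless, since in that degenerate case $V_1^*=0$ trivially and the frequency claim carries no force.
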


	\noindent At this point it has become clear that defining what an optimal strategy is crucially depends on how the evolving histories are evaluated.
	Finding optimal stationary strategies is rather simple, as shown in Proposition \ref{pro:stat}, yet Examples \ref{exa:limsup} and \ref{exa:opt_stat} illustrate that stationary strategies might not be able to achieve $\overline V_{\gamma}$ as an average utility.
	Indeed, for $\gamma=1$, the strategy in Example \ref{exa:limsup} achieves an average utility of $\frac{14}{3}\ln(2)\approx 3.23$, while the best stationary strategy in Example \ref{exa:opt_stat} achieves only $\frac{11}{4}=2.75$.
	
	In the remainder of the paper we shall investigate how the three possible values, that is, the optimal highest accumulation point, the optimal lowest accumulation point, and the optimal limit (if it exists) compare.
	They must satisfy
	\begin{align*}
		V^*_{\gamma} \leq \underline V_{\gamma} \leq \overline V_{\gamma}.
	\end{align*}
	Our main results will be that here the first inequality is actually an equality, while the second inequality is strict if there are at least two actions $a,b\in A$ with $u(a)\neq u(b)$ that are chosen with positive frequency in an optimal stationary history.

	\section{Stationary strategies achieve $\underline V_{\gamma}$}\label{sec:liminf}
	
	Finding a strategy such that the evolving history $\vec a$ that achieves $\underline V_{\gamma}\left(\vec a\right)=\underline V_{\gamma}$ is essentially a dynamic program on a countable state space.
	Unfortunately, these problems typically lack a tractable structure, so there are no general results that could be helpful in the current context.
	Thus, we will have to develop some tools to obtain our result in Subsection \ref{subsec:opt_V*}.
	In Subsection \ref{subsec:fatique} we shall then investigate how the optimal frequencies change as the parameter $\gamma$ varies.

	\subsection{The optimality of $V_{\gamma}^*$}\label{subsec:opt_V*}
	
	Let $\vec a$ be a history.
	For any $t_1,t_2\in\mathbb N\setminus\{0\}$ with $t_2>t_1$ let the \emph{block} from $t_1$ to $t_2$ in $\vec a$ be the sequence of actions $\left(\vec a_{t_1+1},\vec a_{t_1+2},\ldots,\vec a_{t_2}\right)$.
	The average utility within this block is given by
	\begin{align}\label{equ:W}
		W_{\gamma}=W_{\gamma}\left(\vec a,t_1,t_2\right)=\frac{1}{t_2-t_1}\sum_{s={t_1+1}}^{t_2}\left(1-\g\varphi\left(a_s\mid \vec a^{s-1}\right)\right)u(a_s).
	\end{align}
	Let $p(a)$ be the frequency with which $a$ is played in the block, that is,
	\begin{align}\label{equ:p}
		p(a) = p\left(a;\vec a,t_1,t_2\right) =\frac{1}{t_2-t_1}\sum_{s=t_1+1}^{t_2}\mathbbm 1_{\vec a_s = a}.
	\end{align}
	If such a block is ``not too long'', the frequencies will not change much between $t_1$ and $t_2$.
	We want to use this observation to derive an approximation of $W_{\gamma}$ by means of $p$ and the frequency at the beginning, that is, $\varphi\left(\cdot\mid\vec a^{t_1}\right)$.
	In particular, we define  $\widetilde U_\gamma$ as
	\begin{align}\label{equ:approx_U2}
		\widetilde{U}_{\gamma}=\widetilde{U}_{\gamma}\left(\vec a,t_1,t_2\right)=\sum_{a\in A}p(a)\left(1-\g\varphi\left(a\middle\vert\vec a^{t_1}\right)\right)u(a).
	\end{align}
	We show that $\widetilde U_{\gamma}$ is close to $W_{\gamma}$ if $t_2$ is relatively close to $t_1$, that is, if $\frac{t_2-t_1}{t_1}$ is small.

	\begin{lemma}\label{lem:w_U_approximation}
		Let $\vec a\in A^{\infty}$ be a history and let $t_1,t_2\in\mathbb N\setminus\{0\}$ with $t_2>t_1$.
		Then
		\begin{align}\label{equ:w_U_approximation}
			\big| W_{\gamma}- \widetilde{U}_{\gamma}\big|\le 2\frac{t_2-t_1}{t_1} \g \sum_{a\in A}u(a).
		\end{align}
	\end{lemma}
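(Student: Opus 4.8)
The plan is to expand $\widetilde U_\gamma$ into a sum over the periods of the block and to compare it term by term with $W_\gamma$. Substituting the definition of $p(a)$ from \eqref{equ:p} into \eqref{equ:approx_U2} and interchanging the order of summation gives
\[
  \widetilde U_\gamma=\frac{1}{t_2-t_1}\sum_{s=t_1+1}^{t_2}\left(1-\gamma\varphi\left(a_s\middle\vert\vec a^{t_1}\right)\right)u(a_s),
\]
so that subtracting this from \eqref{equ:W} yields the identity
\[
  W_\gamma-\widetilde U_\gamma=\frac{\gamma}{t_2-t_1}\sum_{s=t_1+1}^{t_2}\left(\varphi\left(a_s\middle\vert\vec a^{t_1}\right)-\varphi\left(a_s\middle\vert\vec a^{s-1}\right)\right)u(a_s).
\]
Thus everything reduces to bounding, for each $s$ with $t_1+1\le s\le t_2$, how far the frequency of the action $a_s$ has drifted between period $t_1$ and period $s-1$.

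The key step is the elementary estimate
\[
  \left|\varphi\left(a\middle\vert\vec a^{t'}\right)-\varphi\left(a\middle\vert\vec a^{t_1}\right)\right|\le\frac{t'-t_1}{t_1}\qquad\text{for all }t'\ge t_1 .
\]
I would obtain it by writing the empirical frequency over $\{1,\dots,t'\}$ as the convex combination $\varphi(a\mid\vec a^{t'})=\tfrac{t_1}{t'}\varphi(a\mid\vec a^{t_1})+\tfrac{t'-t_1}{t'}q$, where $q\in[0,1]$ denotes the frequency of $a$ in the block from $t_1$ to $t'$; subtracting $\varphi(a\mid\vec a^{t_1})$ leaves $\tfrac{t'-t_1}{t'}\bigl(q-\varphi(a\mid\vec a^{t_1})\bigr)$, whose modulus is at most $\tfrac{t'-t_1}{t'}\le\tfrac{t'-t_1}{t_1}$. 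Applying this with $t'=s-1\le t_2-1$ gives $\bigl|\varphi(a_s\mid\vec a^{t_1})-\varphi(a_s\mid\vec a^{s-1})\bigr|\le\frac{t_2-t_1}{t_1}$ uniformly in $s$.

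Plugging this back into the displayed identity for $W_\gamma-\widetilde U_\gamma$ gives
\[
  \bigl|W_\gamma-\widetilde U_\gamma\bigr|\le\frac{\gamma}{t_2-t_1}\cdot\frac{t_2-t_1}{t_1}\sum_{s=t_1+1}^{t_2}u(a_s)=\frac{\gamma}{t_1}\sum_{s=t_1+1}^{t_2}u(a_s),
\]
and since every basic payoff is non-negative we have $u(a_s)\le\sum_{a\in A}u(a)$ for each $s$, so the right-hand side is bounded by $\frac{t_2-t_1}{t_1}\gamma\sum_{a\in A}u(a)$, which is in fact stronger than \eqref{equ:w_U_approximation} — the factor $2$ in the statement merely absorbs slack. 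No real obstacle is expected here: the only substantive ingredient is the one-line convex-combination bound on the frequency drift, and the lemma is a purely quantitative continuity statement whose role is to feed into the block-concatenation argument of Subsection \ref{subsec:opt_V*}.
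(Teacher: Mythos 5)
Your proof is correct and follows essentially the same route as the paper: rewrite $\widetilde U_\gamma$ as a sum over the periods of the block so that $W_\gamma-\widetilde U_\gamma$ reduces to bounding the frequency drift $\left|\varphi\left(a_s\mid\vec a^{s-1}\right)-\varphi\left(a_s\mid\vec a^{t_1}\right)\right|$ uniformly over $t_1+1\leq s\leq t_2$. The one point of genuine divergence is how this drift is bounded. You use the exact convex-combination identity $\varphi\left(a\mid\vec a^{t'}\right)=\frac{t_1}{t'}\varphi\left(a\mid\vec a^{t_1}\right)+\frac{t'-t_1}{t'}q$ with $q\in[0,1]$, so the drift equals $\frac{t'-t_1}{t'}\left(q-\varphi\left(a\mid\vec a^{t_1}\right)\right)$ and is at most $\frac{t'-t_1}{t_1}$ in modulus. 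The paper instead separates the two empirical averages by the triangle inequality, bounds the resulting pieces by $\frac{s-1-t_1}{s-1}\varphi\left(a\mid\vec a^{t_1}\right)$ and $\frac{s-1-t_1}{s-1}$ separately, and thereby picks up the factor $2$ that appears in the statement of the lemma. Your version is a small but genuine sharpening, giving the bound without the $2$; the constant is immaterial to the block arguments of Section \ref{sec:liminf}, which use only the order of magnitude $\frac{t_2-t_1}{t_1}$, but the sharper form is cleaner and costs nothing.
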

	
	\begin{proof}
		By the definition of $W_{\gamma}$, we have
		\begin{align}
			W_{\gamma}	&= \frac{1}{t_2-t_1}\sum_{s={t_1+1}}^{t_2}\left(1-\g\varphi\left(a_s\middle\vert \vec a^{s-1}\right)\right)u(a_s)
			\notag \\
			&=
			\frac{1 }{t_2-t_1}\sum_{s={t_1+1}}^{t_2}	\sum_{a\in A} \1_{ a_s=a}u(a)
			- \frac{1 }{t_2-t_1}\sum_{s={t_1+1}}^{t_2}\g\varphi\left(a_s\middle\vert \vec a^{s-1}\right)u(a_s) \notag \\
			&=\sum_{a\in A} u(a)\left(\frac{1 }{t_2-t_1}\sum_{s={t_1+1}}^{t_2}\1_{ a_s=a}\right)
			- \frac{1 }{t_2-t_1}\sum_{s={t_1+1}}^{t_2}\g\varphi\left(a_s\middle\vert \vec a^{s-1}\right)u(a_s)	\notag
			\\
			&=	
			\sum_{a\in A}u(a)p(a)-\frac{1 }{t_2-t_1}\sum_{s={t_1+1}}^{t_2}\g\varphi\left(a_s\middle\vert \vec a^{s-1}\right)u(a_s). \label{eq: w}
		\end{align}
		Furthermore, for every $a\in A$ and $t_1+1\le s\le t_2$,
		\begin{align*}
			\LV\f\left(a\middle\vert\vec a^{s-1}\right)-\varphi\left(a\middle\vert \vec a^{t_1}\right)\RV
			&=
			\LV\frac{1}{s-1}\sum_{r=1}^{s-1}\1_{ a_r=a}- \frac{1}{t_1}\sum_{r=1}^{t_1}\1_{ a_r=a}\RV\\
			&=
			\LV \frac{1}{t_1} \frac{t_1}{s-1}\sum_{r=1}^{t_1}\1_{ a_r=a} - \frac{1}{ t_1} \sum_{r=1}^{t_1}\1_{ a_r=a}
			+\frac{1}{s-1}\sum_{r=t_1+1}^{s-1}\1_{ a_r=a}\RV\\
			&\leq
			\LV \frac{1}{t_1}\left(\frac{t_1}{s-1}-1\right) \sum_{r=1}^{t_1}\1_{ a_r=a} \RV
			+\frac{1}{s-1} \sum_{r=t_1+1}^{s-1}1\\
			&= \frac{s-1-t_1}{s-1}\varphi\left(a\middle\vert\vec a^{t_1}\right) + \frac{s-1-t_1}{s-1}\\
			&< 2\frac{t_2-t_1}{t_1},
		\end{align*}
		where in the last step we use that $t_1\leq s-1\leq t_2$ and $\varphi\left(a\middle\vert\vec a^{t_1}\right)\leq 1$.
		From (\ref{equ:approx_U2}) and (\ref{eq: w}) we now obtain
		\begin{align*}
			\big| W_{\gamma}-\widetilde{U}_{\gamma}\big| &= \gamma\LV\sum_{a\in A}p(a)\varphi\left(a\middle\vert\vec a^{t_1}\right)u(a) - \frac{1 }{t_2-t_1}\sum_{s={t_1+1}}^{t_2}\varphi\left(a_s\middle\vert\vec a^{s-1}\right)u(a_s)\RV\\
			&\leq \gamma\sum_{a\in A}\frac{u(a)}{t_2-t_1}\LV \varphi\left(a\middle\vert\vec a^{t_1}\right)\sum_{s=t_1+1}^{t_2}\mathbbm 1_{a_s=a} - \sum_{s=t_1+1}^{t_2}\mathbbm 1_{a_s=a}\varphi\left(a\middle\vert \vec a^{s-1}\right)\RV\\
			&\leq \gamma\sum_{a\in A}\frac{u(a)}{t_2-t_1}\sum_{s=t_1+1}^{t_2}\mathbbm 1_{a_s=a}\LV\varphi\left(a\middle\vert\vec a^{t_1}\right)-\varphi\left(a\middle\vert \vec a^{s-1}\right)\RV\\
                &<\gamma\sum_{a\in A}\frac{u(a)}{t_2-t_1}\sum_{s=t_1+1}^{t_2}\mathbbm 1_{a_s=a}2\frac{t_2-t_1}{t_1}\\
                &=2\frac{t_2-t_1}{t_1}\gamma\sum_{a\in A}u(a)p(a)\\
			&\leq 2\frac{t_2-t_1}{t_1}\gamma\sum_{a\in A}u(a).
		\end{align*}
		as required.
	\end{proof}

	\noindent With Lemma \ref{lem:w_U_approximation} we can now prove our first main result, namely that there is a stationary strategy such that the evolving history $\vec a$ satisfies $\underline V_{\gamma}=\underline V_{\gamma}\left(\vec a\right)=V^*_{\gamma}$.
	The idea of the proof is to suppose by contradiction that there is a non-stationary history $\vec a$ with $\underline V_{\gamma}\left(\vec a\right)\geq V^*_{\gamma} +c$ for some strictly positive constant $c$.
	This infinite history is split up into blocks that all satisfy the conditions of Lemma \ref{lem:w_U_approximation}, i.e., that are not too long.
	Denote by $\varphi^k(a)=\varphi\left(a\middle\vert \vec a^{t_k}\right)$ the frequency of $a$ at the beginning of the $k$-th block and for each block $k$ consider the number
	\begin{align*}
		x_k=\sum_{a\in A}\left(1-\varphi^k(a)\right)^2u(a).
	\end{align*}
	We show that these numbers would behave awkwardly if $\underline V_{\gamma}$ were bounded away from $V_{\gamma}^*$.
	Specifically, denote by $H_K$ the weighted average of $x_1,\ldots, x_K$, where the weight of $x_k$ equals the relative length of the $k$-th block within the first $K$-blocks.
	We then conclude that $\limsup_K H_K$ is bounded away from $\limsup_Kx_K$, which is impossible.

	\begin{theorem}\label{thm:liminf_stationary}
		It holds that $\underline V_{\gamma}=V^*_{\gamma}$.
	\end{theorem}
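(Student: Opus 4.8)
\medskip
\noindent\textbf{Proof plan.}
The inequality $V^*_\gamma\le\underline V_\gamma$ is immediate, since every stationary history is a history and $\underline V_\gamma$ is a supremum over all of $A^\infty$; so the content of the theorem is $\underline V_\gamma\le V^*_\gamma$, which I would establish by contradiction. Suppose there exist $\vec a\in A^\infty$ and $c>0$ with $\underline V_\gamma(\vec a)\ge V^*_\gamma+c$. Fix $\delta\in(0,1)$, to be taken small enough that $C\delta\le c/8$ where $C$ is the $(\gamma,u)$-dependent constant produced below, and an initial time $t_1\ge 1/\delta$; partition $\mathbb N\setminus\{0\}$ into consecutive blocks by $t_{k+1}=t_k+\lceil\delta t_k\rceil$, so that the block lengths $\ell_k:=t_{k+1}-t_k$ satisfy $\delta t_k\le\ell_k\le 2\delta t_k$ and $t_k\to\infty$. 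Abbreviate $\varphi^k(a)=\varphi(a\mid\vec a^{t_k})$, $p^k(a)=p(a;\vec a,t_k,t_{k+1})$, $W^k=W_\gamma(\vec a,t_k,t_{k+1})$, $\widetilde U^k=\widetilde U_\gamma(\vec a,t_k,t_{k+1})$, and $g_\gamma(x)=\sum_{a\in A}x_a(1-\gamma x_a)u(a)$, so that $V^*_\gamma=\max_{x\in\Delta(A)}g_\gamma(x)$ by (\ref{equ:max_problem_stationary}). By Lemma~\ref{lem:w_U_approximation}, $|W^k-\widetilde U^k|\le 2(\ell_k/t_k)\gamma\sum_{a}u(a)\le 4\delta\gamma\sum_{a}u(a)$.

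The engine of the proof is the bounded quadratic potential $\Psi\colon\Delta(A)\to[0,\infty)$,
\begin{align*}
\Psi(x)=\frac{1}{2\gamma}\sum_{a\in A}u(a)\bigl(1-\gamma x_a\bigr)^2 ,
\end{align*}
chosen because $\partial_a\Psi(x)=-(1-\gamma x_a)u(a)$ --- its gradient weights are exactly the stage-payoff weights --- and because it is quadratic with Hessian $\mathrm{diag}(\gamma u(a))$; for $\gamma=1$ it is half the quantity $x_k$ in the sketch above. The update rule $\varphi^{k+1}=\varphi^k+\eta_k(p^k-\varphi^k)$ with $\eta_k=\ell_k/t_{k+1}$, together with a second-order Taylor expansion of $\Psi$ (exact, since $\Psi$ is quadratic), yields
\begin{align*}
\sum_{a}(p^k_a-\varphi^k_a)(1-\gamma\varphi^k_a)u(a)
=-\frac{1}{\eta_k}\bigl(\Psi(\varphi^{k+1})-\Psi(\varphi^k)\bigr)+\frac{\gamma\eta_k}{2}\sum_{a}u(a)(p^k_a-\varphi^k_a)^2 ,
\end{align*}
where the final term is at most $\gamma\delta\sum_a u(a)$ because $|p^k_a-\varphi^k_a|\le1$ and $\eta_k\le2\delta$. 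Combining this with the exact identity $\widetilde U^k=g_\gamma(\varphi^k)+\sum_{a}(p^k_a-\varphi^k_a)(1-\gamma\varphi^k_a)u(a)$, with $g_\gamma(\varphi^k)\le V^*_\gamma$, with $\ell_k/\eta_k=t_{k+1}$, and with the bound on $|W^k-\widetilde U^k|$, gives the per-block estimate
\begin{align*}
\ell_k W^k\le\ell_k V^*_\gamma-t_{k+1}\bigl(\Psi(\varphi^{k+1})-\Psi(\varphi^k)\bigr)+C\delta\,\ell_k
\end{align*}
for a constant $C$ depending only on $\gamma$ and $u$.

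Now sum this over $k=1,\dots,K$. The left-hand side equals $t_{K+1}U^{t_{K+1}}_\gamma(\vec a)-S_0$, where $S_0$ is the finite total payoff accrued before time $t_1$; since $\underline V_\gamma(\vec a)\ge V^*_\gamma+c$ we have $U^{t_{K+1}}_\gamma(\vec a)\ge V^*_\gamma+c/2$ for all large $K$, so the left-hand side is at least $(V^*_\gamma+c/2)t_{K+1}-S_0$. On the right-hand side, summation by parts converts $\sum_{k=1}^K t_{k+1}\bigl(\Psi(\varphi^{k+1})-\Psi(\varphi^k)\bigr)$ into $t_{K+1}\psi_{K+1}-\sum_{k=1}^K\ell_k\psi_k-t_1\psi_1$, where $\psi_k:=\Psi(\varphi^k)$ lies in the fixed bounded interval $[0,\tfrac1{2\gamma}\sum_a u(a)]$. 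With $H_K:=\bigl(\sum_{k=1}^K\ell_k\psi_k\bigr)\big/(t_{K+1}-t_1)$ the block-length-weighted average of $\psi_1,\dots,\psi_K$, comparing the two sides, solving for $\psi_{K+1}$, and using $V^*_\gamma\ge0$ and $\psi_k\ge0$ (which render all $t_1/t_{K+1}$-order terms harmless) together with $C\delta\le c/8$, one obtains $\psi_{K+1}\le H_K-c/4$ for all sufficiently large $K$. This is the contradiction: the weights $\ell_k$ defining $H_K$ are nonnegative with divergent total $\sum_{k\le K}\ell_k=t_{K+1}-t_1$, so $\limsup_K H_K\le\limsup_K\psi_K$; but $\psi_{K+1}\le H_K-c/4$ for large $K$ then forces $\limsup_K\psi_K=\limsup_K\psi_{K+1}\le\limsup_K H_K-c/4\le\limsup_K\psi_K-c/4$, which is impossible since $c>0$ and $\limsup_K\psi_K$ is finite. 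Hence no such $\vec a$ exists, and $\underline V_\gamma=V^*_\gamma$.

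The only step that is not routine bookkeeping is the choice of $\Psi$: one wants a bounded potential whose gradient weights coincide with the stage-payoff weights $-(1-\gamma\varphi_a)u(a)$, so that the ``overshoot'' $\widetilde U^k-V^*_\gamma$ of a short block over the stationary optimum becomes an exact decrement of $\Psi$ along the frequency path, up to a curvature term that is quadratically small in the block length; once that is in hand, Lemma~\ref{lem:w_U_approximation} and summation by parts do the rest. The two error sources --- the $\widetilde U^k$-versus-$W^k$ gap and the curvature term --- are each $O(\delta)$ per unit of time, which is precisely why the blocks are kept short ($\ell_k\asymp\delta t_k$) and why $\delta$ is fixed small at the very beginning.
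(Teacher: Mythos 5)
Your proof is correct. Both your argument and the paper's revolve around the same central object --- a quadratic potential in the frequency vector, built so that its gradient matches the stage-payoff weights $-(1-\gamma\varphi_a)u(a)$ --- and both cut the history into blocks of length proportional to elapsed time so that Lemma~\ref{lem:w_U_approximation} controls the block-to-stage error. Your $\Psi(\varphi^k)$ is exactly $\tfrac{1}{2\gamma}\|\mathbf 1 - \gamma\varphi^k\|^2$ in the paper's notation, so the potential is the same up to scaling.

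Where you differ, to your advantage, is in the bookkeeping. The paper defines $H(K,\alpha)$ with geometric Abel weights $\alpha(1+\alpha)^{-(K-k+1)}$, gets a one-step recursion $H(K)=(1-\beta)H(K-1)+\beta\|\mathbf 1-\gamma\varphi^K\|^2$, and expands $\|\mathbf 1-\gamma\varphi^k\|^2$ by an \emph{inexact} linearization of the frequency update (introducing the correction $-1/(1+\alpha)^k$) which then has to be carried through several error-term estimates. You instead use the exact second-order Taylor identity for the quadratic $\Psi$ along the exact convex-combination update $\varphi^{k+1}=\varphi^k+\eta_k(p^k-\varphi^k)$, so the only errors are the Lemma~\ref{lem:w_U_approximation} gap and the curvature term, both manifestly $O(\delta)$; you then weight $H_K$ by the natural block lengths $\ell_k$ and telescope by summation by parts. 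This makes the contradiction $\limsup\psi_K\le\limsup H_K-c/4\le\limsup\psi_K-c/4$ fall out of a one-line Toeplitz/Cesàro comparison rather than a recursion. The two proofs are the same in spirit, but yours is tighter in execution and easier to audit; you might just add the one-sentence check (which you implicitly use) that $\psi_k\ge0$ and $\ell_k/(t_{K+1}-t_1)\to0$ for fixed $k$ are what justify $\limsup_K H_K\le\limsup_K\psi_K$.
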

	
	\begin{proof}
		Assume, by contradiction, that $\underline V_{\gamma}>V^*_{\gamma}$.
		Then there is $\vec a\in A^{\infty}$ such that $\underline V\left(\vec a\right)= 4c+V_{\gamma}^*$ for some constant $c>0$.
		Thus,
		\begin{align}\label{equ:c}
			U^T_{\gamma}\left(\vec a\right)\geq V_{\gamma}^* +3c
		\end{align}
		for all sufficiently large $T$.
		Let $T_1$ be such that (\ref{equ:c}) holds for all $T\geq T_1$.
		
		Let $\a\in (0,1)$.
		We divide the set of periods into blocks.
		To that end let $t_0=0$, and for each integer $k\geq 1$ the let $t_k$ be defined by $ t_k=  \left\lceil(1+\a)^{k-1}T_1\right\rceil $, which is the smallest integer larger than or equal to $(1+\a)^{k-1}T_1$.
		The \emph{$k$-th block} starts at $t_{k-1}+1$ and ends at $ t_k$.
		For each block $k$, denote the average payoff, the frequency, and the approximation by $W^k_{\gamma}=W_\gamma\left(\vec a, t_{k-1},t_k\right)$, $p^k(a)=p\left(a;\vec a,t_{k-1},t_k\right)$, and $\widetilde{U}_{\gamma}^k=\widetilde U_{\gamma}\left(\vec a,t_{k-1},t_k\right)$, respectively, as in Equations (\ref{equ:W}), (\ref{equ:p}), and (\ref{equ:approx_U2}).
		In particular, $W^1_{\gamma}=U^{T_1}_\g$. 
		%
		%
		By construction, $\frac{t_{k+1}-t_{k}}{t_{k}}\leq \alpha + \frac{1}{\left(1+\alpha\right)^{k-1}T_1}$ for all $k\geq 1$.
		Thus, by Lemma \ref{lem:w_U_approximation},
		\begin{align*}
			\left\vert W^k_{\gamma}- \widetilde{U}^k_{\gamma}\right\vert\le 2\left(\alpha + \frac{1}{\left(1+\alpha\right)^{k-2}T_1}\right) \g \sum_{a\in A}u(a),
		\end{align*}
		for every $k\geq2$. Denote $\beta=\frac{\alpha}{1+\alpha}$ and also $d^k(a)=p^k(a)-\varphi^k(a)$ for each $k\in\mathbb{N}\setminus\{0\}$ and $a\in A$.
		Recall from (\ref{equ:c}) and the definition of $T_1$ that
		\begin{align}
			V_{\gamma}^*+ 3c\leq U^{t_K}_{\gamma}\left(\vec a\right)
		\end{align}
		for all $K\geq 1$.
		In particular,
        \small
		\begin{align}
			V_{\gamma}^*+ 3c &\leq U^{t_K}_{\gamma}\left(\vec a\right)\notag\\
			&= \frac{t_1}{t_K}W^1_{\gamma} + \sum_{k=2}^{K}\frac{t_{k}-t_{k-1}}{t_K}W^k_{\gamma}\notag\\
                &=\frac{T_1}{\left\lceil(1+
                \alpha)^{K-1}T_1\right\rceil}W^1_\gamma+\sum_{k=2}^{K}\frac{\left\lceil(1+\alpha)^{k-1}T_1\right\rceil-\left\lceil(1+\alpha)^{k-2}T_1\right\rceil}{\left\lceil(1+
                \alpha)^{K-1}T_1\right\rceil}W^k_\gamma\notag\\
                &\leq\frac{T_1}{(1+
                \alpha)^{K-1}T_1}W^1_\g+\sum_{k=2}^{K}\frac{\left\lceil(1+\alpha)^{k-2}T_1+\alpha(1+\alpha)^{k-2}T_1\right\rceil-\left\lceil(1+\alpha)^{k-2}T_1\right\rceil}{\left\lceil(1+
                \alpha)^{K-1}T_1\right\rceil}W^k_\gamma\notag\\
                &\leq \frac{1}{\left(1+\alpha\right)^{K-1}}W^1_{\gamma}+\sum_{k=2}^{K}\frac{\left\lceil(1+\alpha)^{k-2}T_1\right\rceil+\left\lceil\alpha(1+\alpha)^{k-2}T_1\right\rceil-\left\lceil(1+\alpha)^{k-2}T_1\right\rceil}{\left\lceil(1+
                \alpha)^{K-1}T_1\right\rceil}W^k_\gamma\notag\\
                &=\frac{1}{(1+
                \alpha)^{K-1}}W^1_\g+\sum_{k=2}^{K}\frac{\left\lceil\alpha(1+\alpha)^{k-2}T_1\right\rceil}{\left\lceil(1+
                \alpha)^{K-1}T_1\right\rceil}W^k_\gamma\notag\\
                &\leq\frac{1}{(1+
                \alpha)^{K-1}}W^1_\g+\sum_{k=2}^{K}\frac{\left\lceil\alpha(1+\alpha)^{k-2}\right\rceil\cdot T_1}{(1+
                \alpha)^{K-1}T_1}W^k_\gamma\notag\\
                &\leq\frac{1}{(1+
                \alpha)^{K-1}}W^1_\g+\sum_{k=2}^{K}\frac{\alpha(1+\alpha)^{k-2}+1}{(1+
                \alpha)^{K-1}}W^k_\gamma\notag\\
                &\leq\frac{1}{(1+
                \alpha)^{K-1}}W^1_\g+\sum_{k=2}^{K}\frac{\alpha(1+\alpha)^{k-2}+1}{(1+
                \alpha)^{K-1}}\left(\tilde U^k_{\gamma} + 2\left(\alpha + \frac{1}{\left(1+\alpha\right)^{k-2}T_1}\right)\gamma\sum_{a\in A} u(a)\right)\notag\\
			&\leq\frac{1}{(1+
                \alpha)^{K-1}}W^1_\g+\sum_{k=2}^K\frac{\left(1+\alpha\right)^{k-2}\alpha+1}{\left(1+\alpha\right)^{K-1}}
			\widetilde{U}^k_{\gamma} \notag\\
            & \qquad\qquad\qquad +2\g\sum_{a\in A}u(a)\sum_{k=2}^K\frac{(1+\alpha)^{k-2}\alpha+1}{(1+\alpha)^{K-1}}\left(\alpha+\frac{1}{(1+\alpha)^{k-2}}\right)\label{equ:V_gamma_3c},
	   \end{align}
        \normalsize
        for every $K\geq2$.
		\ignore{where the last inequality uses that
		\begin{align*}
			\sum_{k=1}^{K-1}\frac{\left(1+\alpha\right)^{k-1}\alpha}{\left(1+\alpha\right)^K}2\left(\alpha + \frac{1}{\left(1+\alpha\right)^kT_1}\right)\gamma\sum_{a\in A} u(a)
			&\leq \frac{ 2\left(\alpha + \frac{1}{T_1}\right)\gamma\sum_{a\in A} u(a)}{\left(1+\alpha\right)^K} \alpha\sum_{k=1}^{K-1}\left(1+\alpha\right)^{k-1}\\
			&\leq \frac{ 2\left(\alpha + \frac{2}{T_1}\right)\gamma\sum_{a\in A} u(a)}{\left(1+\alpha\right)^K}.
		\end{align*}
  }
		Let $\bar{\alpha}=c\left(4\gamma\sum_{a\in A} u(a)\right)^{-1}$. Then, for every $0<\a<\bar{\alpha}$ there is $K\left(\alpha\right)$ such that for all $K\geq K\left(\alpha\right)$ it holds that
		\begin{align*}
			 2\g\sum_{a\in A}u(a)\sum_{k=2}^K\frac{(1+\alpha)^{k-2}\alpha+1}{(1+\alpha)^{K-1}}\left(\alpha+\frac{1}{(1+\alpha)^{k-2}}\right)<c.
		\end{align*}
           This is so because
          $$
                \sum_{k=2}^K\frac{(1+\alpha)^{k-2}\alpha+1}{(1+\alpha)^{K-1}}\left(\alpha+\frac{1}{(1+\alpha)^{k-2}}\right)\longrightarrow[K\to\infty]{}\alpha.
           $$
		By (\ref{equ:V_gamma_3c})  we, hence, have
		\begin{align}\label{eq: c bound}
			V_{\gamma}^*+ 2c \le W^1_\g+\sum_{k=2}^K\frac{\left(1+\alpha\right)^{k-2}\alpha+1}{\left(1+\alpha\right)^{K-1}}
			\widetilde{U}^k_{\gamma}.
		\end{align}
		For every $x,y\in \R^A$, denote
		\begin{align}\label{equ:inner_product}
			\langle x,y\rangle &:= \sum_{a\in A}x_a y_a u(a)  & \rm{and} & &  \| x\|^2 &:= \langle x,x\rangle.
		\end{align}
		For each $k$ and $a\in A$ define $d^k(a)=p^k(a)-\varphi^k(a)$, and define $\beta=\frac{\alpha}{1+\alpha}$.
		Since $\varphi^k(\cdot)$ is a convex combination of $\varphi^{k-1}\left(\cdot\right)$ and $p^{k-1}\left(\cdot\right)$, with weights $\frac{t_{k-1}}{t_k}$ and $\frac{t_k-t_{k-1}}{t_k}$, we have
		\begin{align}
			\varphi^k(a) -\varphi^{k-1}(a) &= \frac{t_{k-1}}{t_k}	\varphi^{k-1}(a) + \frac{t_k-t_{k-1}}{t_k}p^{k-1}(a)-\varphi^{k-1}(a)\notag\\
			&= \frac{t_k-t_{k-1}}{t_k}\left(p^{k-1}(a)-\varphi^{k-1}(a)\right)\notag\\
			&\geq \frac{\left(1+\alpha\right)^k - \left(1+\alpha\right)^{k-1}-1}{\left(1+\alpha\right)^k}d^k(a)\notag\\
			&\geq \beta d^k(a) - \frac{1}{\left(1+\alpha\right)^k}\label{equ:recursive_phi_beta}.
		\end{align}
		
		\noindent Let $p^k= \left(p^k(a)\right)_a $, $ \varphi^{k}= \left(\varphi^k(a)\right)_a$ and $d^k= \left(d^k(a)\right)_a$, so that $p^k,\varphi^k,d^k\in \R^A$.
        Moreover, denote by $\mathbf 1\in\mathbb R^A$ the vector with 1 in each entry.
		By (\ref{equ:max_problem_stationary})
		\begin{align*}
			V^*_{\gamma} = \sup_{x\in\Delta(A)}\left\langle x,\mathbf 1-\gamma x\right\rangle \geq \left\langle\varphi^{t_k},  \mathbf 1-{\gamma}\varphi^{t_k}\right\rangle
		\end{align*}
		for all $k$.
		Since $\widetilde{U}_{\gamma}^k = \left\langle p^k,  \mathbf 1-{\gamma}\varphi^{k}\right\rangle$, we obtain from the definition of $d^k$ and Inequality (\ref{eq: c bound}) that for all  $0<\a<\bar{\alpha}$ and $K\geq K\left(\alpha\right)$
		\begin{align*}
			V_{\gamma}^* + 2c &\leq  \sum_{k=1}^K\frac{\left(1+\alpha\right)^{k-1}\alpha}{\left(1+\alpha\right)^K}\langle p^k,  \mathbf 1-{\gamma}\varphi^{k}\rangle\\
			&= \sum_{k=1}^K\frac{\left(1+\alpha\right)^{k-1}\alpha}{\left(1+\alpha\right)^K}
			\langle\varphi^{t_k},  \mathbf 1-{\gamma}\varphi^{k}\rangle  + 	\sum_{k=1}^K\frac{\left(1+\alpha\right)^{k-1}\alpha}{\left(1+\alpha\right)^K}
			\langle d^k ,  \mathbf 1-{\gamma}\varphi^{k}\rangle\\
			&\le  V_{\gamma}^* + \sum_{k=1}^K\frac{\left(1+\alpha\right)^{k-1}\alpha}{\left(1+\alpha\right)^K}
			\langle d^k ,  \mathbf 1-{\gamma}\varphi^{k}\rangle.
		\end{align*}
		Hence,
		\begin{align}
			2c  & \leq \sum_{k=1}^K\frac{\left(1+\alpha\right)^{k-1}\alpha}{\left(1+\alpha\right)^K}
			\langle d^k ,  \mathbf 1-{\gamma}\varphi^{k}\rangle = \sum_{k=1}^K\frac{\alpha}{\left(1+\alpha\right)^{K-k+1}}\langle d^k, \mathbf 1-{\gamma}\varphi^{k}\rangle\notag \\& = \frac{1}{1+\alpha}\sum_{k=1}^K\frac{\a}{\left(1+\alpha\right)^{K-k}}\langle d^k , \mathbf 1-{\gamma}\varphi^{k}\rangle. \label{equ:inequality_c}
		\end{align}
		For any $K$ and $\alpha$ define
		\begin{align*}
			H(K,\alpha) = \sum_{k=1}^K\frac{\a}{\left(1+\alpha\right)^{K-k+1}}\big\| \mathbf 1-{\gamma}\varphi^{k}\big\|^2 .
		\end{align*}
		Note that $\sum_{k=1}^K\frac{\a}{\left(1+\alpha\right)^{K-k+1}}\leq 1$, so that $H(K,\alpha)$ is bounded by some weighted average of $\left\| 1-{\gamma}\varphi^{k}\right\|^2 $ where $k=1,2,..., K$.
		Furthermore,
		\begin{align*}
			H(K,\alpha)& = \sum_{k=1}^{K-1}\frac{\a}{\left(1+\alpha\right)^{K-k+1}}\big\| \mathbf 1-{\gamma}\varphi^{k}\big\|^2 +\frac{\a}{\left(1+\alpha\right)}\big\| \mathbf 1-{\gamma}\varphi^{K}\big\|^2 \notag\\
			& = \left(1-\beta\right)H(K-1,\alpha)+\beta \left\| \mathbf 1-{\gamma}\varphi^{K}\right\|^2.
		\end{align*}
		Therefore,
		\begin{align}\label{eq: compare H}
			H(K,\alpha)-H(K-1,\alpha)=  -\beta H(K-1,\alpha)+\beta\left\| \mathbf 1-{\gamma}\varphi^{K}\right\|^2.
		\end{align}
		Define
		\begin{align*}
			\varepsilon_{K,\alpha} =  \frac{\a}{\left(1+\alpha\right)^K} \left\| \mathbf 1-{\gamma}\varphi^{1}\right\|^2.
		\end{align*}
		Then
        \small
		\begin{align*}
			H(K,\alpha)
            &= \varepsilon_{K,\alpha}+ \sum_{k=2}^K\frac{\a}{\left(1+\alpha\right)^{K-k+1}} \big\| \mathbf 1-{\gamma}\varphi^{k}\big\|^2 \notag \\
			&\leq 	\varepsilon_{K,\alpha}+ \sum_{k=2}^K\frac{\a}{\left(1+\alpha\right)^{K-k+1}}
			\big\|\mathbf 1- \g\varphi^{{k-1}} - \g\beta d^{k-1}\big\|^2 +\sum_{k=2}^K\frac{\a}{\left(1+\alpha\right)^{K-k+1}} \frac{1}{\left(1+\alpha\right)^k} \notag \\
			&= \varepsilon_{K,\alpha}+  \sum_{k=1}^{K-1}\frac{\a}{\left(1+\alpha\right)^{K-k}}
			\big\|\mathbf 1- \g\varphi^{{k}} - \g\beta d^{k}\big\|^2 +\frac{(K-2)\a}{\left(1+\alpha\right)^{K+1}}  \notag \\
			&=	\varepsilon_{K,\alpha}+\sum_{k=1}^{K-1}\frac{\a}{\left(1+\alpha\right)^{K-k}}\left(\big\|\mathbf 1- \g\varphi^{{k}}\big\|^2 - 2\g\beta \left\langle d^{k},  \mathbf 1-{\gamma}\varphi^{{k}}\right\rangle + \g^2\beta^2\big\|d^k\big\|^2\right) 
            +\frac{(K-2)\a}{\left(1+\alpha\right)^{K+1}}  \notag \\
			&= \varepsilon_{K,\alpha} +H(K-1,\a)  - 2\g\beta  \sum_{k=1}^{K-1}\frac{\a}{\left(1+\alpha\right)^{K-k}}
			\left\langle d^{k},  \mathbf 1-{\gamma}\varphi^{{k}}\right\rangle + \g^2\beta^2\sum_{k=1}^{K-1}\frac{\a}{\left(1+\alpha\right)^{K-k}}\big\|d^k\big\|^2  \notag \\
			& \qquad\qquad\qquad+\frac{(K-2)\a}{\left(1+\alpha\right)^{K+1}}. 
		\end{align*}
        \normalsize
		Thus,
		\begin{align*}
			H(K,\alpha)	- H(K-1,\a) &\leq \varepsilon_{K,\alpha}  - 2\beta \g \sum_{k=1}^{K-1}\frac{\a}{(1+\alpha)^{K-k}}
			\langle d^{k},  \mathbf 1-{\gamma}\varphi^{{k}}\rangle  \\
            & \qquad\qquad\qquad+ \beta^2\g^2\sum_{k=1}^{K-1}\frac{\a}{\left(1+\alpha\right)^{K-k}}\|d^k\|^2 +\frac{(K-2)\a}{\left(1+\alpha\right)^{K+1}}
		\end{align*}	
        and together with (\ref{equ:inequality_c}) and (\ref{eq: compare H}), we get
		\begin{align*}	
			\beta H(K-1,\alpha)-\beta \left\| \mathbf 1-{\gamma}\varphi^{t_K}\right\|^2
			&\geq -\varepsilon_{K,\alpha}  + 2\beta \g \sum_{k=1}^{K-1}\frac{\a}{\left(1+\alpha\right)^{K-k}}
			\left\langle d^{k},  \mathbf 1-{\gamma}\varphi^{{k}}\right\rangle \\
			& \qquad\qquad\qquad - \beta^2\g^2\sum_{k=1}^{K-1}\frac{\a}{\left(1+\alpha\right)^{K-k}}\big\|d^k\big\|^2 - \frac{(K-2)\a}{\left(1+\alpha\right)^{K+1}}\\
			& > -\varepsilon_{K,\alpha}  +4\beta\g(1+\a)c \\
			& \qquad\qquad\qquad - \beta^2\g^2\sum_{k=1}^{K-1}\frac{\a}{\left(1+\alpha\right)^{K-k}}\big\|d^k\big\|^2 - \frac{(K-2)\a}{\left(1+\alpha\right)^{K+1}},
		\end{align*}
		or equivalently,
        \small
		\begin{align*}
			H(K-1,\alpha) \! -\! \left\| \mathbf 1-{\gamma}\varphi^{K}\right\|^2 &> -\frac{\varepsilon_{K,\alpha}}{\beta}  +4\g(1+\a)c - \beta\g^2\sum_{k=1}^{K-1}\frac{\a}{\left(1+\alpha\right)^{K-k}}\big\|d^k\big\|^2 - \frac{(K-2)\a}{\beta\left(1+\alpha\right)^{K+1}}\\
			&= 4\g(1+\a)c -\frac{\alpha}{1+\alpha}\g^2\sum_{k=1}^{K-1}\frac{\a}{\left(1+\alpha\right)^{K-k}}\big\|d^k\big\|^2 - \left(\frac{\varepsilon_{K,\alpha}}{\beta}  +  \frac{(K-2)}{\left(1+\alpha\right)^{K}}\right).
		\end{align*}
		\normalsize
        Since $\left\|d^k\right\|^2$ are all uniformly bounded, the sum on the right-hand side is bounded.
		Thus,
		\begin{align*}
			\frac{\alpha}{1+\alpha}\g^2\sum_{k=1}^{K-1}\frac{\a}{\left(1+\alpha\right)^{K-k}}\big\|d^k\big\|^2
			&\leq  \frac{\alpha}{1+\alpha}\g^2\sup_{k\in\mathbb N}\left\|d^k\right\|^2 \sum_{k=0}^{\infty}\frac{\alpha}{\left(1+\alpha\right)^k}\\
			&= \frac{\alpha}{1+\alpha}\g^2\sup_{k\in\mathbb N}\big\|d^k\big\|^2\\
			&< c\g
		\end{align*}
		for all $K$ and all sufficiently small $\alpha>0$.
		Moreover, there are $\alpha^*$ and $K^*\geq K\left(\alpha^*\right)$ such that for all $K\geq K^*$
		\begin{align*}
			\frac{\varepsilon_{K,\alpha^*}}{\beta^*} + \frac{(K-2)}{\left(1+\alpha^*\right)^{K}}
            = \frac{1}{\left(1+\alpha^*\right)^{K-1}} \left\| \mathbf 1-{\gamma}\varphi^{1}\right\|^2 + \frac{(K-2)}{\left(1+\alpha^*\right)^{K}}< c\g,
		\end{align*}
		where $\beta^*=\frac{\alpha^*}{1+\alpha^*}$.
		This implies that
		\begin{align*}
			H\left(K-1,\alpha\right)-
			\left\| \mathbf 1-{\gamma}\varphi^{K}\right\|^2 >  4\gamma\left(1+\alpha\right) c-\gamma c -\gamma c >2\gamma c
		\end{align*}
		for all $\alpha\leq\alpha^*$ and all $K\geq K^*$.
		Consequently, due to (\ref{eq: compare H}), for $\alpha\leq\alpha^*$ we have,
		\begin{align*}
			0 &=\limsup_{K\rightarrow\infty} (  H\left(K,\alpha\right) - H\left(K-1,\alpha\right))\\
            &= \limsup_{K\rightarrow\infty} \beta( \left\| \mathbf 1-{\gamma}\varphi^{K}\right\|^2-H(K-1,\alpha))  \leq -2\beta \g c<0.
		\end{align*}
		We reached a contradiction.
	\end{proof}

	\subsection{Increasing fatigue}\label{subsec:fatique}
	
	As we have shown that $\underline V_{\gamma}$ can be achieved using a stationary strategy, we shall now have a closer look into how the frequencies of optimal histories change as $\gamma$ varies.
	Intuitively, a larger $\gamma$ forces good actions to be used less often so that their stage payoff does not wear down too much.
	The following lemma makes this formal.
	As $\gamma$ increases, the aggregated weight on the top actions is decreasing.
	
	\begin{lemma}\label{pro:fatique}
		Let $A=\left\{a_1,\ldots, a_m\right\}$ with $u\left(a_1\right)\geq u\left(a_2\right)\geq \cdots \geq u\left(a_m\right)$.
		For each $\gamma\in\left(0,1\right]$ let $x_{\gamma}\in\Delta(A)$ be the (unique) solution to the maximization problem in (\ref{equ:max_problem_stationary}).
		Then
		\begin{align}\label{equ:gamma_fosd}
			\sum_{i=1}^k\frac{d}{d\gamma}x_{\gamma}\left(a_i\right)\leq 0
		\end{align}
		for all $k=1,\ldots,m$.
	\end{lemma}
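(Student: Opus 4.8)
The plan is to reduce the statement to the explicit description of $x_\gamma$ furnished by Proposition~\ref{pro:stat} and then differentiate in $\gamma$ on the pieces where the support of $x_\gamma$ is constant. Write $A^*_\gamma=\{a\in A:x_\gamma(a)>0\}$ and let $\lambda_\gamma$ be the common value of $(1-2\gamma x_\gamma(a))u(a)$ over $a\in A^*_\gamma$, so that the KKT conditions behind Proposition~\ref{pro:stat} also give $u(a)\le\lambda_\gamma$ for $a\notin A^*_\gamma$. First I would note that $A^*_\gamma$ is an \emph{upper set} for the payoff order: if $u(a_i)\ge u(a_j)$ and $a_j\in A^*_\gamma$, then $a_i\in A^*_\gamma$. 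Indeed $x_\gamma(a_j)>0$ and $\gamma>0$ force $\lambda_\gamma=(1-2\gamma x_\gamma(a_j))u(a_j)<u(a_j)$, so if we had $a_i\notin A^*_\gamma$ then $u(a_i)\le\lambda_\gamma<u(a_j)\le u(a_i)$, a contradiction. Hence $A^*_\gamma=\{a_1,\ldots,a_{n}\}$ with $n=n(\gamma)\in\{1,\ldots,m\}$.

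Next I would show that $(0,1]$ splits into finitely many intervals on each of which $n(\gamma)$ is constant. Fix a top set $B=\{a_1,\ldots,a_n\}$ and let $S_B=\sum_{b\in B}1/u(b)$, $\lambda_\gamma(B)=(n-2\gamma)/S_B$. By uniqueness of the maximizer, $A^*_\gamma=B$ holds exactly when the vector given by \eqref{equ:frequency_stat} for this $B$ is the optimum, i.e.\ when its coordinates are positive on $B$ and $u(a)\le\lambda_\gamma(B)$ for $a\notin B$. Each of these is equivalent to a one–sided affine inequality in $\gamma$ — e.g.\ $x_\gamma(a_i)>0$ amounts (after multiplying by $2\gamma>0$) to $\gamma>\tfrac12(n-S_B u(a_i))$, and $u(a)\le\lambda_\gamma(B)$ to $\gamma\le\tfrac12(n-S_B u(a))$ — so $\{\gamma:A^*_\gamma=B\}$ is an interval. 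Since there are at most $m$ such sets $B$, these intervals partition $(0,1]$ with only finitely many endpoints, and on the interior of any one of them $x_\gamma$ agrees with a fixed smooth (rational) function of $\gamma$; this is where \eqref{equ:gamma_fosd} is to be verified.

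So fix $\gamma$ in the interior of an interval with $A^*_\gamma=\{a_1,\ldots,a_n\}$ and write $S_n=\sum_{j=1}^n 1/u(a_j)$. By \eqref{equ:frequency_stat},
\begin{align*}
	x_\gamma(a_i)=\frac{1}{2\gamma}\left(1-\frac{n-2\gamma}{S_n\,u(a_i)}\right)\ \text{ for }i\le n,\qquad x_\gamma(a_i)=0\ \text{ for }i>n,
\end{align*}
so, writing $\ell=\min\{k,n\}$,
\begin{align*}
	\sum_{i=1}^{k}\frac{d}{d\gamma}x_\gamma(a_i)=\sum_{i=1}^{\ell}\frac{1}{2\gamma^{2}}\left(\frac{n}{S_n\,u(a_i)}-1\right)=\frac{1}{2\gamma^{2}}\left(\frac{n}{S_n}\sum_{i=1}^{\ell}\frac{1}{u(a_i)}-\ell\right).
\end{align*}
Since $u(a_1)\ge\cdots\ge u(a_n)>0$, the reciprocals $1/u(a_1)\le\cdots\le1/u(a_n)$ are nondecreasing, so the mean of the $\ell$ smallest of them is at most the mean of all $n$, i.e.\ $\tfrac1\ell\sum_{i=1}^{\ell}1/u(a_i)\le\tfrac1n S_n$, equivalently $\tfrac{n}{S_n}\sum_{i=1}^{\ell}1/u(a_i)\le\ell$. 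Hence the bracket is nonpositive, which is \eqref{equ:gamma_fosd}; for $k\ge n$ the bracket vanishes, consistent with $\sum_{a\in A}x_\gamma(a)\equiv1$.

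The main obstacle is the bookkeeping of the second paragraph — confirming that the support changes only finitely often and that $x_\gamma$ is genuinely differentiable off a finite set of $\gamma$ (continuity of $x_\gamma$ in $\gamma$, from the maximum theorem and uniqueness, together with the interval structure above). Once we are inside a single support–constant interval, the computation is elementary, and the whole content of the lemma is the rearrangement fact that the $\ell$ smallest among $n$ positive numbers have below–average mean, applied to $1/u(a_1),\ldots,1/u(a_n)$.
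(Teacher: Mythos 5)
Your proof is correct, and it takes a genuinely more direct route than the paper's. Both arguments begin with the same observation that the support $A^*_\gamma$ is a top set in the payoff order (your KKT derivation of $u(a)\le\lambda_\gamma<u(a_j)$ is a clean way to say what the paper gestures at with ``immediately clear''). From there the approaches diverge. The paper keeps the first-order identity $u(a_i)(1-2\gamma x_\gamma(a_i))=\mu$ implicit, averages over $A^*$ to express everything through $\sum_i x_\gamma(a_i)u(a_i)$, and then differentiates to arrive at a recursion (its equation (\ref{equ:gamma_derivative})) in which the unknown derivatives appear on both sides; it must then split into two cases according to the sign of $\sum_i\frac{d}{d\gamma}x_\gamma(a_i)u(a_i)$, using a monotonicity-of-derivatives argument in one case and a direct averaging estimate in the other. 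You instead differentiate the explicit closed form from Proposition~\ref{pro:stat}, obtaining $\frac{d}{d\gamma}x_\gamma(a_i)=\frac{1}{2\gamma^2}\bigl(\frac{n}{S_n u(a_i)}-1\bigr)$ outright, and then the partial sums are controlled by a single rearrangement-type inequality: the mean of the $\ell$ smallest of $1/u(a_1),\ldots,1/u(a_n)$ is at most the overall mean $S_n/n$. This removes the case analysis entirely and makes the mechanism transparent — the whole lemma is the statement that below-average reciprocals sit at the top of the payoff order. You are also more careful than the paper on a point the paper silently assumes: that $x_\gamma$ is differentiable in $\gamma$. Your interval decomposition of $(0,1]$ by support (each condition $x_\gamma(a_i)>0$ or $u(a)\le\lambda_\gamma(B)$ is affine in $\gamma$, and there are only $m$ candidate top sets $B$) makes this explicit, whereas the paper implicitly treats $k^*$ as locally constant without saying so. The one thing worth saying precisely, since you flag it yourself, is that at the finitely many breakpoints one should interpret \eqref{equ:gamma_fosd} for one-sided derivatives (which both exist and are nonpositive by your computation on either side), or simply note that $\gamma\mapsto\sum_{i\le k}x_\gamma(a_i)$ is continuous and piecewise $C^1$ with nonpositive derivative on each piece, hence nonincreasing — which is the substance of the corollary that follows.
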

	
	\begin{proof}
		First, observe that if $x_{\gamma}\left(a_i\right)=0$ for some $i$, $x_{\gamma}\left(a_j\right)=0$ for all $j\geq i$.
		Indeed, if $u\left(a_i\right)>u\left(a_{j}\right)$, this is immediately clear.
		If $u\left(a_i\right)=u\left(a_{j}\right)$, then let $y_{\gamma}\left(a_i\right)=x_{\gamma}\left(a_j\right)$, $y_{\gamma}\left(a_j\right)=x_{\gamma}\left(a_i\right)$ and $y_{\gamma}\left(a\right)=x_{\gamma}\left(a\right)$ for all $a\neq a_i,a_j$.
		Then $y_{\gamma}$ is a solution of (\ref{equ:max_problem_stationary}), contradicting uniqueness.
		This means that $\sum_{i=1}^{k^*}\frac{d}{d\gamma}x^{\gamma}\left(a_i\right)\leq 0$, with equality if $\lambda_{k^*+1}>0$, and $\sum_{i=1}^{k}\frac{d}{d\gamma}x^{\gamma}\left(a_i\right)= 0$ for all $k\geq k^*+1$.
		
		It remains to prove the claim for $k<k^*$.
		Let $A^*=\left\{a\in A:\: x^{\gamma}(a)>0\right\}=\left\{a_1,\ldots, a_{k^*}\right\}$.
		The Lagrangian of maximization problem (\ref{equ:max_problem_stationary}) is
		\begin{align*}
			\sum_{i=1}^m x\left(a_i\right)\left(1-x\left(a_i\right)\right)u\left(a_i\right) + \lambda_ix\left(a_i\right) - \mu\left(\sum_{i=1}^mx\left(a_i\right)-1\right),
		\end{align*}
		with first-order conditions
		\begin{align*}
			u\left(a_i\right)\left(1-2\gamma x\left(a_i\right)\right)+\lambda_i-\mu=0.
		\end{align*}
		for $i=1,\ldots,m$.
		For all $i$ we either have $x\left(a_i\right)=0$ or $\lambda_i=0$; in the latter case
		\begin{align*}
			u\left(a_i\right)\left(1-2\gamma x\left(a_i\right)\right)=\mu.
		\end{align*}
		Summing over all $a_i\in A^*$, solving for $\mu$ and substituting in we find that
		\begin{align*}
			u\left(a\right)\left(1-2\gamma x\left(a\right)\right) = \frac{1}{k^*}\sum_{i=1}^{k^*}u\left(a_i\right)\left(1-2\gamma x\left(a_i\right)\right) = \frac{1}{k^*}\sum_{i=1}^{k^*}u\left(a_i\right) - 2\gamma \sum_{i=1}^{k^*}x\left(a_i\right)u\left(a_i\right).
		\end{align*}
		So, $x_{\gamma}$ satisfies for all $k=1,\ldots, k^*$,
		\begin{align*}
			x_{\gamma}\left(a_k\right) = \frac{1}{2\gamma u\left(a_k\right)}\left(u\left(a_k\right) - \frac{1}{k^*}\sum_{i=1}^{k^*}u\left(a_i\right)\right) + \frac{1}{u\left(a_k\right)}\sum_{i=1}^{k^*}x^{\gamma}\left(a_i\right)u\left(a_i\right).
		\end{align*}
		Taking the derivative of both sides with respect to $\gamma$ gives
		\begin{align}\label{equ:gamma_derivative}
			\frac{d}{d\gamma}x^{\gamma}\left(a_k\right) = \frac{1}{2\gamma^2}\left( \frac{1}{k^*}\sum_{i=1}^{k^*}u\left(a_i\right) - u\left(a_k\right)\right) + \frac{1}{u\left(a_k\right)}\sum_{i=1}^{k^*}\frac{d}{d\gamma}x^{\gamma}\left(a_i\right)u\left(a_i\right).
		\end{align}
		Suppose first that $\sum_{i=1}^{k^*}\frac{d}{d\gamma}x^{\gamma}\left(a_i\right)u\left(a_i\right)>0$.
		Then the right-hand side of (\ref{equ:gamma_derivative}) is increasing in $k$, as $u\left(a_k\right)$ is decreasing in $k$.
		Thus, in particular,
		\begin{align}\label{equ:gamma_ordered_derivatives}
			\frac{d}{d\gamma}x^{\gamma}\left(a_1\right) \leq \frac{d}{d\gamma}x^{\gamma}\left(a_2\right) \leq \cdots \leq \frac{d}{d\gamma}x^{\gamma}\left(a_{k^*}\right).
		\end{align}
		Suppose that (\ref{equ:gamma_fosd}) does not hold.
		Then there is $k<k^*$ such that $\sum_{i=1}^{k}\frac{d}{d\gamma}x^{\gamma}\left(a_i\right)> 0$.
		Thus, by (\ref{equ:gamma_ordered_derivatives}), we must have $\sum_{i=1}^{k^*}\frac{d}{d\gamma}x^{\gamma}\left(a_i\right)\geq \sum_{i=1}^{k}\frac{d}{d\gamma}x^{\gamma}\left(a_i\right)> 0$, which is impossible.
		Suppose next that $\sum_{i=1}^{k^*}\frac{d}{d\gamma}x^{\gamma}\left(a_i\right)u\left(a_i\right)\leq 0$.
		Then, for all $\ell\leq k^*$,
		\begin{align*}
			\sum_{k=1}^\ell \frac{d}{d\gamma}x^{\gamma}\left(a_k\right)
			&= \sum_{k=1}^\ell \frac{1}{2\gamma^2}\left( \frac{1}{k^*}\sum_{i=1}^{k^*}u\left(a_i\right) - u\left(a_k\right)\right) + \sum_{k=1}^\ell \frac{1}{u\left(a_k\right)}\sum_{i=1}^{k^*}\frac{d}{d\gamma}x^{\gamma}\left(a_i\right)u\left(a_i\right)\\
			&\leq \frac{1}{2\gamma^2} \left(\frac{\ell}{k^*}\sum_{i=1}^{k^*}u\left(a_i\right) - \sum_{k=1}^\ell u\left(a_k\right)\right)\\
			&\leq 0,
		\end{align*}
		where the last inequality holds because $\frac{1}{k^*}\sum_{i=1}^{k^*}u\left(a_i\right) \leq \frac{1}{\ell}\sum_{k=1}^\ell u\left(a_k\right)$ for all $\ell\leq k^*$.
	\end{proof}
	
	\noindent As $\sum_{a\in A}x_{\gamma}(a)=1$ for all $\gamma\in\left(0,1\right]$, an immediate consequence of Lemma \ref{pro:fatique} is that the aggregated weight on the poor actions is increasing as $\gamma$ increases.
	So, for comparing any two value $\gamma,\gamma'$ we obtain the following corollary.

	\begin{corollary}
		Let $A=\left\{a_1,\ldots, a_m\right\}$ with $u\left(a_1\right)\geq u\left(a_2\right)\geq \cdots \geq u\left(a_m\right)$, let $0<\gamma<\gamma'\leq 1$, and let $\vec a,\vec b\in A^{\infty}$ be two optimal stationary histories with respect to $\gamma$ and $\gamma'$, respectively.
		Then $\varphi\left(\cdot\middle\vert\vec a\right)$ first order stochastically dominates $\varphi\big(\cdot\big\vert\vec b\big)$ .
	\end{corollary}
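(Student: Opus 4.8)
The plan is to reduce the statement to a one‑dimensional monotonicity property of the optimal frequencies and then read that property off Lemma \ref{pro:fatique}. By (\ref{equ:stationary_limit}), a stationary history $\vec a$ satisfies $\underline V_{\gamma}(\vec a)=\sum_{a\in A}\varphi\left(a\middle\vert\vec a\right)\left(1-\gamma\varphi\left(a\middle\vert\vec a\right)\right)u(a)$, so it can be optimal only if $\varphi\left(\cdot\middle\vert\vec a\right)$ equals the unique maximizer $x_{\gamma}$ of (\ref{equ:max_problem_stationary}); likewise $\varphi\big(\cdot\big\vert\vec b\big)=x_{\gamma'}$. Since $x_{\gamma}$ and $x_{\gamma'}$ are probability vectors on $A$ ranked by $u(a_1)\ge\cdots\ge u(a_m)$, the assertion that $\varphi\left(\cdot\middle\vert\vec a\right)$ first‑order stochastically dominates $\varphi\big(\cdot\big\vert\vec b\big)$ is, by definition, the same as
\[
  \sum_{i=1}^{k} x_{\gamma}(a_i)\ \ge\ \sum_{i=1}^{k} x_{\gamma'}(a_i)\qquad\text{for every }k=1,\dots,m .
\]
Hence it suffices to show that for each fixed $k$ the function $g_k(\gamma):=\sum_{i=1}^{k} x_{\gamma}(a_i)$ is nonincreasing on $(0,1]$.

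This is exactly where Lemma \ref{pro:fatique} enters: it says precisely that $g_k'(\gamma)=\sum_{i=1}^{k}\frac{d}{d\gamma}x_{\gamma}(a_i)\le 0$ wherever this derivative exists. To upgrade the pointwise derivative bound to global monotonicity I would first note that $\gamma\mapsto x_{\gamma}$ is continuous (uniqueness of the maximizer together with the maximum theorem), and that on any $\gamma$‑interval on which the support $A^{*}(\gamma)=\left\{a:x_{\gamma}(a)>0\right\}$ is constant, $x_{\gamma}(a_k)$ is given by the explicit formula (\ref{equ:frequency_stat}), a rational function of $\gamma$ whose denominator $2\gamma\sum_{b\in A^{*}}u(a_k)/u(b)$ does not vanish for $\gamma>0$; in particular $g_k$ is $C^{1}$ there. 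Because $A^{*}(\gamma)$ is always an initial segment $\left\{a_1,\dots,a_{k^{*}(\gamma)}\right\}$ and $k^{*}(\gamma)$ is nondecreasing in $\gamma$ — stronger fatigue can only enlarge the set of actions used with positive weight, as is also visible in Example \ref{exa:opt_stat} — the interval $(0,1]$ decomposes into at most $m$ consecutive subintervals of constant support. On each of them $g_k'\le 0$ by Lemma \ref{pro:fatique}, and $g_k$ is continuous across the finitely many endpoints, so $g_k$ is nonincreasing on all of $(0,1]$. This gives the displayed inequalities; summing the complementary tails $\sum_{i=k+1}^{m}x_{\gamma}(a_i)$ instead also recovers the ``weight shifts to the poor actions'' reading mentioned before the statement.

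The one delicate point, and the only genuine obstacle, is precisely the finitely many parameter values at which the support of $x_{\gamma}$ changes: there the derivative formula behind Lemma \ref{pro:fatique} is not available (it was derived assuming $x_{\gamma}(a_i)>0$), so the piecewise bound must be glued together using continuity of $x_{\gamma}$, i.e.\ of $g_k$; an alternative that sidesteps counting these points is to observe that $g_k$ is locally Lipschitz on $(0,1]$, hence differentiable almost everywhere with $g_k'\le 0$ a.e., and conclude monotonicity by the fundamental theorem of calculus. Once this is in place the rest is bookkeeping, which is why the corollary is fairly called an immediate consequence of Lemma \ref{pro:fatique}.
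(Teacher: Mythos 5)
Your proof is correct and takes the same route as the paper: the corollary is presented there as an immediate consequence of Lemma \ref{pro:fatique} combined with $\sum_a x_\gamma(a)=1$, which is exactly the reduction to monotonicity of the partial sums $\sum_{i=1}^k x_\gamma(a_i)$ that you carry out. You are more careful than the paper in handling the finitely many $\gamma$ at which the support changes (where the first-order conditions underlying Lemma \ref{pro:fatique} do not directly apply), and your Lipschitz/a.e.-derivative fallback cleanly closes that gap; this is a legitimate refinement rather than a different method.
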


	\section{Stationary strategies do not achieve $\overline V_{\gamma}$}\label{sec:limsup}
	
	In Examples \ref{exa:limsup} and \ref{exa:opt_stat} we have seen a set of actions for which $\overline V_{\gamma}>V^*_{\gamma}$.
	This relation is quite robust, as we will show in this section: whenever $A$ contains at least two actions with different basic payoffs, it holds true.
	The rough idea of the proof is to construct a sequence $\left(v^T_{\gamma}\right)_{T\in\mathbb N}$ with $\lim_{T\rightarrow\infty}v^T_{\gamma} = \overline V_{\gamma}$, and then show that $\lim_{T\rightarrow\infty}v^T_{\gamma} $ is bounded away from $V_{\gamma}^*$.
	
	\subsection{Approximating $\overline V_{\gamma}$}
	
	For every $T\geq 1$ define
	\begin{align*}
		v^T_{\gamma} = \max_{\vec a\in A^{\infty}}U^T_{\gamma}\left(\vec a\right).
	\end{align*}
	
	\noindent That is, $v_{\gamma}^T$ denotes the maximal average payoff that can be obtained from a history of length $T$.
	We show that the sequence $\left(v^T_{\gamma}\right)_{T\in\mathbb N}$ converges.
	The idea of the proof is to show that for a history $\vec a$ of length $T$ and any $S>T$ we can find a history $\vec b$ of length $S$ that has an average payoff at $S$ that is close to the one of $\vec a$ at $T$.
	The construction of $\vec b$ relies on the division of $\vec a$ into blocks such that the length of any block is a fraction $\alpha$ of the previous history, similar to the construction in the proof of Theorem \ref{thm:liminf_stationary}.
	These blocks are then ``stretched'' by some factor $\delta>1$ such that $S=\delta T$, and the within-block frequencies and average payoffs of the $k$-block of $\vec a$ and $\vec b$ are close.

	\begin{proposition}\label{pro:v_T_converges}
		The sequence $\left(v^T_{\gamma}\right)_{T\in\mathbb N}$ converges.
	\end{proposition}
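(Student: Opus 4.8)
The plan is to deduce convergence from boundedness together with an ``almost monotonicity'' estimate: for every $\varepsilon>0$ there is a $T_0$ such that $v^S_\gamma\ge v^T_\gamma-\varepsilon$ whenever $S\ge T\ge T_0$. Since $0\le U^T_\gamma(\vec a)\le\max_{a\in A}u(a)$ for all $\vec a$ and $T$, the sequence $(v^T_\gamma)_T$ is bounded, so it suffices to show $\liminf_{T\to\infty}v^T_\gamma\ge\limsup_{T\to\infty}v^T_\gamma$; and the estimate gives $\liminf_{S\to\infty}v^S_\gamma\ge v^T_\gamma-\varepsilon$ for every $T\ge T_0$, hence, taking the supremum over $T\ge T_0$, $\liminf_{S\to\infty}v^S_\gamma\ge\limsup_{T\to\infty}v^T_\gamma-\varepsilon$, and letting $\varepsilon\downarrow0$ yields $\liminf=\limsup$. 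So the real work is the estimate.

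To establish it, fix $\varepsilon$ and pick $\vec a$ with $U^T_\gamma(\vec a)=v^T_\gamma$ (the maximum is attained because $U^T_\gamma$ depends only on $\vec a^T$, a point of the finite set $A^T$). Following the block decomposition from the proof of Theorem \ref{thm:liminf_stationary}, I would choose a small $\alpha>0$ and a threshold $t_1$, split $\{1,\dots,T\}$ into blocks with endpoints $t_k\approx(1+\alpha)^{k-1}t_1$ (the last block being whatever remains up to $T$), and record for each block $k$ its length $L_k=t_k-t_{k-1}$, the within-block frequency $p^k(a)$, and the frequency $\varphi^k(a)=\varphi(a\mid\vec a^{t_{k-1}})$ at its start. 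The history $\vec b$ of length $S$ is then built by \emph{stretching every block by the common factor} $\delta=S/T\ge1$: block $k$ of $\vec b$ has length $m_k\approx\delta L_k$ (rounded so that the $m_k$ sum to $S$), and inside that block $\vec b$ plays each action $a$ about $m_kp^k(a)$ times, in an arbitrary order. Using the \emph{same} $\delta$ for every block is essential: then $m_1+\dots+m_k\approx\delta(L_1+\dots+L_k)$ and the per-block profiles are unchanged, so the frequency of $\vec b$ at the start of its $k$-th block equals $\varphi^k$ up to a small rounding error, and inside block $k$ the frequency of $a$ is $p^k(a)$ up to rounding.

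Next I would compare $U^S_\gamma(\vec b)$ with $U^T_\gamma(\vec a)$. Both are weighted averages of the per-block averages $W_\gamma$ of (\ref{equ:W}) with nearly identical weights, since $m_k/S\approx L_k/T$; the first block carries weight $\approx t_1/T$, which is negligible once $T$ is large, so its discrepancy can simply be discarded. For $k\ge2$ each block of $\vec a$ and of $\vec b$ is short relative to its prefix, with ratio $\approx\alpha$, so Lemma \ref{lem:w_U_approximation} shows that both $W_\gamma$'s lie within $O\!\big(\alpha\gamma\sum_{a}u(a)\big)$ of the quantity $\widetilde U_\gamma$ of (\ref{equ:approx_U2}); and $\widetilde U_\gamma$ is built only from $p^k$ and $\varphi^k$, which agree (up to rounding) for $\vec a$ and $\vec b$, so the two approximations coincide. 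Hence $|W_\gamma(\vec a,t_{k-1},t_k)-W_\gamma(\vec b,\cdot,\cdot)|=O(\alpha\gamma\sum_au(a))$ block by block, whence $|U^S_\gamma(\vec b)-U^T_\gamma(\vec a)|$ is of the same order plus the negligible first-block and rounding terms; choosing $\alpha$ small, then $t_1$ large, then $T_0$ large drives this below $\varepsilon$, which together with $v^S_\gamma\ge U^S_\gamma(\vec b)$ gives the estimate.

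I expect the main obstacle to be the rounding bookkeeping rather than any conceptual hurdle. Because the $m_k$ and the within-block action counts must be integers, the start-of-block frequencies of $\vec b$ drift from those of $\vec a$; the accumulated defect grows only linearly in the number of blocks $K=O(\log T)$ while the block lengths grow geometrically, so the drift is $O(|A|/(\alpha t_1))$ uniformly in $k$ and is harmless once $t_1$ is taken large \emph{after} $\alpha$ has been fixed. The same geometric-versus-linear comparison guarantees that the last block of $\vec b$ is nonempty and still short relative to its prefix. Crucially, none of these error bounds involves $S$, which is exactly what makes the estimate uniform over all $S\ge T$.
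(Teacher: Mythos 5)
Your proof is correct and takes essentially the same approach as the paper: you reduce convergence to an almost-monotonicity estimate using boundedness, and then prove that estimate by the same geometric block decomposition, the same stretching of blocks by the common factor $\delta=S/T$, and the same appeal to Lemma~\ref{lem:w_U_approximation} to approximate per-block averages by $\widetilde U_\gamma$. The only cosmetic difference is that the paper's construction of $\vec b$ in (\ref{equ:construction_of_b1}) uses an explicit greedy discrepancy-minimizing rule within each block, whereas you prescribe approximate within-block counts $m_k p^k(a)$ in arbitrary order; both give the same $O(|A|/(\alpha t_1))$ rounding control and the proofs are otherwise identical in structure.
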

	
	\begin{proof}
		Clearly, the sequence is bounded, so that $\limsup_{T\rightarrow\infty}v_{\gamma}^{T}$ and $\liminf_{T\rightarrow\infty}v_{\gamma}^{T}$ exist.
		We show that for each $\varepsilon>0$ there is $T^*\in\mathbb N$ such that if $v_{T^*}\geq\limsup_{T\rightarrow\infty}v_{\gamma}^{T}-\varepsilon$, then $v_S\geq \limsup_{T\rightarrow\infty}v_{\gamma}^{T}-2\varepsilon$ for all $S\geq T^*$.
		This implies that for every $\varepsilon>0$ it holds that $\limsup_{T\rightarrow\infty}v_{T}-\liminf_{T\rightarrow\infty}v_{T}<2\varepsilon$, so that $\limsup_{T\rightarrow\infty}v_{\gamma}^{T}=\liminf_{T\rightarrow\infty}v_{\gamma}^{T}=\lim_{T\rightarrow\infty}v^T_{\gamma}$.
		
		So, let $\varepsilon>0$ be sufficiently small.
		Let
		\begin{align*}
			t_1 &\geq \max\left\{\frac{1}{\varepsilon^3}\left(\LV A\RV + 8\gamma\sum_{a\in A}u(a)\right) , \frac{1+2\varepsilon^2}{\varepsilon^3}16\gamma\sum_{a\in A}u(a)\right\},\\
			\alpha &=\frac{\varepsilon}{16\gamma\sum_au(a)}-\frac{2}{t_1},
		\end{align*}
		and let $T^* \geq\frac{4t_1\sum_au(a)}{\varepsilon}$ be such that $v_{T^*}\geq\limsup_{T\rightarrow\infty}v_{T}-\varepsilon$.
		Observe that for sufficiently small  $\varepsilon$ we have
		\begin{align}
			1 \geq \alpha = \frac{\varepsilon}{16\gamma\sum_au(a)} - \frac{2}{t_1} \geq \frac{1+2{\varepsilon^2}}{\varepsilon^2t_1}- \frac{2}{t_1}=  \frac{1}{\varepsilon^2 t_1}>\frac{1}{t_1}.
		\end{align}
		For $k\geq 2$, let $r^k$ be the smallest integer such that $r^k\geq\left(1+\alpha\right)^{k-1}t_1$, and let $K$ be the smallest integer with $r^K\geq T^*$.
		Let $t^0=0$, for $k=2,\ldots,K-1$ let $t_k=r^k$, and let $t_K=T^*>t_{K-1}$.
		Let $\vec a\in A^{\infty}$ be such that $U^{T^*}_{\gamma}\left(\vec a\right)=v^{T^*}_{\gamma}$.
		As in the proof of Theorem \ref{thm:liminf_stationary}, let the $k$-th block of $\vec a$ be the finite sequence $\left(\vec a_{t_k+1},\ldots,\vec a_{t_{k+1}}\right)$.
		For $k=0,\ldots, K-1$, denote the average payoff, the frequency, and the approximation by $W^k_{\gamma}=W_\gamma\left(\vec a, t_k,t_{k+1}\right)$, $p^k(a)=p\left(a;\vec a,t_k,t_{k+1}\right)$, and $\widetilde{U}_{\gamma}^k=\widetilde U_{\gamma}\left(\vec a,t_k,t_{k+1}\right)$, respectively, as in Equations (\ref{equ:W}), (\ref{equ:p}), and (\ref{equ:approx_U2}).
		In particular, $W^0_{\gamma}=U^{t_1}$ and $p^0=\varphi\left(.\middle\vert \vec a^{t_1}\right)$.
		By construction, $\frac{t_{k+1}-t_k}{t_k}\leq\alpha+\frac{1}{t_1}$ for all $k=1,\ldots, K$.
		Thus, by Lemma \ref{lem:w_U_approximation} and the definition of $\alpha$,
		\begin{align}\label{equ:proof_pro:v^T_conv:Wk}
			\big\vert W^k- \widetilde{U}^k_{\gamma}\big\vert\le 2\left(\alpha+\frac{1}{t_1}\right) \g \sum_{a\in A}u(a) \leq 2\frac{\varepsilon}{16\gamma\sum_au(a)} \g \sum_{a\in A}u(a) = \frac{1}{8}\varepsilon
		\end{align}
		for $k=1,\ldots, K$.
		
		Let $S\geq T^*$ and define $\delta=\frac{S}{T^*}$.
		For each $k\geq 1$, let $s^k$ be the largest integer smaller than or equal to $\delta t_k$.
		Let $\vec b\in A^{\infty}$ be such that $\vec b^{t_1}=\vec a^{t_1}$ and
		\begin{align}\label{equ:construction_of_b1}
			\vec b_s =
			\begin{cases}
				\argmin_{a\in A}\varphi\left(a\mid b^{s-1}\right) - \varphi\left(a\mid a^{t_1}\right),  & \text{if } s=t_1+1,\ldots, s^1,\\
				\argmin_{a\in A}\frac{1}{s-s^{k}}\sum_{s'=s^{k}+1}^{s}\1_{\vec b_{s'}=a} - p^k(a), &  \text{if } s^{k}+1\leq s\leq s^{k+1}, \text{ where } k\geq 1.
			\end{cases}		
		\end{align}
		That is, the individual blocks of history $\vec b$ are longer than those of $\vec a$, stretched by the factor $\delta$, and in the $k$-th block of $\vec b$ actions are chosen to minimize the difference between the frequencies in the $k$-th block of $\vec a$ and $\vec b$.
		For $k=0,\ldots, K-1$ denote the average payoff, the frequency, and the approximation in $\vec b$ by $Y^k_{\gamma}=W_\gamma\big(\vec b, t_k,t_{k+1}\big)$, $q^k(a)=p\big(a;\vec b,t_k,t_{k+1}\big)$, and $\widetilde{V}_{\gamma}^k=\widetilde U_{\gamma}\big(\vec b,t_k,t_{k+1}\big)$, respectively, as in Equations (\ref{equ:W}), (\ref{equ:p}), and (\ref{equ:approx_U2}).
		As $\frac{s^{k+1}-s^k}{s^k}\leq\alpha+\frac{2}{t_1}$ for all $k\geq 1$, Lemma \ref{lem:w_U_approximation} together with the definition of $\alpha$ give
		\begin{align}\label{equ:proof_pro:v^T_conv:Yk}
			\left\vert Y^k- \widetilde{V}^k_{\gamma}\right\vert\le 2\left(\alpha+\frac{2}{t_1}\right)  \g \sum_{a\in A}u(a) = 2\frac{\varepsilon}{16\gamma\sum_au(a)} \g \sum_{a\in A}u(a) = \frac{1}{8}\varepsilon.
		\end{align}
		By construction, $\varphi\left(.\middle\vert\vec a^{t_1}\right) = \varphi\big(.\big\vert\vec b^{t_1}\big)$.
		By (\ref{equ:construction_of_b1}), for all $t_1+1\leq s\leq s^1$, action $a$ is only chosen if $\varphi\big(a\big\vert \vec b^{s-1}\big) \leq \varphi\left(a\middle\vert \vec a^{t_1}\right)$.
		Thus, $\varphi\big(a\big\vert \vec b^{s}\big)\leq \varphi\left(a\middle\vert \vec a^{t_1}\right) +\frac{1}{s}$ for all $s\leq s^1$.
		Since $\sum_{a\in A}\varphi\big(a\big\vert\vec b^{s}\big)=1$, this implies that $\varphi\big(a\big\vert\vec b^{s}\big)\geq \varphi\left(a\middle\vert\vec a^{t_1}\right) -\frac{\LV A\RV -1}{s}$ for all $s\leq s^1$.
		Thus, for sufficiently small $\varepsilon>0$
		\begin{align}\label{equ:frec_a_b_1}
			\big| \varphi\left(a\middle\vert\vec a^{t_1}\right) - \varphi\big(a\big\vert\vec b^{s}\big)\big| \leq \frac{\LV A\RV -1}{s} \leq \frac{\LV A\RV -1}{t_1} \leq \varepsilon^3\frac{\LV A\RV -1}{\LV A\RV + 8\gamma\sum_au(a)}\leq \varepsilon^2
		\end{align}
		for all $s\leq s^1$.	
		Let $k\geq 1$ and $s^k+1\leq s\leq s^{k+1}$.
		By (\ref{equ:construction_of_b1}), action $a$ is only being played at $s$ if $\frac{1}{s-s^{k}}\sum_{s'=s^{k}+1}^{s}\1_{\vec b_{s'}=a} \leq p^k(a)$.
		Thus, $p^k(a)\leq \frac{1}{s-s^{k}} + \frac{1}{s-s^{k}} \sum_{s'=s^{k}+1}^{s}\1_{\vec b_{s'}=a}$.
		In particular, for $s=s^{k+1}$ it holds that
		\begin{align*}
			p^k(a)\leq\frac{1}{s^{k+1}-s^k} + \frac{1}{s^{k+1}-s^k} \sum_{s'=s^k+1}^{s^{k+1}}\1_{\vec b_{s'}=a}=\frac{1}{s^{k+1}-s^k} + q^k(a).
		\end{align*}
		Since $\sum_{a\in A}q^k(a)=\sum_{a\in A}p^k(a)=1$, this implies $p^k(a)\geq q^k(a)+\frac{\LV A\RV -1}{s^{k+1}-s^k}$, so that for sufficiently small $\varepsilon>0$
		\begin{align}
			\big| p^k(a) - q^k(a)\big| &\leq \frac{\LV A\RV -1}{s^{k+1}-s^k}\leq \frac{\LV A\RV -1}{t_{k+1}-t_k} \label{equ:p^k_1}\\
			&\leq \frac{\LV A\RV -1}{\alpha t_k-1}\leq \frac{\LV A\RV -1}{\alpha\left(1+\alpha\right)^{k}t_1-1} \notag\\
			&\leq \frac{\LV A\RV -1}{\frac{1}{\varepsilon^2} \left(1+\alpha\right)^k-k} \leq \varepsilon^2\frac{\LV A\RV -1}{\left(1+\alpha\right)^k-\varepsilon^2}\notag\\
			&\leq \frac{\varepsilon}{8\LV A\RV u(a)} \notag
		\end{align}
		for all $a\in A$ and all $k\geq 1$.
		In particular,
		\begin{align*}
			\sum_{a\in A}\big| p^k(a)-q^k(a)\big| u(a) &\leq
			\sum_{a\in A}\frac{\varepsilon}{8\LV A\RV u(a)}u(a)=\frac{1}{8}\varepsilon.
		\end{align*}
		Further, by using (\ref{equ:p^k_1}) we find for $k\geq 2$ that
        \small
		\begin{align*}
			\LV \varphi\left(a\middle\vert\vec a^{t_k}\right) - \varphi\big(a\big\vert\vec b^{s^k}\big)\RV &=
			\LV \frac{t_{k-1}}{t_k}\varphi\left(a\middle\vert\vec a^{t_{k-1}}\right) + \frac{t_k-t_{k-1}}{t_k}p^k(a) - \frac{s^{k-1}}{s^k} \varphi\big(a\big\vert\vec b^{s^{k-1}}\big) - \frac{s^{k}-s^{k-1}}{s^k}q^k(a)\RV\\
			&\leq \LV \frac{t_{k-1}}{t_k}\varphi\left(a\middle\vert\vec a^{t_{k-1}}\right) - \frac{\delta t_{k-1}-x_1}{\delta t_k-x_2}\varphi\big(a\big\vert\vec b^{s^{k-1}}\big)\RV\\
			& \qquad\qquad\qquad + \LV \frac{t_k-t_{k-1}}{t_k}p^k(a) - \frac{\delta t_k-x_2 -\delta t_{k-1}+x_1}{\delta t_k- x_2}q^k(a)\RV\\
			&\leq \frac{t_{k-1}}{t_k}\LV \varphi\left(a\middle\vert\vec a^{t_{k-1}}\right) - \varphi\big(a\big\vert\vec b^{s^{k-1}}\big)\RV + \frac{x_1}{s^k}\varphi\big(a\big\vert\vec b^{s^{k-1}}\big)\\
			& \qquad\qquad\qquad + \frac{t_k-t_{k-1}}{t_k}\LV p^k(a)-q^k(a)\RV + \frac{\LV x_1-x_2\RV}{s^k}q^k(a)\\
			&\leq \frac{t_{k-1}}{t_k}\LV \varphi\left(a\middle\vert\vec a^{t_{k-1}}\right) - \varphi\big(a\big\vert\vec b^{s^{k-1}}\big)\RV +
			\frac{2}{t_k} + \frac{t_k-t_{k-1}}{t_k}\frac{\LV A\RV -1}{t_{k+1}-t_k}\\
			&\leq \frac{t_{k-1}}{t_k}\LV \varphi\left(a\middle\vert\vec a^{t_{k-1}}\right) - \varphi\big(a\big\vert\vec b^{s^{k-1}}\big)\RV +
			\frac{\LV A\RV +1}{t_k} ,
		\end{align*}
        \normalsize
		where $x_1, x_2\leq 1$ are such that $s^{k-1}=\delta t_{k-1} - x_1$ and $s^{k}=\delta t_k - x_2$.
		We thus find inductively that, for sufficiently small $\varepsilon$,
		\begin{align*}
			\LV \varphi\left(a\middle\vert\vec a^{t_k}\right) - \varphi\big(a\big\vert\vec b^{s^k}\big)\RV
			&\leq \frac{t_1}{t_k}\LV \varphi\big(a\big\vert\vec a^{t^{1}}\big) - \varphi\big(a\big\vert\vec b^{s^{1}}\big)\RV + \left(\LV A\RV +1\right)\sum_{l=2}^k\frac{1}{t^l}\\
			&\leq \frac{t_1}{t_k} \varepsilon^2 + \frac{\LV A\RV +1}{t_1}\sum_{l=2}^k\left(\frac{1}{1+\alpha}\right)^l\\
			&\leq \frac{1}{(1+\alpha)^k} \varepsilon^2 + \frac{\LV A\RV +1}{\alpha t_1}\\
			&\leq \frac{1}{(1+\alpha)^k} \varepsilon^2 + \left(\LV A\RV +1\right)\varepsilon^2\\
			&\leq \left(\LV A\RV +2\right)\varepsilon^2\\
			&\leq \frac{\varepsilon}{8\LV A\RV u(a)}
		\end{align*}	
		for all $k\geq 1$, so that
		\begin{align*}
			\sum_{a\in A}\LV \varphi\left(a\middle\vert\vec a^{t_k}\right) - \varphi\big(a\big\vert\vec b^{s^k}\big)\RV \leq\frac{1}{8}\varepsilon
		\end{align*}
		for all $k\geq 0$.
		(The case $k=0$ follows from (\ref{equ:frec_a_b_1}).)
		Hence,
        \small
		\begin{align}
			\big|\widetilde U^k_{\gamma}-\widetilde V^k_{\gamma}\big|
			&= \LV \sum_{a\in A}p^k(a)\left(1-\varphi\left(a\middle\vert\vec a^{t_k}\right)\right)u(a) - \sum_{a\in A}q^k(a)\big(1-\varphi^k\big(a\big\vert\vec b^{s^k}\big)\big)u(a)\RV \notag \\
			& = \LV \sum_{a\in A}\big(p^k(a) - q^k(a)\big)\left(1-\varphi\left(a\middle\vert\vec a^{t_k}\right)\right)u(a) - \sum_{a\in A}q^k(a)\big(\varphi\big(a\big\vert\vec a^{t_k}\big)-\varphi\big(a\big\vert\vec b^{s^k}\big)\big)u(a)\RV \notag \\
			& \leq \sum_{a\in A}\LV p^k(a) - q^k(a)\RV u(a) - \sum_{a\in A}\LV \varphi\left(a\middle\vert\vec a^{t_k}\right)-\varphi\big(a\big\vert\vec b^{s^k}\big)\RV u(a) \notag \\
			&\leq \frac{1}{4}\varepsilon \label{equ:proof_pro:v^T_conv:UV}
		\end{align}
        \normalsize
		for all $k\geq 1$.
		From (\ref{equ:proof_pro:v^T_conv:Wk}), (\ref{equ:proof_pro:v^T_conv:Yk}), an (\ref{equ:proof_pro:v^T_conv:UV}) we find
		\begin{align*}
			\big| W^k_{\gamma} - Y^k_{\gamma}\big|
			\leq \big| W^k_{\gamma}  - \tilde U^k_{\gamma} \big| + \big| \tilde U^k_{\gamma}  -\tilde V^{k}_\gamma\big| + \big| \tilde V^{k}_\gamma - Y^k_{\gamma}\big|
			\leq \frac{1}{2}\varepsilon
		\end{align*}
		for all $k\geq 1$.
		Thus, recalling that $t_K=T^*$ and $s^K= S$, we have
        \small
		\begin{align*}
			\LV U^T_{\gamma}\left(\vec b\right) - U^S_{\gamma}\left(\vec a\right)\RV
			&\leq \sum_{k=0}^{K-1}\LV \frac{t_{k+1}-t_k}{T}W^k_{\gamma} - \frac{s^{k+1}-s^{k}}{S}Y^{k}_{\gamma}\RV 
			\\
			&\leq \sum_{k=0}^{K-1}\LV \frac{\delta t_{k+1}- \delta t_k}{\delta T}W^k_{\gamma} - \frac{s^{k+1}-s^{k}}{\delta T}Y^{k}_{\gamma}\RV 
			\\
			&\leq \sum_{k=0}^{K-1} \left(\LV \frac{\delta t_{k+1}- \delta t_k}{\delta T}\left(W^k_{\gamma} - Y^{k}_{\gamma}\right) \RV + \LV\frac{1}{\delta T}Y^{k}_{\gamma}\RV\right) 
			\\
			&\leq \sum_{k=0}^{K-1}  \frac{\left(1+\alpha\right)^{k+1}t_1-\left(1+\alpha\right)^{k}t_1+1}{\left(1+\alpha\right)^{K-1}t_1}\LV W^k_{\gamma} - Y^{k}_{\gamma} \RV + \frac{K}{\delta \left(1+\alpha\right)^{K-1}t_1}\sum_{a\in A}u(a) \\
			&\leq \frac{\varepsilon}{2}\sum_{k=0}^{K-1}  \left(\alpha\left(1+\alpha\right)^{k-K+1} +\frac{1}{\left(1+\alpha\right)^{K-1}t_1}\right) + \frac{K}{ \left(1+\alpha\right)^{K-1}t_1}\sum_{a\in A}u(a) \\
			&\leq \frac{\varepsilon}{2}\left(1+\frac{K}{\left(1+\alpha\right)^Kt_1}\right) + \frac{K}{\left(1+\alpha\right)^{K-1}t_1}\sum_{a\in A}u(a).
		\end{align*}
        \normalsize
		Using that
		\begin{align*}
			\frac{K}{\left(1+\alpha\right)^Kt_1}\leq \frac{K}{\left(1+\alpha\right)^{K-1}t_1} \leq \frac{K}{\left(1+\alpha (K-1)\right)t_1} \leq \frac{K}{\left(\alpha+\alpha K -\alpha\right)t_1} = \frac{1}{\alpha t_1} \leq \varepsilon^2
		\end{align*}	
		we conclude that, for sufficiently small $\varepsilon$,
		\begin{align*}
			\LV U^T_{\gamma}\left(\vec b\right) - U^S_{\gamma}\left(\vec a\right)\RV
			\leq \frac{\varepsilon}{2}\left(1+\varepsilon^2\right) + \varepsilon^2\sum_{a\in A}u(a)
			\leq \varepsilon.
		\end{align*}
		Thus,
		\begin{align*}
			v^S_{\gamma}\geq U^S_{\gamma}\left(\vec b\right) \geq U^T_{\gamma}\left(\vec a\right) - \varepsilon \geq \limsup_{T'}v_{T'} - 2\varepsilon
		\end{align*}
		as required.
	\end{proof}

	\noindent After we have established that $\lim_{T\rightarrow\infty}v^T_{\gamma}$ is well-defined, we shall now show that the sequence converges to $\overline{V}_{\gamma}$.

	\begin{proposition}\label{pro:v_T_limit}
		It holds that $\lim_{T\rightarrow\infty} v^T_{\gamma} = \overline{V}_{\gamma}$.
	\end{proposition}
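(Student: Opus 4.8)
The plan is to prove the two inequalities $\overline V_\gamma \le \lim_{T\to\infty} v^T_\gamma$ and $\overline V_\gamma \ge \lim_{T\to\infty} v^T_\gamma$ separately; the limit on the right exists by Proposition \ref{pro:v_T_converges}. The first inequality is immediate: for every $\vec a\in A^\infty$ and every $T$ we have $U^T_\gamma(\vec a)\le v^T_\gamma$ by definition of $v^T_\gamma$, hence $\overline V_\gamma(\vec a)=\limsup_{T}U^T_\gamma(\vec a)\le \limsup_T v^T_\gamma=\lim_T v^T_\gamma$, and taking the supremum over $\vec a$ gives $\overline V_\gamma\le \lim_T v^T_\gamma$.

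For the reverse inequality I would construct a single infinite history $\vec a$ with $\overline V_\gamma(\vec a)\ge \lim_T v^T_\gamma$ by concatenating optimal finite histories of super-geometrically increasing length. Concretely, fix a large integer $S_1=T_1$ and define recursively $T_n=S_{n-1}^2$ and $S_n=S_{n-1}+T_n$; on the block of periods $(S_{n-1},S_n]$ let $\vec a$ follow a sequence $\vec a^{(n)}\in A^{T_n}$ attaining $U^{T_n}_\gamma(\vec a^{(n)})=v^{T_n}_\gamma$ (such a sequence exists since $A^{T_n}$ is finite). The idea is that, because $T_n$ is huge relative to $S_{n-1}$, the frequency distortion caused by the first $S_{n-1}$ periods has negligible effect on the average payoff over block $n$.

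To make this precise I would compare, for $1\le s\le T_n$, the stage payoff at time $S_{n-1}+s$ of $\vec a$ with the ``fresh'' stage payoff of $\vec a^{(n)}$ at time $s$. Writing $\varphi\!\left(\cdot\mid \vec a^{S_{n-1}+s-1}\right)$ as the convex combination of $\varphi\!\left(\cdot\mid \vec a^{S_{n-1}}\right)$ and $\varphi\!\left(\cdot\mid (\vec a^{(n)})^{s-1}\right)$ with weights $\tfrac{S_{n-1}}{S_{n-1}+s-1}$ and $\tfrac{s-1}{S_{n-1}+s-1}$, the two frequencies differ by at most $\tfrac{S_{n-1}}{S_{n-1}+s-1}$, so the two stage payoffs differ by at most $\gamma\,\tfrac{S_{n-1}}{S_{n-1}+s-1}\max_{a}u(a)$. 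Summing over $s$ and using $\sum_{j=0}^{T_n-1}\tfrac1{S_{n-1}+j}\le 1+\ln\tfrac{S_n}{S_{n-1}}$, the total discrepancy over block $n$ is $O\!\big(S_{n-1}\ln(S_n/S_{n-1})\big)$. Since the first $S_{n-1}$ periods contribute at most $S_{n-1}\max_a u(a)$ in absolute value and the remaining stage payoffs of block $n$ sum to $T_n v^{T_n}_\gamma$, dividing through by $S_n$ yields $\big|U^{S_n}_\gamma(\vec a)-v^{T_n}_\gamma\big|=O\!\big(\tfrac{S_{n-1}}{S_n}+\tfrac{S_{n-1}\ln(S_n/S_{n-1})}{S_n}\big)$, which tends to $0$ for the choice $T_n=S_{n-1}^2$ with $S_1$ large (here $\ln(S_n/S_{n-1})=\ln(1+S_{n-1})$ and $S_n\ge S_{n-1}^2$). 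As $T_n\to\infty$ we have $v^{T_n}_\gamma\to\lim_T v^T_\gamma$, hence $U^{S_n}_\gamma(\vec a)\to\lim_T v^T_\gamma$, and therefore $\overline V_\gamma(\vec a)=\limsup_T U^T_\gamma(\vec a)\ge\lim_T v^T_\gamma$, giving $\overline V_\gamma\ge\lim_T v^T_\gamma$.

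The main obstacle is precisely this contamination estimate: naively concatenating good finite histories fails, since the accumulated play in earlier blocks drags down the stage payoffs in later blocks, and the crux is to quantify that loss (the harmonic sum $\sum_j \tfrac1{S_{n-1}+j}$) and to annihilate it by letting block lengths grow fast enough. An alternative would be to invoke the ``stretching'' construction from the proof of Proposition \ref{pro:v_T_converges} and pass to a suitable limit, but the direct concatenation above seems more transparent.
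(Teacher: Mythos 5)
Your proof is correct, and for the nontrivial direction it takes a genuinely different route from the paper. The easy inequality $\overline V_\gamma\le\lim_T v^T_\gamma$ is essentially the paper's first step (you state it slightly more cleanly, using monotonicity of $\limsup$ rather than extracting a subsequence). For the reverse inequality the paper argues by contradiction: it assumes $\lim_T v^T_\gamma\ge\overline V_\gamma+4c$ and manipulates the definitions of $\overline V_\gamma$, $v^T_\gamma$ and $\limsup$ to force $v^{T'}_\gamma\ge\lim_T v^T_\gamma+c$ for all large $T'$, contradicting the convergence established in Proposition \ref{pro:v_T_converges}. You instead exhibit a concrete infinite history $\vec a$ by concatenating finite maximizers $\vec a^{(n)}$ of $U^{T_n}_\gamma$ with super-geometric block lengths $T_n=S_{n-1}^2$; the convex-combination identity $\varphi\left(\cdot\mid\vec a^{S_{n-1}+s-1}\right)=\frac{S_{n-1}}{S_{n-1}+s-1}\varphi\left(\cdot\mid\vec a^{S_{n-1}}\right)+\frac{s-1}{S_{n-1}+s-1}\varphi\left(\cdot\mid(\vec a^{(n)})^{s-1}\right)$ bounds the per-stage contamination by $\gamma\max_a u(a)\cdot\frac{S_{n-1}}{S_{n-1}+s-1}$, and the harmonic sum over the block gives a total deviation of order $S_{n-1}\ln(S_n/S_{n-1})$, which after normalizing by $S_n=S_{n-1}(1+S_{n-1})$ is $O\!\left(\frac{\ln S_{n-1}}{S_{n-1}}\right)\to 0$. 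Both proofs lean on Proposition \ref{pro:v_T_converges} (you use $v^{T_n}_\gamma\to\lim_T v^T_\gamma$), but your argument has the side benefit of being constructive: it actually produces a history $\vec a$ with $\overline V_\gamma(\vec a)=\overline V_\gamma$, i.e., shows the supremum in the definition of $\overline V_\gamma$ is attained, which the paper's by-contradiction argument does not yield. One small point worth making explicit in a write-up: for $s=1$ the convex-combination identity degenerates (the second weight is zero), but the bound $\frac{S_{n-1}}{S_{n-1}+s-1}=1$ still covers that case.
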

	
	\begin{proof}
		Let $\varepsilon >0$ and let $\vec a$ be such that $\overline V_{\gamma}\left(\vec a\right)\geq\overline V_{\gamma}-\varepsilon$.
		Then there is a sequence $\left(T_k\right)_{k\in\mathbb N}$ such that
		\begin{align*}
			v^{T_k}_{\gamma} \geq U^{T_k}_{\gamma}\left(\vec a\right) \geq \overline V_{\gamma}\left(\vec a\right) -\varepsilon \geq \overline V_{\gamma}-2\varepsilon
		\end{align*}
		for all $k\in\mathbb N$.
		Thus, $\lim_{T\rightarrow\infty}v^T_{\gamma} = \lim_{k\rightarrow\infty}v^{T_k}_{\gamma}\geq \overline V_{\gamma}-2\varepsilon$.
		As $\varepsilon>0$ was arbitrary, we have $\lim_{T\rightarrow\infty} v^T_{\gamma} \geq \overline{V}_{\gamma}$.
		
		Assume that there is $c>0$ such that $\lim_{T\rightarrow\infty} v^T_{\gamma} \geq \overline{V}_{\gamma} + 4c$.
		Let $T^0$ be such that $v^{T'}_{\gamma}\geq \lim_{T\rightarrow\infty} v^T_{\gamma} - c$ for all $T'\geq T^0$.
		There is $T_1\geq T^0$ such that
		\begin{align}\label{equ:pro:v_T_limit:1}
			v_{\gamma}^{T'}\geq \lim_{T\rightarrow\infty} v^T_{\gamma} -c \geq \overline{V}_{\gamma} + 3c = \sup_{\vec a\in A^{\infty}}\limsup_T U^{T}_{\gamma}\left(\vec a\right) + 3c
		\end{align}
		for all $T'\geq T_1$.
		For each $\vec a$ there is $T_2\left(\vec a\right)\geq T_1$ such that $\limsup_T U^{T}_{\gamma}\left(\vec a\right)\geq U^{T'}\left(\vec a\right)-c$ for all $T'\geq T_2\left(\vec a\right)$.
		In particular,
		\begin{align}\label{equ:pro:v_T_limit:2}
			\sup_{\vec a\in A^{\infty}}\limsup_T U^{T}_{\gamma}\left(\vec a\right) + 3c \geq \sup_{\vec a\in A^{\infty}} U^{T_2\left(\vec a\right)}_{\gamma}\left(\vec a\right) +2c \geq  \lim_{T\rightarrow\infty} v^T_{\gamma} +c,
		\end{align}
		where the last inequality holds since $T_2\left(\vec\alpha\right)\geq T_1\geq T_0$ for all $\vec a\in A^{\infty}$.
		From (\ref{equ:pro:v_T_limit:1}) and (\ref{equ:pro:v_T_limit:2}) we obtain $v_{\gamma}^{T'}\geq \lim_{T\rightarrow\infty} v^T_{\gamma} +c$ for all $T'\geq T_1$,
        which is impossible as $\left(v^T_{\gamma} \right)$ converges by Proposition \ref{pro:v_T_converges}.
	\end{proof}

	\subsection{Establishing suboptimality of $V^*_{\gamma}$}
	
	The next main result shows that $\overline V_{\gamma}$ cannot be achieved by any stationary strategy, or, to be more precise, that the optimal stationary strategy achieves an average payoff that is strictly less than $\overline V_{\gamma}$.
	In the proof we start from a stationary history $\vec a$ and then iteratively construct a sequence of histories by only switching two actions in each step.
	We show that the effect of these switches is significant, i.e., that for the final history $\vec b$ it holds that $\overline V_{\gamma}\big(\vec b\big)>V^*_{\gamma}\left(\vec a\right)+\eta$ for some constant $\eta>0$.
	We start with the following lemma that captures the effect of such pairwise switches.
	
	\begin{lemma}\label{lem:swap_general}
		Let $a,b\in A$ and let $\vec{a},\vec{b}\in A^{\infty}$ be such that there are $s>t$ with $\vec a_t=\vec b_s=a$, $\vec a_s=\vec b_t=b$, $\vec a_{t'}=\vec b_{t'}$ for all $t'\neq t,s$, and $\vec a_{t'}\neq a$ for all $t'=t,\ldots, s-1$.
		Then
		\begin{align*}
			U^T_{\gamma}(\vec{b}) -  U^T_{\gamma}(\vec{a}) \geq \frac{\gamma\left(s-t\right)}{\left(s-1\right)T}\left(\f\left(a\middle\vert \vec a^{t-1}\right) u(a) - \f\left(b\middle\vert \vec a^{t-1}\right) u(b)\right)
		\end{align*}
		for all $T\geq s$.
	\end{lemma}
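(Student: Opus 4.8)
The plan is to show that replacing $\vec a$ by $\vec b$ affects the horizon-$T$ average utility only through the periods in the block $\{t,t+1,\dots,s\}$, that the two ``level'' terms $u(a)$ and $u(b)$ contributed by the swapped periods $t$ and $s$ cancel out, and that the surviving frequency terms reduce to the claimed main term plus a manifestly non-negative remainder. First I would localise the difference: since $\vec a$ and $\vec b$ agree at every period outside $\{t,s\}$, the prefixes $\vec a^{r-1}$ and $\vec b^{r-1}$ coincide for every $r\le t$, and for every $r>s$ both prefixes contain exactly one $a$ and one $b$ among positions $t,s$ (but in opposite order), so they induce the same empirical frequencies $\f(\cdot\mid\cdot)$. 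Hence the stage payoffs of $\vec a$ and $\vec b$ agree at all periods outside $\{t,\dots,s\}$, and for every $T\ge s$,
\[
T\bigl(U^T_{\gamma}(\vec b)-U^T_{\gamma}(\vec a)\bigr)=\sum_{r=t}^{s}\Bigl[\bigl(1-\gamma\f(\vec b_r\mid\vec b^{r-1})\bigr)u(\vec b_r)-\bigl(1-\gamma\f(\vec a_r\mid\vec a^{r-1})\bigr)u(\vec a_r)\Bigr],
\]
so only these $s-t+1$ summands have to be controlled.

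I would then evaluate the summands according to their position. At $r=t$, using $\vec b^{t-1}=\vec a^{t-1}$, the summand is $\bigl(1-\gamma\f(b\mid\vec a^{t-1})\bigr)u(b)-\bigl(1-\gamma\f(a\mid\vec a^{t-1})\bigr)u(a)$. For $t<r<s$ I would use the block hypothesis that $b$ does not occur in $\vec a$ at the periods $t+1,\dots,s-1$: the prefixes $\vec a^{r-1}$ and $\vec b^{r-1}$ differ only at position $t$, where $\vec a$ carries an $a$ and $\vec b$ a $b$, so if $\vec a_r\notin\{a,b\}$ the summand vanishes, whereas if $\vec a_r=a$ then $\f(a\mid\vec a^{r-1})=\f(a\mid\vec b^{r-1})+\tfrac1{r-1}$ and the summand equals $\gamma u(a)/(r-1)\ge 0$. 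At $r=s$ the same hypothesis pins down the two relevant frequencies: the number of $b$'s in $\vec a^{s-1}$ has not changed since period $t-1$, so $\f(b\mid\vec a^{s-1})=\tfrac{t-1}{s-1}\f(b\mid\vec a^{t-1})$, and the number of $a$'s in $\vec b^{s-1}$ equals the number in $\vec a^{t-1}$ plus the number $m$ of periods $r\in(t,s)$ with $\vec a_r=a$, so $\f(a\mid\vec b^{s-1})=\tfrac{t-1}{s-1}\f(a\mid\vec a^{t-1})+\tfrac{m}{s-1}$.

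Summing the $s-t+1$ contributions, the bare levels $u(b)-u(a)$ (from period $t$) and $u(a)-u(b)$ (from period $s$) cancel, and collecting the frequency terms — including the $m$ non-negative bonuses coming from the periods in $(t,s)$ where $a$ is played — gives
\[
T\bigl(U^T_{\gamma}(\vec b)-U^T_{\gamma}(\vec a)\bigr)=\frac{\gamma(s-t)}{s-1}\Bigl(\f(a\mid\vec a^{t-1})u(a)-\f(b\mid\vec a^{t-1})u(b)\Bigr)+\gamma u(a)\sum_{r}\Bigl(\tfrac1{r-1}-\tfrac1{s-1}\Bigr),
\]
where the sum runs over the $m$ periods $r$ with $t<r<s$ and $\vec a_r=a$. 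Since $r<s$, each summand $\tfrac1{r-1}-\tfrac1{s-1}$ is non-negative, hence the last term is $\ge 0$, and dividing by $T$ yields the asserted inequality.

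The only delicate point is the frequency bookkeeping at period $s$ together with the cancellation above: one must notice that the deficit $-\gamma m\,u(a)/(s-1)$ produced at period $s$ by the $a$'s deposited in the block is exactly refunded, and with a favourable sign, by the per-period bonuses $\gamma u(a)/(r-1)$ earned at those same $m$ periods, since $r-1\le s-1$ there. Everything else is the routine algebra of splitting each frequency $\f(\cdot\mid\vec a^{r-1})$ into the mass accumulated up to period $t-1$ and the mass accumulated afterwards.
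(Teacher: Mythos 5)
Your period-by-period decomposition is correct and the final inequality follows exactly as you describe, but there is a discrepancy between the hypothesis you actually use and the one printed in the lemma. You invoke ``the block hypothesis that $b$ does not occur in $\vec a$ at the periods $t+1,\dots,s-1$,'' whereas the printed condition is $\vec a_{t'}\neq a$ for $t'$ in that range. These are not interchangeable, and your own bookkeeping makes the asymmetry transparent: at an intermediate period $t<r<s$ the prefixes $\vec a^{r-1}$ and $\vec b^{r-1}$ differ only in that $\vec a$ carries an extra $a$ and $\vec b$ an extra $b$ at position $t$, so the stage-payoff difference is $+\gamma u(a)/(r-1)\geq 0$ when $\vec a_r=a$ (harmless for a lower bound) but $-\gamma u(b)/(r-1)\leq 0$ when $\vec a_r=b$ (which ruins it). Under the printed ``no $a$ in the block'' condition the lemma can actually fail: take $A=\{a,b\}$, $u(a)=1$, $u(b)=\tfrac1{10}$, $\gamma=1$, $\vec a^7=(a,a,a,a,a,b,b)$, $t=5$, $s=T=7$; then $T\bigl(U^T_\gamma(\vec b)-U^T_\gamma(\vec a)\bigr)=\tfrac{33}{100}<\tfrac13$, the claimed bound. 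So the condition you (perhaps inadvertently) adopted, ``no $b$ in the open block,'' is in fact the one that makes the statement true, and you should state it explicitly rather than the printed one.

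Your argument also differs in route from the paper's, and the difference is instructive. The paper writes the payoff gap as an exact equality involving only the four stage payoffs at periods $t$ and $s$ and then applies two counting identities; that opening equality already presupposes $\vec a_r\notin\{a,b\}$ for every $t<r<s$, which the printed hypothesis does not give. Your decomposition is more careful at precisely this point: it isolates each intermediate period, shows the contribution vanishes when $\vec a_r\notin\{a,b\}$, and shows the per-period bonus $\gamma u(a)/(r-1)$ earned at each $a$ in the block more than covers the deficit $-\gamma m\,u(a)/(s-1)$ those $a$'s create at period $s$. Where the lemma is invoked in the paper (the proof that $\overline V_\gamma>V^*_\gamma$), the construction guarantees that the block contains neither $a=\overline a$ nor $b=\underline a$, so both your hypothesis and the printed one hold there; the problem is only in how the lemma and its proof are stated, and your version, with its hypothesis made explicit, is the sound one.
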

	
	\begin{proof}
		By the conditions on $\vec a$ and $\vec b$,
		\begin{align*}
			\sum_{r=1}^{s-1}\mathbbm 1_{\vec a_r=b} \geq \sum_{r=1}^{t-1}\mathbbm 1_{\vec b_r=b} = \sum_{r=1}^{t-1}\mathbbm 1_{\vec a_r=b}
		\end{align*}
		and
		\begin{align*}
			\sum_{r=1}^{s-1}\mathbbm 1_{\vec b_r=a} =  \sum_{r=1}^{t-1}\mathbbm 1_{\vec a_r=a}.
		\end{align*}
		Thus,
		\begin{align*}
			T\left(U^T_{\gamma}(\vec{b}) -  U^T_{\gamma}(\vec{a})\right) &= \big(1-\gamma\f\big(b\big\vert\vec b^{t-1}\big)\big)u(b) + \big(1-\gamma\f\big(a\big\vert\vec b^{s-1}\big)\big)u(a)\\
			& \qquad\qquad - \left(1-\gamma\f\left(a\middle\vert\vec a^{t-1}\right)\right)u(a) - \left(1-\gamma\f\left(b\middle\vert\vec a^{s-1}\right)\right)u(b)\\
			&= \gamma\left(\big(\f\left(b\middle\vert\vec a^{s-1}\right) - \f\big(b\big\vert\vec b^{t-1}\big)\big) u(b) + \big(\f\left(a\middle\vert\vec a^{t-1}\right) - \f\big(a\big\vert\vec b^{s-1}\big) \big) u(a)\right)\\
			&=\gamma \left(\left( \frac{1}{s-1}\sum_{r=1}^{s-1}\mathbbm 1_{\vec a_r=b} - \frac{1}{t-1}\sum_{r=1}^{t-1}\mathbbm 1_{\vec b_r=b}\right)u(b)\right.\\
			&\qquad\qquad \left. +\left(\frac{1}{t-1}\sum_{r=1}^{t-1}\mathbbm 1_{\vec a_r=a} - \frac{1}{s-1}\sum_{r=1}^{s-1}\mathbbm 1_{\vec b_r=a} \right)u(a)
			\right)\\
			&\geq \frac{\gamma}{(s-1)(t-1)}\left(\left(t-s\right) \sum_{r=1}^{t-1}\mathbbm 1_{\vec a_r=b}u(b) + \left(s-t\right)\sum_{r=1}^{t-1}\mathbbm 1_{\vec a_r=a}u(a)\right)\\
			&= \frac{\gamma\left(s-t\right)}{s-1}\left(\f\left(a\mid \vec a^{t-1}\right) u(a) - \f\left(b\mid \vec a^{t-1}\right) u(b)\right)
		\end{align*}
		as required.
	\end{proof}
	
	\noindent Lemma \ref{lem:swap_general} provides a sufficient condition to increase all future payoffs by swapping the position of an action $b\in A$ with its next previous occurrence of $a\in A$ within $\vec a$.
	Such a switch increases the average payoff if
	\begin{align*}
		\f\left(a\mid \vec a^{t-1}\right) u(a) - \f\left(b\mid \vec a^{t-1}\right) u(b) >0.
	\end{align*}
	Intuitively, actions with high basic utility will be shifted towards the back, so that they can be played with small punishment.

	\begin{theorem}
		Let $A$ be a finite set of actions.
		Then $\overline{V}_{\gamma}>V^*_{\gamma}$ if and only if for the optimal stationary frequency $\varphi\in\Delta(A)$ there are two actions $a,b\in A$ with $\varphi(a),\varphi(b)>0$ and $u(a)\neq u(b)$.
	\end{theorem}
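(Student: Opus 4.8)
The plan is to first turn the basic-payoff condition into a single scalar inequality and then prove the two implications separately, the interesting one being "different payoffs $\Rightarrow\overline V_\gamma>V^*_\gamma$".

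\emph{Reformulation.} Let $A^*=\{a:\varphi(a)>0\}$ be the support of the optimal stationary frequency and let $g\in A^*$ maximise $u$ on $A^*$. The first-order conditions from the proof of Proposition~\ref{pro:stat} give a multiplier $\mu$ with $(1-2\gamma\varphi(a))u(a)=\mu$ for all $a\in A^*$, hence $\varphi(a)u(a)=\tfrac{u(a)-\mu}{2\gamma}$ is strictly increasing in $u(a)$ on $A^*$; in particular $\varphi(g)u(g)=\max_{a\in A^*}\varphi(a)u(a)$. Put $\delta:=\varphi(g)u(g)-\sum_{a\in A^*}\varphi(a)^2u(a)=\max_{a\in A^*}\varphi(a)u(a)-\sum_{a\in A^*}\varphi(a)\,(\varphi(a)u(a))$. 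Since the right-hand side is the gap between a maximum and a convex combination of the same numbers, $\delta\ge 0$, with $\delta=0$ iff all $\varphi(a)u(a)$, $a\in A^*$, coincide, which by the displayed identity is iff $u$ is constant on $A^*$. Using $V^*_\gamma=\mu+\gamma\sum_{a\in A^*}\varphi(a)^2u(a)$ (immediate from (\ref{equ:stationary_limit}) and the first-order conditions) one also gets $u(g)\bigl(1-\gamma\varphi(g)\bigr)-V^*_\gamma=\gamma\delta$. So the theorem is equivalent to: $\overline V_\gamma>V^*_\gamma\iff\delta>0$.

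\emph{If $\delta>0$ then $\overline V_\gamma>V^*_\gamma$.} Starting from an optimal stationary history $\vec a$, I would build a history $\vec b$ by a sequence of switches as in Lemma~\ref{lem:swap_general}: fix a small $\alpha>0$, set $T_j=\lceil(1+\alpha)^j\rceil$, and inside each window $(T_{j-1},T_j]$ move every occurrence of $g$ (more generally, of every top-payoff action) to the right, past all later occurrences in that window of actions of strictly smaller basic payoff, so that in $\vec b$ each window ends with a block of $g$'s. Each such switch increases all sufficiently late partial averages, because $\varphi(g\mid\vec a^{t-1})u(g)-\varphi(b\mid\vec a^{t-1})u(b)\to\varphi(g)u(g)-\varphi(b)u(b)>0$. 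As this is a rearrangement, $\varphi(a\mid\vec b^{T_k})=\varphi(a)$ for every $a$; writing $U^{T_k}_\gamma(\vec b)=\sum_a\varphi(a)u(a)\bigl(1-\gamma\overline{\varphi}^{\,T_k}_a\bigr)$, where $\overline{\varphi}^{\,T_k}_a$ is the time-average over the first $k$ windows of the frequency of $a$ at the moments $a$ is played, a direct computation of the partial averages over the $g$-bursts gives $\overline{\varphi}^{\,T_k}_g\to\varphi(g)\bigl(1-(1-\varphi(g))\alpha/2\bigr)$ and $\overline{\varphi}^{\,T_k}_c\to\varphi(c)\bigl(1+\varphi(g)\alpha/2\bigr)$ for $c\neq g$. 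Substituting, the leading terms collapse to $U^{T_k}_\gamma(\vec b)\to V^*_\gamma+\tfrac{\gamma\varphi(g)\alpha}{2}\,\delta$, whence $\overline V_\gamma\ge V^*_\gamma+\tfrac{\gamma\varphi(g)\alpha}{2}\,\delta>V^*_\gamma$. Intuitively: short recurring bursts of the best action are played at essentially the stationary punishment $\gamma\varphi(g)$, yielding stage payoff $u(g)(1-\gamma\varphi(g))=V^*_\gamma+\gamma\delta$, which beats the stationary average whenever $\delta>0$.

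\emph{Converse.} I would argue the contrapositive: if $u$ is constant on $A^*$, then the first-order conditions force $A^*=\{a:u(a)=\max_b u(b)\}$, $\varphi$ uniform on it, and $V^*_\gamma=\max_b u(b)\,(1-\gamma/|A^*|)$; one must show $\overline V_\gamma\le V^*_\gamma$. Since $\overline V_\gamma=\lim_T v^T_\gamma$ by Propositions~\ref{pro:v_T_converges} and~\ref{pro:v_T_limit}, it suffices to prove $v^T_\gamma\le V^*_\gamma+o(1)$. Writing $T\,U^T_\gamma(\vec a)=\sum_a u(a)N_a-\gamma\sum_{t=2}^{T}\tfrac{n_{a_t}(t-1)u(a_t)}{t-1}$ (with $N_a$ the number of plays of $a$ and $n_a(s)$ its count up to $s$), one needs the subtracted sum bounded below by $\gamma\sum_a u(a)N_a^2/T-o(T)$, \emph{uniformly over all histories}. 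The crude estimate $\sum_{j\le N_a}\tfrac{j-1}{\tau^a_j-1}\ge\tfrac{N_a(N_a-1)}{2(T-1)}$ loses a factor $2$ because it pretends all plays of every action sit near time $T$, which cannot happen simultaneously; the fix is a rearrangement/Abel-summation argument (using $\sum_{t\le s}n_{a_t}(t-1)u(a_t)=\sum_a u(a)\binom{n_a(s)}{2}$ against the decreasing weights $1/(t-1)$) showing that, for prescribed counts, the subtracted sum is minimised by a round-robin-type schedule and equals the desired quantity, after which optimising over the counts yields exactly $V^*_\gamma$.

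\textbf{Main obstacle.} In the "if" direction the work is bookkeeping: making the passage from Riemann sums to integrals uniform across the infinitely many windows and checking that switches performed earlier do not spoil the sign condition $\varphi(g\mid\cdot)u(g)>\varphi(b\mid\cdot)u(b)$ of later switches. The genuinely delicate point is the converse, namely the tight, history-independent lower bound on $\sum_t\tfrac{n_{a_t}(t-1)u(a_t)}{t-1}$ that closes the factor-of-$2$ gap; equivalently, one must solve the underlying scheduling optimisation and verify that when $u$ is constant on $A^*$ its value is $V^*_\gamma$ (a potential-function argument paralleling the proof of Theorem~\ref{thm:liminf_stationary} may be the cleanest route).
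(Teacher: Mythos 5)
Your proposal takes a genuinely different route from the paper's for the sufficiency direction, and it raises a legitimate concern about necessity that the paper itself glosses over.

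\emph{Sufficiency.} The paper and your sketch share the same starting point (an optimal stationary history), the same geometric time partition, and the same engine (Lemma~\ref{lem:swap_general}). But the paper runs the swap lemma \emph{combinatorially}: it picks a single pair $\underline a\in\argmin_{A^*}u$, $\overline a\in\argmax_{A^*}u$, iteratively performs all beneficial swaps, establishes via two discrete claims a lower bound on the number of swaps left available near the end of the horizon, and thereby shows $v^T_{\gamma}\ge V^*_{\gamma}+\eta$ before invoking Proposition~\ref{pro:v_T_limit}. You instead construct one explicit history (bunch all $g$'s at the end of each window) and evaluate its limsup by a continuum/Riemann-sum computation; your identity $U^{T_k}_{\gamma}(\vec b)\to V^*_{\gamma}+\tfrac{\gamma\varphi(g)\alpha}{2}\delta$ is algebraically correct (I checked $u(g)(1-\gamma\varphi(g))-V^*_{\gamma}=\gamma\delta$ and the leading-order $\overline\varphi$ formulas against the first-order conditions). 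Your route is cleaner if the asymptotics can be made uniform; the paper's is more pedestrian but avoids any passage to the continuum and handles rationality of $\varphi$ explicitly. The open issues you flag — uniformity of the Riemann approximations across windows and preservation of the sign condition for later swaps — are exactly the places where the paper's discrete bookkeeping (Claims~1 and~2, the choice of $T_1$, $T^*$, and the rationalization of $\varphi$) does its work, so closing them is more than ``bookkeeping,'' but the approach is sound.

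\emph{Necessity.} Here is where you part ways with the paper most sharply, and I think your skepticism is well placed. The paper disposes of this direction with ``Necessity is clear'' and proves nothing. Your observation that the naive term-by-term bound
$\sum_{j\le N_a}\tfrac{j-1}{\tau^a_j-1}\ge\tfrac{N_a(N_a-1)}{2(T-1)}$
loses a factor of $2$ is correct, and indeed one needs the sharper identity $\sum_{t\le s}n_{a_t}(t-1)u(a_t)=\sum_a u(a)\binom{n_a(s)}{2}$ combined with Abel summation against the weights $1/(t-1)$ to recover the tight bound. Carrying that out in the case $u$ constant on $A$ works and yields exactly $V^*_{\gamma}$, but you would still need to handle histories that play actions outside $A^*$ with $u<u^*$, and you have not actually executed the scheduling optimisation you invoke. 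So your necessity direction is an identified plan, not a proof — but that is also true of the paper, which leaves the same gap unaddressed. If you pursue it, make sure the final bound holds uniformly over histories including those that play low-payoff actions, and then pass to $\overline V_{\gamma}=\lim_T v^T_{\gamma}$ via Propositions~\ref{pro:v_T_converges} and~\ref{pro:v_T_limit} as you propose.
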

	
	\begin{proof}
		Necessity is clear. We show the sufficiency.
		Let $\vec a$ be an optimal stationary strategy, write $\varphi(a)$ for $\varphi\left(a\middle\vert\vec a\right)$, and denote by $A^*$ be the set of actions $a\in A$ with $\varphi\left(a\right)>0$.
		By Proposition \ref{pro:stat},
		\begin{align*}
			\varphi\left(a\right)u(a) = \frac{2\gamma - \LV A^*\RV + \sum_{b\in A^*}\frac{u(a)}{u(b)}}{2\gamma \sum_{b\in A^*}\frac{1}{u(b)}}=\frac{u(a)}{2\gamma}-\frac{\LV A\RV -2\gamma}{2\gamma \sum_{b\in A^*}\frac{1}{u(b)}},
		\end{align*}
		which implies that $\varphi(a)\geq\varphi(b)$ if and only if $u(a)\geq u(b)$.
		Moreover, we have
		\begin{align}\label{equ:phi_u_difference}
			\varphi(a)u(a)-\varphi(b)u(b) = \frac{u(a)-u(b)}{2\gamma}
		\end{align}
		for all $a,b\in A^*$.
		As $V^*_{\gamma}$ and $\overline V_{\gamma}$ depend continuously on $\gamma$ and $\left\{u(a)\right\}_{a\in A}$, and $\varphi\left(\cdot\right)\in\mathbb Q^A$ if $\gamma, u(a)\in\mathbb Q$ for all $a\in A$ by Proposition \ref{pro:stat}, we can assume without loss of generality that $\varphi\left(\cdot\right)\in\mathbb Q^A$.
		Thus, there are integers $m_a\in \mathbb N$ for all $a\in A$ such that $\varphi\left(a\right)=\frac{m_a}{m}$, where $m=\sum_{a\in A^*}m_a$.
		Again without loss of generality we can assume that $\vec a$ is the infinite repetition of a sequence of length $m$ in which each action $a\in A^*$ is played exactly $m_a$ times.
		Let $\underline a\in\argmin_{a\in A^*}u(a)$ and $\overline a\in\argmax_{a\in A^*}u(a)$.
		By the premise of the theorem, $u\left(\underline a\right)<u\left(\overline a\right)$, so that $m_{\overline a}>m_{\underline a}$.
		\medskip
		
		\noindent\textbf{Claim 1:} \emph{For all $t\geq 2$ and all $a\in A^*$ it holds that}
		\begin{align*}
			\varphi\left(a\right) - \frac{m}{t-1}\leq\varphi\left(a\middle\vert\vec a^{t-1}\right)\leq\varphi\left(a\right)+ \frac{m}{t-1}.
		\end{align*}
		\begin{proof}
			Let $t\geq 1$.
			There is $k\in\mathbb N$ such that $km\leq t\leq (k+1)m$.
			At $t$, $a$ was chosen at least $km_a$ times, but no more than $km_a+(t-km)$ times.
			Thus,
			\begin{align*}
				\varphi\left(a\mid\vec a^t\right)\geq \frac{km_a}{t}=\frac{km}{t}\varphi(a)=\varphi(a)-\frac{t-km}{t}\geq\varphi(a)-\frac{(k+1)m-km}{t} =\varphi(a)-\frac{m}{t}
			\end{align*}
			and
			\begin{align*}
				\varphi\left(a\mid\vec a^t\right)\leq \frac{km_a + t-km}{t} =\frac{km}{t}\varphi(a) + \frac{t-km}{t}\leq\varphi(a)+\frac{(k+1)m-km}{t} =\varphi(a)+\frac{m}{t}.
			\end{align*}
			Shifting from $t$ to $t-1$ completes the proof.
		\end{proof}

		\noindent Define the following constants
		\begin{align*}
			\delta &= \frac{u\left(\overline a\right)-u\left(\underline a\right)}{4\gamma},\\
			q' &= \frac{\varphi\left(\underline a\right)\left(u\left(\underline a\right) + u\left(\overline a\right)\right)}{\delta +\varphi\left(\underline a\right)\left(u\left(\underline a\right) + u\left(\underline b\right)\right)},\\
			q &= \max\left(\frac{3}{4},q'\right),\\
			\eta &= \frac{1-q}{64}\varphi\left(\underline a\right)\gamma\delta,
		\end{align*}
		and observe that $\delta>0$, so that $q<1$ and $\eta>0$.
		Let
		\begin{align*}
			\varepsilon \leq \min\left(\frac{\delta}{u\left(\underline a\right) + u\left(\overline a\right), \frac{1}{2}\eta}\right)
		\end{align*}
		and let
		\begin{align*}
			T_1 \geq \frac{2m-1 + (m+1)\left(\delta +\varphi\left(\underline a\right)\left(u\left(\underline a\right) + u\left(\overline a\right)\right)\right)}{\delta}
		\end{align*}
		be a multiple of $m$ and be such that $\LV\varphi\left(a\middle\vert\vec a^t\right)-\varphi(a)\RV\leq\varepsilon$ for all $a\in A$ and all $t\geq T_1$.
        Finally, let
		\begin{align*}
			T^*\geq\max\left(2T_1,T_1+4m, T_1+\frac{4\left(\varphi\left(\underline a\right)+2m\right)}{\left(1-q\right)\varphi\left(\underline a\right)}\right)
		\end{align*}
		be such that $\LV U^T_{\gamma}\left(\vec a\right) - V^*_{\gamma}\RV \leq \eta$ for all $T\geq T^*$.
		We show that there is an infinite sequence of $T$'s with $v^T\geq V^*+\eta$ for all $T\geq T^*$.
		For this purpose we will construct for any $T$ in this sequence a history $\vec b$ with $U^T_{\gamma}\left(\vec b\right)\geq V^*+ \eta$.
		By Proposition \ref{pro:v_T_limit}, this is sufficient to prove the theorem.
		
		So, let $T\geq T^*$ be a multiple of $m$.
		We construct $\vec b$ by iteratively switching actions $\underline a$ and $\overline a$ between $T_1+1$ and $T$.
		Specifically, let $\vec c\in\A ^{\infty}$ be a history that has been reached after some switches, and let $T_1+1\leq s\leq T$ be the first occurrence of $\underline a$ such that the period of the last previous occurrence of $\overline a$, denoted by $t<s$, satisfies
		\begin{align}\label{equ:beneficial_switch}
			\f\left(\overline a\middle\vert \vec c^{t-1}\right) u\left(\overline a\right) - \f\left(\underline a\middle\vert \vec c^{t-1}\right) u\left(\underline a\right)\geq\delta.
		\end{align}
		If such $t,s\leq T$ do not exist, let $\vec b=\vec c$.
		Otherwise, note that (\ref{equ:beneficial_switch}) does not depend on $s$, so that the minimality of $s$ implies that there are no instances of $\underline a$ between $t+1$ and $s-1$.
		Let $\vec d$ be such that $\vec d_t=\underline a$, $\vec d_s=\overline a$, and $\vec d_{t'}=\vec c_{t'}$ for all $t'\neq t,s$.
		By Lemma \ref{lem:swap_general},
		\begin{align*}
			U^T_{\gamma}\left(\vec d\right) - U^T_{\gamma}\left(\vec c\right) \geq \frac{\gamma\left(s-t\right)}{\left(s-1\right)T}\delta.
		\end{align*}
		Thus, we say that the switch of $\overline a$ and $\underline a$ in $\vec c$ is \emph{beneficial}.
		Repeat the procedure with $\vec d$ and continue as long as possible.
		Note that all beneficial switches will shift occurrences of $\underline a$ towards the beginning, i.e., $T_1$, and occurrences of $\overline a$ towards the end, i.e., $T$.
		Thus, for each finite $T\geq T^*$ there is a finite number of beneficial switches, so $\vec b$ is well-defined.
		Moreover, for any $T_1+1\leq t\leq T-m$ it holds that the sequence $\big(\vec b_{t+1},\ldots,\vec b_{t+m}\big)$ contains exactly $m_{\underline a}+m_{\overline a}$ periods in which either $\underline a$ or $\overline a$ are being chosen.
		\medskip
		
		\noindent\textbf{Claim 2:} \emph{In history $\vec b$, the last occurrence of $\underline a$ until $T$ is at some $s\leq T_1 + \left(T-T_1\right)q$.}	
		
		\begin{proof}
			Suppose first that in history $\vec b$, there is between $T_1+1$ and $T$ no occurrence of $\overline a$ before $\underline a$.
			Let $s$ be the last occurrence of $\bar a$ and let $s^*$ be the largest multiple of $m$ with $s^*\leq s\leq T$.
			As each occurrence of $\overline a$ between $T_1+1$ and $s^*$ has been replaced by an occurrence of $\underline a$ that originally occurred after $s^*$, it holds that $\varphi\left(\underline a\right)\left(T-s^*\right) \geq \varphi\left(\overline a\right)\left(s^*-T_1\right)$.
			Thus,
			\begin{align*}
				2\varphi\left(\underline a\right)\left(s^*-T_1\right) \leq\left(\varphi\left(\underline a\right)+\varphi\left(\overline a\right)\right)\left(s^*-T_1\right) \leq \varphi\left(\underline a\right)\left(T-T_1\right),
			\end{align*}
			which means that $s^*\leq\frac{1}{2}\left(T_1+T\right)$.
			Hence, since by construction $T\geq T^*\geq T_1+4m$, we find that
			\begin{align*}
				s \leq s^*+m\leq \frac{1}{2}\left(T_1+T\right) + \frac{1}{4}\left(T-T_1\right) \leq T_1+q\left(T-T_1\right),
			\end{align*}
			as required.
			
			Suppose next that after all beneficial switches have been made there is at least one occurrence of $\overline a$ before the last occurrence of $\underline a$.
			Let $t$ be the period of said occurrence of $\overline a$.
			Then, since by the definition of $\vec b$ the switch of $\overline a$ with the last occurrence of $\underline a$ is not beneficial, we have
			\begin{align}\label{equ:switch_not_ben}
				\delta\geq \varphi\big(\overline a\big\vert \vec b^{t-1}\big)u\left(\overline a\right)-\varphi\big(\underline a\big\vert\vec b^{t-1}\big)u\left(\underline a\right).
			\end{align}
			As there is only one occurrence of $\underline a$ in $\vec b$ between $t$ and $T$, we have that $(t-1)\varphi\big(\underline a\big\vert\vec b^{t-1}\big) = T\varphi\left(\underline a\right) -1$, so that
			\begin{align}\label{equ:varphi_a_final}
				\varphi\big(\underline a\big|\vec b^{t-1}\big) = \frac{T}{t-1}\varphi\left(\underline a\right) -\frac{1}{t-1}.
			\end{align}
			Similarly,
			\begin{align*}
				\left(t-1\right)\varphi\big(\overline a\big\vert\vec b^{t-1}\big) = (t-1)\varphi\left(\overline a\middle\vert\vec a^{t-1}\right) - \left(\left(T\varphi\left(\underline a\right) -1\right) - (t-1)\varphi\left(\underline a\middle\vert\vec a^{t-1}\right)\right),
			\end{align*}
			where the expression in the round brackets describes the number of occurrences of $\overline a$ that originally lay between $T_1+1$ and $t$ but have been switched away for some $\underline a$ that originally occurred after $t$.
			Using the bounds that we derived in Claim 1, we find that
			\begin{align*}
				\left(t-1\right)\varphi\big(\overline a\big\vert\vec b^{t-1}\big)
				&\geq (t-1)\left(\varphi\left(\overline a\right)-\frac{m}{t-1}\right) - \left(T\varphi\left(\underline a\right) -1 - (t-1)\left(\varphi\left(\underline a\right)-\frac{m}{t-1}\right)\right)\\
				&= (t-1)\varphi\left(\overline b\right) - \left(T-(t-1)\right)\varphi\left(\underline a\right) - \left(2m-1\right)
			\end{align*}
			Therefore
			\begin{align*}
				\varphi\big(\overline a\big|\vec b^{t-1}\big)\geq \varphi\left(\overline a\right) - \frac{T-(t-1)}{t-1}\varphi\left(\underline a\right)-\frac{2m-1}{t-1}.
			\end{align*}
			This, together with (\ref{equ:phi_u_difference}), (\ref{equ:switch_not_ben}), and (\ref{equ:varphi_a_final}) shows that
			\begin{align*}
				\delta &\geq \left(\varphi\left(\overline a\right) - \frac{T-(t-1)}{t-1}\varphi\left(\underline a\right)-\frac{2m-1}{t-1}\right)u\left(\overline a\right) - \left(\frac{T}{t-1}\varphi\left(\underline a\right) -\frac{1}{t-1}\right)u\left(\underline a\right) \\
				&= \varphi\left(\overline a\right)u\left(\overline a\right) - \varphi\left(\underline a\right)u\left(\underline a\right) - \frac{T-(t-1)}{t-1}\varphi\left(\underline a\right) \left(u\left(\overline a\right) + u\left(\underline a\right)\right) -\frac{2m-1}{t-1}u\left(\overline a\right)\\
				&= \frac{u\left(\overline a\right)-u\left(\underline a\right)}{2\gamma} -  \varphi\left(\underline a\right)u\left(\underline a\right) - \frac{T-(t-1)}{t-1}\varphi\left(\underline a\right) \left(u\left(\overline a\right) + u\left(\underline a\right)\right) -\frac{2m-1}{t-1}u\left(\overline a\right)\\
				&= 2\delta -  \frac{T-(t-1)}{t-1}\varphi\left(\underline a\right) \left(u\left(\overline a\right) + u\left(\underline a\right)\right) -\frac{2m-1}{t-1}u\left(\overline a\right).
			\end{align*}
			Thus,
			\begin{align*}
				\delta\leq  \frac{T-(t-1)}{t-1}\varphi\left(\underline a\right) \left(u\left(\overline a\right) + u\left(\underline a\right)\right)  +\frac{2m-1}{t-1}u\left(\overline a\right)
			\end{align*}
			and solving for $t$ delivers
			\begin{align*}
				t &\leq T\frac{\varphi\left(\underline a\right) \left(u\left(\overline a\right) + u\left(\underline a\right)\right)}{\delta + \varphi\left(\underline a\right) \left(u\left(\overline a\right) + u\left(\underline a\right)\right)} + \frac{2m -1}{\delta + \varphi\left(\underline a\right) \left(u\left(\overline a\right) + u\left(\underline a\right)\right)}+1\\
				&= Tq' + \frac{2m -1}{\delta + \varphi\left(\underline a\right) \left(u\left(\overline a\right) + u\left(\underline a\right)\right)}+1.
			\end{align*}
			Let $s$ be the period of the last occurrence of $\underline a$ in $\vec b$.
			Then $s\leq t+m$.
			Indeed, the sequence $\big(\vec b_{t+1},\ldots,\vec b_{t+m}\big)$ contains at least $m_{\underline a}+m_{\overline a}$ periods in which either $\underline a$ or $\overline a$ is chosen, and at the first such period $\underline a$ is chosen by construction.
			Therefore,
			\begin{align*}
				s &\leq Tq'+ \frac{2m -1}{\delta + \varphi\left(\underline a\right) \left(u\left(\overline a\right) + u\left(\underline a\right)\right)} +1+m\\
				&= \left(T-T_1\right) q' + T_1 -\left(1-q'\right)T_1 + \frac{2m -1}{\delta + \varphi\left(\underline a\right) \left(u\left(\overline a\right) + u\left(\underline a\right)\right)}+1+m\\
				&= \left(T-T_1\right) q' + T_1 + \frac{-\delta T_1 + 2m -1 + \left(m+1\right)\left(\delta + \varphi\left(\underline a\right) \left(u\left(\overline a\right) + u\left(\underline a\right)\right)\right)}{\delta + \varphi\left(\underline a\right) \left(u\left(\overline a\right) + u\left(\underline a\right)\right)}\\
				&\leq \left(T-T_1\right) q + T_1,
			\end{align*}
			where in the last step we use the lower bound for $T_1$ and $q\geq q'$. This concludes the proof of the claim.
		\end{proof}
		
		\noindent So, in history $\vec b$ there are no occurrences of $\underline a$ between $T_1+q\left(T-T_1\right)$ and $T$.
		Let $t^*$ be the smallest integer such that $t^*\geq T_1 + \frac{1+q}{2}\left(T-T_1\right)$ and let $k$ be the number of occurrences of $\underline a$ in $\vec a$ between $t^*+1$ and $T$.
		Then, with the bounds in Claim 1, and since $T\geq T^*\geq T_1+\frac{4\left(\varphi\left(\underline a\right)+m\right)}{\left(1-q\right)\varphi\left(\underline a\right)}$ by construction,
		\begin{align}
			k &= \varphi\left(\underline a\mid\vec a^T\right)T - \varphi\left(\underline a\mid\vec a^{t^*}\right)t^*\notag\\
			&\geq \varphi\left(\underline a\right) T - \varphi\left(\underline a\right)t^* -2m\notag\\
			&\geq \varphi\left(\underline a\right)\left( T- \left( T_1 + \frac{1+q}{2}\left(T-T_1\right) +1\right)\right) -2m\notag\\
			&=  \varphi\left(\underline a\right)\left(T-T_1\right)\frac{1-q}{2} - \varphi\left(\underline a\right) -m\notag\\
			&= \varphi\left(\underline a\right)\left(T-T_1\right)\frac{1-q}{4} + \varphi\left(\underline a\right)\left(T-T_1\right)\frac{1-q}{4}  - \varphi\left(\underline a\right) -2m\notag\\
			&\geq \varphi\left(\underline a\right)\left(T-T_1\right)\frac{1-q}{4} + \varphi\left(\underline a\right)\left(T_1+\frac{4\left(\varphi\left(\underline a\right)+2m\right)}{\left(1-q\right)\varphi\left(\underline a\right)}-T_1\right)\frac{1-q}{4}  - \varphi\left(\underline a\right) -2m\notag\\
			&=\varphi\left(\underline a\right)\left(T-T_1\right)\frac{1-q}{4}.\label{equ:thm:limsup_lower_bound_k}
		\end{align}
		Let $s^1,\ldots, s^k$ be the times of all occurrences of $\underline a$ in $\vec a$ with $T_1 + \frac{1+q}{2}\left(T-T_1\right)\leq s_1\leq\cdots\leq s_k\leq T$.
		Let $t_1\leq \cdots \leq t_k$ be the last $k$ occurrences of $\underline a$ in $\vec b$, and recall that $t_\ell\leq T_1+q\left(T-T_1\right)$ for all $\ell=1,\ldots, k$.
		Thus, $s^\ell-t^\ell\geq \frac{1+q}{2}\left(T-T_1\right)$ for all $\ell=1,\ldots, k$.
		Define histories $\vec b(0),\ldots,\vec b(k)$ as follows.
		Let $\vec b(0)=\vec b$ and for all $\ell=1,\ldots, k$, let $\vec b_{t^{\ell}}\left(\ell\right) = \vec b_{s^{\ell}}\left(\ell -1\right)$, $\vec b_{s^{\ell}}\left(\ell\right) = \vec b_{t^{\ell}}\left(\ell -1\right)$, and $\vec b_t\left(\ell\right)=\vec b_t\left(\ell -1\right)$ for all $t\neq t^{\ell},s^{\ell}$.
		Using Lemma \ref{lem:swap_general} and the fact that $\frac{T-T_1}{T}\geq\frac{1}{2}$ by construction, we therefore have
		\begin{align*}
			U^T_{\gamma}\left(\vec b\left(\ell\right)\right) - U^T_{\gamma}\left(\vec b\left(\ell +1\right)\right)
			&\geq \gamma \frac{1}{T}\frac{s^\ell-t^\ell}{s^\ell-1}\left(\varphi\big(\overline a\big| \vec b^{t^\ell-1}\big)u\left(\overline a\right)-\varphi\big(\underline a\big|\vec b^{t^\ell-1}\big)u\left(\underline a\right)\right)\\
			&\geq \frac{\gamma}{T}\frac{(1+q)\left(T-T_1\right)}{2T}\delta\\
			&\geq \frac{\gamma}{4T}\delta
		\end{align*}
		for all $\ell=0,\ldots,k-1$.
		Observe that the iterative procedure that we used to construct $\vec b$ must have passed through these histories and, in particular, through $\vec b(k)$.
		Thus, with the lower bound in (\ref{equ:thm:limsup_lower_bound_k}) for $k$ we have
		\begin{align*}
			U_{\gamma}^T\left(\vec b\right) - U^T_{\gamma}\left(\vec a\right)
			&\geq U_{\gamma}^T\left(\vec b\right) - U^T_{\gamma}\left(\vec b(k)\right) \\
			&= \sum_{\ell=0}^{k-1} \left(U^T_{\gamma}\left(\vec b(\ell)\right)-U^T_{\gamma}\left(\vec b(\ell +1)\right)\right)\\
			&\geq k\frac{\gamma}{4T}\delta\\
			&\geq \varphi\left(\underline a\right)\frac{1-q}{4}\frac{T-T_1}{T}\frac{\gamma}{4}\delta\\
			&\geq \frac{1-q}{32}\gamma\delta\varphi\left(\bar a\right)\\
			&=2\eta.
		\end{align*}
		Thus,
		\begin{align*}
			v^T_{\gamma} &\geq U^T_{\gamma}\left(\vec b\right) \geq U^T_{\gamma}\left(\vec a\right) +2\eta \geq V^*_{\gamma}-\eta + 2\eta = V^*_{\gamma} + \eta
		\end{align*}
		for all sufficiently large $T$ that are multiples of $m$.
		In particular, $\overline V_{\gamma}=\lim_Tv_{\gamma}^T\geq V^*_{\gamma} + \eta$.
	\end{proof}

	\section{Discounting}\label{sec: discounting}

	In the context of a taste for variety, discounting can have two meanings.
	The first is the classical discounting of future payoffs. This means that we value future positive payoffs less than present ones. For example, we would rather have a delicious meal today than a week from now. The second is a discount on the effect of past uses of actions. As before, this means that the more we experience something, the less we enjoy it. For example, if we eat the same meal every day, we will eventually get tired of it. However, if we had a delicious meal yesterday, we would prefer the same meal today less than if we had it only a year ago.
	
	To take the second meaning into account we define for a discount factor $\lambda\in (0,1)$ the \emph{discounted frequency} of $a$ in the  history $\vec a^{t-1}$ as
	
	\begin{align*}
		\f^{\lambda}\left(a\middle\vert \vec a^{t-1}\right) =\begin{cases}
			\frac{1-\lam}{1-\lam^{t-1}}\sum_{s=1}^{t-1}\lam^{t-s-1}\1_{\vec a_s=a}, & \text{if } t\geq 2,\\
			0, & \text{if } t=1.
		\end{cases}
	\end{align*}
	
	
	 \noindent The fatigue parameter $\g$ is fixed throughout,  and we do not append it in notations.  The utility derived from $\vec{a}$ is defined as,
	\begin{align*}
		U^{\lam, \d}\left(\vec{a}\right)=(1-\d)\sum_{t=1}^{\infty}\d^{t-1} \left(1-\g \f^{\lambda}\left(a_t\middle\vert  \vec a^{t-1}\right)\right)u\left(a_t\right),	\end{align*}
  where $\d>0$ is the future discount factor.

	Let
	\begin{align*}
		V^{\lam, \d} &= \max_{\vec{a}}\, U^{\lam,\d}\left(\vec{a}\right).
	\end{align*}	
	The maximum exists since $U^{\lam,\d}\left(\cdot\right)$ is a continuous function defined on the compact set consisting of all infinite histories. 

	\begin{theorem}\label{thm:discounting}
        There is a function $\d(\lam)< 1$ s.t. for every $\varepsilon>0$ there is $\lam_0$ satisfying
        \begin{align*}
            V^*\leq V^{\lam, \d}<  V^*+\varepsilon,
        \end{align*}
        for every $\lam>\lam_0$ and $\d> \d(\lam)$.
    \end{theorem}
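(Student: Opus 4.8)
The plan is to write, for any infinite history $\vec a$,
\[
U^{\lambda,\delta}(\vec a)=\sum_{a\in A}u(a)\,\mu_a(\vec a)-\gamma(1-\delta)\sum_{t\ge1}\delta^{t-1}\varphi^{\lambda}\!\left(a_t\middle\vert\vec a^{t-1}\right)u(a_t),
\qquad \mu_a(\vec a):=(1-\delta)\!\!\sum_{t:\vec a_t=a}\!\!\delta^{t-1},
\]
where $\mu(\vec a)=(\mu_a(\vec a))_{a\in A}$ is the $\delta$-discounted occupation measure, an element of $\Delta(A)$. Since $\sum_a\mu_a(1-\gamma\mu_a)u(a)\le V^*$ by the definition of $V^*$ in (\ref{equ:max_problem_stationary}), the upper bound $V^{\lambda,\delta}<V^*+\varepsilon$ will follow once I show that for \emph{every} $\vec a$ the subtracted term is at least $\gamma\big(\sum_a u(a)\mu_a(\vec a)^2-\varepsilon/\gamma\big)$ (for $\lambda>\lambda_0$, $\delta>\delta(\lambda)$); the lower bound will follow by exhibiting \emph{one} history $\vec b$ whose subtracted term is at most $\gamma\sum_a u(a)\mu_a(\vec b)^2$ and whose $\mu(\vec b)$ is close to the optimal stationary frequency.

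For the \textbf{lower bound} I would first reduce to rational data: $V^*$ and $V^{\lambda,\delta}$ are continuous in $(\gamma,\{u(a)\}_a,\lambda,\delta)$, and by Proposition \ref{pro:stat} the optimal stationary frequency $\varphi$ is rational when the data are, so I may take $\vec b$ to be the infinite repetition of a length-$m$ word in which each $a$ occurs $m\varphi(a)$ times. In the periodic regime $\varphi^{\lambda}(\vec b_t\mid\vec b^{t-1})$ becomes a function $\hat\varphi_\lambda$ of $t$ modulo $m$, and grouping by action gives $\lim_{\delta\to1}U^{\lambda,\delta}(\vec b)=\sum_a u(a)\big(\varphi(a)-\gamma\,\tfrac1m\!\sum_{t:\vec b_t=a}\hat\varphi_\lambda(t)\big)$. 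The key point is that $\tfrac1m\sum_{t:\vec b_t=a}\hat\varphi_\lambda(t)=\tfrac{1-\lambda}{m(1-\lambda^{m})}\sum_{r=1}^{m}\lambda^{r-1}R_a(r)$ with $R_a(r)=\#\{t:\vec b_t=\vec b_{t-r}=a\}$, and $\sum_{r=1}^{m}R_a(r)=(m\varphi(a))^2$ together with the symmetry $R_a(r)=R_a(m-r)$ and convexity of $r\mapsto\lambda^{r-1}$ yield $\tfrac1m\sum_{t:\vec b_t=a}\hat\varphi_\lambda(t)\le\varphi(a)^2$, with strict inequality for $\lambda<1$ unless $\varphi$ is degenerate. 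Hence $\lim_{\delta\to1}U^{\lambda,\delta}(\vec b)\ge\sum_a\varphi(a)(1-\gamma\varphi(a))u(a)=V^*$, strictly for $\lambda<1$ (and in the degenerate single-action case a direct computation gives $U^{\lambda,\delta}(\vec b)=u(\bar a)(1-\gamma\delta)>V^*$ for all $\delta<1$); so there is $\delta_1(\lambda)<1$ with $V^{\lambda,\delta}\ge U^{\lambda,\delta}(\vec b)\ge V^*$ whenever $\delta>\delta_1(\lambda)$.

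For the \textbf{upper bound} — the substantive part — I would bound $\varphi^{\lambda}(a\mid\vec a^{t-1})\ge(1-\lambda)\sum_{s<t}\lambda^{t-1-s}\mathbbm 1_{\vec a_s=a}$ (discarding the normalisation $1/(1-\lambda^{t-1})$ costs only a term of order $(1-\delta)/(1-\lambda)$ after discounting), then expand the subtracted term and each $\mu_a(\vec a)^2$ as double sums over ordered pairs of periods and compare them by an Abel summation. The ``diagonal'' $\sum_a u(a)(1-\delta)^2\sum_{t:\vec a_t=a}\delta^{2t-2}$ is at most $\tfrac{1-\delta}{1+\delta}\sum_a u(a)$, negligible once $\delta$ is close to $1$; the off-diagonal comparison reduces, for each action $a$, to showing that the $\delta^{t-1}\mathbbm 1_{\vec a_t=a}$-weighted average of $\psi^a_t:=(1-\lambda)\sum_{s<t}\lambda^{t-1-s}\mathbbm 1_{\vec a_s=a}$ is at least that of $2(1-\delta)\sum_{s<t}\delta^{s-1}\mathbbm 1_{\vec a_s=a}$, up to $\varepsilon$. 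Both weighted averages are close to $\mu_a(\vec a)$: the second because $(1-\delta)\sum_{s<t}\delta^{s-1}\mathbbm 1_{\vec a_s=a}$ rises from $0$ to $\mu_a(\vec a)$ and has $\delta$-weighted mean $\tfrac12\mu_a(\vec a)$; the first because, for an action with $\mu_a(\vec a)\ge1-\lambda$, even the sparsest play of $a$ consistent with that mass forces $\psi^a_t\gtrsim\mu_a(\vec a)$ once $\lambda$ is close enough to $1$, while an action with $\mu_a(\vec a)<1-\lambda$ contributes $O\big(u(a)(1-\lambda)^2\big)$ to everything and is discarded. Making this quantitative bounds the discrepancy of the two averages by $c\big(\tfrac{1-\delta}{1-\delta\lambda}+o_\lambda(1)\big)$; choosing $\delta(\lambda)=1-(1-\lambda)^2$ (so that $\tfrac{1-\delta}{1-\delta\lambda}\le1-\lambda$ for $\delta>\delta(\lambda)$, since $\tfrac{1-\delta}{1-\delta\lambda}$ decreases in $\delta$) and then $\lambda$ close enough to $1$ makes it $<\varepsilon/\gamma$, whence $U^{\lambda,\delta}(\vec a)\le\sum_a\mu_a(1-\gamma\mu_a)u(a)+\varepsilon\le V^*+\varepsilon$ for every $\vec a$. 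Finally one takes $\delta(\lambda):=\max\{1-(1-\lambda)^2,\delta_1(\lambda)\}<1$ to meet both requirements.

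The \textbf{main obstacle} is precisely this off-diagonal comparison. The discounted frequency $\varphi^{\lambda}$ only samples the last $\sim1/(1-\lambda)$ periods, which — as Example \ref{exa:1_2_sequence_initial} shows — need not reflect the global (or discounted) frequency of an action, so the comparison cannot be made period by period but only after Abel summation, and it genuinely needs $\delta\to1$ faster than $\lambda\to1$ (to wash out the transient and the diagonal) \emph{and} $\lambda$ close to $1$ (otherwise a decision maker may thin out a high-payoff action and pay strictly less fatigue than $\mu_a$, beating $V^*+\varepsilon$). Carrying the several error terms — the dropped normalisation, the diagonal, the two regimes $\mu_a(\vec a)\gtrless1-\lambda$, and the $\delta$-weighting — through simultaneously and uniformly over all histories is where essentially all the work lies; everything else is bookkeeping around the displayed identity.
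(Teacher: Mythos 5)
Your proposal takes a genuinely different route from the paper. You rewrite $U^{\lambda,\delta}(\vec a)=\sum_a u(a)\mu_a-\gamma(1-\delta)\sum_t\delta^{t-1}\varphi^\lambda(a_t\mid\vec a^{t-1})u(a_t)$ via the discounted occupation measure $\mu\in\Delta(A)$ and try to lower-bound the fatigue term by $\gamma\sum_a u(a)\mu_a^2$, invoking $\sum_a\mu_a(1-\gamma\mu_a)u(a)\le V^*$. The paper instead completes a square: it introduces $H=\sum_t\frac{\eta^t}{\beta^t}\|\mathbf 1-\gamma\varphi^{t,\lambda}\|^2$, uses the exact one-step recursion $\varphi^{t,\lambda}=\varphi^{t-1,\lambda}+\beta^t(\mathbf 1^t-\varphi^{t-1,\lambda})$ to expand $H$, cancels $H$ on both sides, and is left (up to terms that are $O(1-\delta)$, $O(\tfrac{1-\delta}{1-\lambda})$, and $O(1-\lambda)$, in that order) with $2\gamma\sum_t\eta^t\langle\mathbf 1-\gamma\varphi^{t-1,\lambda},\varphi^{t-1,\lambda}\rangle$. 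Since each $\varphi^{t-1,\lambda}\in\Delta(A)$, each inner product is $\le V^*$ \emph{pointwise in $t$}, so the sum is $\le V^*$ outright. This pointwise bound is what replaces the ``off-diagonal comparison'' you are trying to carry out; it never requires relating the local discounted frequency $\varphi^{t,\lambda}$ to the global occupation measure $\mu$, which is precisely the delicate part of your plan.

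And that part of your plan is genuinely unfinished. The claim on which your upper bound rests — that the $\delta^{t-1}\mathbbm 1_{a_t=a}$-weighted average of $\psi^a_t$ is $\gtrsim\mu_a$, justified by ``even the sparsest play of $a$ consistent with that mass forces $\psi^a_t\gtrsim\mu_a$'' — is not a proof, and as a per-action statement it is false. Take a history in which $a$ is played in blocks of fixed length $L$ separated by gaps of length $(c-1)L$, so $\mu_a\approx 1/c$. In the stationary regime a direct computation gives that the block-average of $\psi^a_t$ equals
\begin{align*}
1-\frac{1-\lambda^L}{L(1-\lambda)}\cdot\frac{1-\lambda^{L(c-1)}}{1-\lambda^{Lc}},
\end{align*}
which for small fixed $L$ and $\lambda$ close to (but not equal to) $1$ is strictly below $1/c$ (for example, $L=2$, $c=10$, $\lambda=0.9$ gives roughly $0.081<0.1$). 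So the per-action inequality $\mu_a\cdot\mathrm{avg}(\psi^a_t)\ge\mu_a^2-\varepsilon$ is only recovered in the joint limit, and the required uniformity over all histories (including those whose block lengths scale with $1/(1-\lambda)$) is exactly what you have not established; you say as much yourself when you note that ``essentially all the work'' lies in that comparison. Incidentally, your treatment of the lower bound via rational reduction and the $R_a(r)$ convexity argument is actually more careful than the paper's one-line claim that $U^{\lambda,\delta}(\vec a)=V^*$ for a stationary optimizer (which, read literally, is false for $\lambda,\delta<1$), but the convexity step $\frac{1-\lambda}{m(1-\lambda^m)}\sum_r\lambda^{r-1}R_a(r)\le\varphi(a)^2$ is also asserted rather than proved.
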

	
	\noindent The theorem states that the stationary strategy is optimal for increasingly patient decision makers. This is another case where cyclical consumption is optimal.
	
	\begin{proof}
		Clearly,  $ V^{\lam,\d }\ge  V^*$ for sufficiently large $\lam$ and $\d $.  This is so, because $  U^{\lam, \d}\left(\vec{a}\right)= V^*$ for a stationary history $\vec{a}$ that achieves  $V^* $.
		We show the inverse direction. Let $\vec{a}$ be an arbitrary sequence and let $\varepsilon>0$. We show that for sufficiently large $\lam$ and $\d$, \begin{equation}\label{eq: at least}
		    U^{\lam, \d}\left(\vec{a}\right)< V^* +\varepsilon.
		\end{equation}

		\noindent To simplify the proof, we use the notation $ \f^{t-1,\lam} \left(a\right)=\f^{\lambda} \big(a\big\vert \vec a^{t-1}\big)$.
		Note that for every $t\ge 2$,   $\sum_a \f^{t-1,\lam} (a)=1 $.
        We let $ \f^{t-1,\lam} $ be the probability distribution that assigns  probability $ \f^{t-1,\lam} \left(a\right)$ to $a$.
        Denote also $\eta^t =(1-\d)\d^{t-1}$ and $\beta^t= \frac{1-\lam}{1-\lam^{t}}$.
        Finally, let $\mathbf 1^t$ stand for the unit $A$-dimensional vector assigning 1 to $a$ when $a_t=a $, and 0 to all other members of $A$.
		Using these notations and the inner product introduced in (\ref{equ:inner_product}), we have
		\begin{align}\label{eq: U lambda}
			U^{\lam,\d}\left(\vec{a}\right)= \sum_{t=1}^{\infty}\eta^t \big\langle \mathbf 1- \g\f^{t-1,\lam},\mathbf 1^t\big\rangle.
		\end{align}
		Denote
		\begin{align*}
			H:= H\left(\vec{a}\right)= \sum_{t=1}^{\infty} \frac{\eta^t}{\beta^t}\big\| \mathbf 1-\g\f^{t, \lam}\big\|^2.
		\end{align*}
		One easily verifies that $ \f^{t,\lam}= \f^{t-1,\lam}+\beta^t \left(\mathbf 1^t-\f^{t-1,\lam}\right)$.
		Hence, with $\epsilon_1 = (1-\delta)\big\|\mathbf 1-\gamma\varphi^{1,\lambda}\big\|$, one obtains
		\begin{align*}
			H &=
            \sum_{t=1}^{\infty} \frac{\eta^t}{\beta^t}\big\| \mathbf 1-\g\f^{t, \lam}\big\|^2\\
			&=
            \epsilon_1 +\sum_{t=2}^{\infty} \frac{\eta^t}{\beta^t}\big\|\big(\mathbf 1-\g\big(\f^{t-1,\lam} \big)\big) -\g \beta^t \big(\mathbf 1^t-\f^{t-1,\lam} \big)\big\|^2  \\
            &=
            \epsilon_1+
			\sum_{t=2}^{\infty} \frac{\eta^t}{\beta^t} \big\| \mathbf 1-\g\f^{t-1, \lam}\big\|^2 - 2\sum_{t=2}^{\infty} \frac{\eta^t}{\beta^t}\big\langle \mathbf 1- \g\f^{t-1,\lam} ,\g \beta^t \big(\mathbf 1^t-\f^{t-1,\lam} \big) \big\rangle \\
			&
            \qquad\qquad\qquad + \g^2\sum_{t=2}^{\infty} \frac{\eta^t}{\beta^t}\big\| \beta^t \big(\mathbf 1^t-\f^{t-1,\lam}\big)  \big\|^2 \\
			&=
            \epsilon_1+
            \sum_{t=2}^{\infty} \left(\frac{\eta^t}{\beta^t}- \frac{\eta^{t-1}}{\beta^{t-1}}\right) \big\| \mathbf 1-\g\f^{t-1, \lam}\big\|^2 +\sum_{t=2}^{\infty}  \frac{\eta^{t-1}}{\beta^{t-1}} \big\| \mathbf 1-\g\f^{t-1, \lam}\big\|^2  \\
			&
            \qquad\qquad\qquad -
			2\g\sum_{t=2}^{\infty}\eta^t\big \langle \mathbf 1- \g\f^{t-1,\lam} , \mathbf 1^t\big \rangle
			+2\g\sum_{t=2}^{\infty}\eta^t\big \langle \mathbf 1- \g\f^{t-1,\lam} , \f^{t-1,\lam} \big \rangle  \\
			&
			\qquad\qquad\qquad + \g^2\sum_{t=2}^{\infty}  \eta^t \beta^t \big  \|   \mathbf 1^t-\g\f^{t-1,\lam} \big  \|^2.
		\end{align*}
		Since $\sum_{t=2}^{\infty}  \frac{\eta^{t-1}}{\beta^{t-1}} \big \| \mathbf 1-\g\f^{t-1, \lam}\big \|^2 =H$, we obtain after rearranging the following key equation:
		\small
		\begin{align}\label{eq: key discount eq}
			2\g\sum_{t=2}^{\infty}\eta^t\big \langle \mathbf 1- \g\f^{t-1,\lam} , \mathbf 1^t\big \rangle
            & =\epsilon_1+ \sum_{t=2}^{\infty} \left(\frac{\eta^t}{\beta^t}- \frac{\eta^{t-1}}{\beta^{t-1}}\right) \big \| \mathbf 1-\g\f^{t-1, \lam}\big \|^2
			 \nonumber +
			2\g\sum_{t=2}^{\infty}\eta^t\big \langle \mathbf 1- \g\f^{t-1,\lam} , \f^{t-1,\lam} \big \rangle \notag\\
			&
            \qquad+ \g^2
            \sum_{t=2}^{\infty}  \eta^t \beta^t \big\| \mathbf 1^t-\g\f^{t-1,\lam} \big\|^2.
		\end{align}
        \normalsize

        \noindent Note that, by the definition of $U^{\lambda,\delta}$,
        \begin{align*}
            2\gamma\sum_{t=2}^{\infty}\eta^t\big \langle \mathbf 1- \g\f^{t-1,\lam} , \mathbf 1^t\big \rangle
            &= 2\gamma U^{\lambda,\delta}\left(\vec a\right) - 2\gamma(1-\delta) u\left(\vec a_1\right) .
        \end{align*}
        Thus, there is $\delta_0$ such that for all $\delta>\delta_0$ we have
        \begin{align}\label{eq: last}
            2\gamma U^{\lambda,\delta}\left(\vec a\right) \leq 2\gamma\sum_{t=2}^{\infty}\eta^t\big \langle \mathbf 1- \g\f^{t-1,\lam} , \mathbf 1^t\big \rangle + \frac{\varepsilon\gamma}{3}.
        \end{align}

		\noindent We turn to the expressions on the right-hand side of (\ref{eq: key discount eq}).
        As $\epsilon_1=\frac{\eta^1}{\beta^1}\left\| \mathbf 1-\g\f^{1, \lam}\right\|^2=1-\d$,  there is $\d_1$ such that $\epsilon_1 <\varepsilon\g/3$ for all $\d>\d_1$.
         	
        We next show that the first sum on the right-hand side of (\ref{eq: key discount eq}) is, for $\lambda$ and $\delta$ sufficiently close to 1, bounded by the same constant.
        For this purpose note first that $ \left \| \mathbf 1-\g\f^{t-1,\lam} \right \|^2 \leq 1$ uniformly.
        Next, observe that
        \begin{align}\label{eq: second term}
            \sum_{t=2}^{\infty} \left\vert\frac{\eta^t}{\beta^t}- \frac{\eta^{t-1}}{\beta^{t-1}}\right\vert
            &\le \frac{1-\d}{1-\lam} \sum_{t=1}^{\infty} \left\vert\d^t\left(1-\lam^{t+1}\right)-\d^{t-1}\left(1-\lam^{t}\right)\right\vert \nonumber  \\
            &=    \frac{1-\d}{1-\lam}\sum_{t=1}^{\infty}\d^{t-1}\left\vert\d\left(1-\lam^{t+1}\right)-\left(1-\lam^{t}\right)\right\vert.
        \end{align}

        \noindent Let $T$ be the largest integer such that\ $\d(1-\lam^{t+1})\ge (1-\lam^{t})$.
        (One easily checks that $T$ is well defined.)
        The right-hand side of  (\ref{eq: second term}) is then bounded from above by
        \small
        \begin{align}
             & \frac{1-\d}{1-\lam} \sum_{t=1}^{T}\d^{t-1}\left(\d\left(1-\lam^{t+1}\right)-\left(1-\lam^{t}\right)\right) +  \frac{1-\d}{1-\lam} \sum_{t=T+1}^{\infty}\d^{t-1}\left(\left(1-\lam^{t}\right)-\d\left(1-\lam^{t+1}\right)\right) \notag\\
             &\qquad \le \frac{1-\d}{1-\lam} \sum_{t=1}^{T}\d^{t-1}\left(\left(1-\lam^{t+1}\right)-\left(1-\lam^{t}\right)\right) +  \frac{1-\d}{1-\lam} \sum_{t=T+1}^{\infty}\d^{t-1}\left(\left(1-\lam^{t+1}\right)-\d\left(1-\lam^{t+1}\right)\right) \notag\\
             &\qquad\le  (1-\d) \sum_{t=1}^{T}\d^{t-1}\lam^{t} + \frac{1-\d}{1-\lam}  \sum_{t=T+1}^{\infty}\d^{t-1}\left(1-\lam^{t+1}-\d\left(1-\lam^{t+1}\right)\right) \notag \\
             &\qquad\le \lam \frac{1-\d}{1-\lam \d} +  \frac{(1-\d)^2}{1-\lam}  \sum_{t=T+1}^{\infty}\d^{t-1}\left(1-\lam^{t+1}\right)\le  \frac{1-\d}{1-\lam \d}  +  \frac{\left(1-\d\right)^2}{1-\lam}  \sum_{t=T+1}^{\infty}\d^{t-1} \notag\\
             &\qquad\le   \frac{1-\d}{1-\lam \d}  +  \frac{1-\d}{1-\lam }\notag\\
             &\qquad\leq   2 \frac{1-\d}{1-\lam }.\label{eq: second term 2}
        \end{align}
        \normalsize
        \noindent As for any $\lam$ there is a function $\d_2(\lam) $ such that when $\d >\d_2(\lam)$, it holds that $2 \frac{1-\d_1(\lam) }{1-\lam }< \varepsilon\g/3 $, we find with (\ref{eq: second term}) and (\ref{eq: second term 2})
        \begin{align}
            \sum_{t=2}^{\infty} \left(\frac{\eta^t}{\beta^t}- \frac{\eta^{t-1}}{\beta^{t-1}}\right) \big \| 1-\g\f^{t-1, \lam}\big \|^2
            &\leq
            \frac{1-\d}{1-\lam}\sum_{t=1}^{\infty}\d^{t-1}\left\vert\d\left(1-\lam^{t+1}\right)-\left(1-\lam^{t}\right)\right\vert
            \leq
            2 \frac{1-\d}{1-\lam }\notag \\
            &< \frac{\varepsilon\g}{3}\label{eq: second term final}
        \end{align}
        as required.

        As for the second sum on the right-hand side of (\ref{eq: key discount eq}), recall that for every $t$ the vector $ \f^{t-1,\lam} $ is a distribution over $A$, so that by the definition of $V^*$ we have, for every $t$,
		$\left \langle \mathbf 1- \f^{t-1,\lam} , \f^{t-1,\lam} \right \rangle \le V^* $.
        Thus,
        \begin{align} \label{eq: last2}
            2\g \sum_{t=2}^{\infty}\eta^t\big \langle \mathbf 1- \g\f^{t-1,\lam} , \f^{t-1,\lam} \big \rangle \le  2\g V^* \sum_{t=2}^{\infty}\eta^t= 2\g V^* (1-\delta) \sum_{t=2}^{\infty}\delta^{t-1} \leq 2\g V^*.
        \end{align}

        \noindent For the last sum on the right-hand side of (\ref{eq: key discount eq}), first note that
      	\begin{align} \label{eq: product series}
            \sum_{t=2}^{\infty}  \eta^t \beta^t=\sum_{t=2}^{\infty}(1-\d) \d^{t-1}\frac{1-\lam}{1-\lam^{t}}\le \sum_{t=1}^{\infty}(1-\d) \d^{t-1}\frac{1-\lam}{1-\lam^{t}}.
		\end{align}


        \noindent For all $\lambda<1$ there is $t^*$ such that $\frac{1-\lambda}{1-\lambda^t}\leq 1-\lambda + \frac{\varepsilon\gamma}{6}$ for all $t\geq t^*$.
        Indeed, note that $\frac{1-\lambda}{1-\lambda^t} = \left(\sum_{s=0}^{t-1}\lambda^s\right)^{-1}\longrightarrow 1-\lambda$ as $t\rightarrow\infty$.
        Moreover, for each $t^*$ there $\delta'<1$ such that $\sum_{s=1}^{t^*}(1-\delta)\delta^{t-1}\frac{1-\lambda}{1-\lambda^t}\leq\frac{\varepsilon\gamma}{6}$ for all $\delta>\delta'$.
        Hence, for each $\lambda<1$ there is $\delta_3\left(\lambda\right)$ such that by (\ref{eq: product series})
        \begin{align}
            \g^2
            \sum_{t=2}^{\infty}  \eta^t \beta^t \big\| \mathbf 1^t-\g\f^{t-1,\lam} \big\|^2
            &\leq
            \sum_{t=2}^{\infty}  \eta^t \beta^t \notag\\
            &\leq
            \sum_{t=1}^{\infty}(1-\d) \d^{t-1}\frac{1-\lam}{1-\lam^{t}} \notag\\
            &= \sum_{s=1}^{t^*}(1-\d) \d^{t-1}\frac{1-\lam}{1-\lam^{t}} + \sum_{t=t^*+1}^{\infty}(1-\d) \d^{t-1}\frac{1-\lam}{1-\lam^{t}} \notag\\
            &\leq \frac{\varepsilon\gamma}{6} + (1-\delta)\left(1-\lambda +\frac{\varepsilon\gamma}{6}\right) \sum_{t=t^*+1}^{\infty}\delta^t \notag\\
            &\leq 1-\lam +\varepsilon\g/3 \notag\\
            &\leq \frac{2\varepsilon\gamma}{3}, \label{eq: last term final}
        \end{align}
        for all $\lambda\geq 1-\frac{\varepsilon\gamma}{3}$ and $\delta\geq\delta_3\left(\lambda\right)$.


        Hence, from (\ref{eq: key discount eq}), (\ref{eq: second term final}), (\ref{eq: last2}), and (\ref{eq: last term final}) we obtain
        \begin{align*}
            2\gamma U^{\lambda,\delta}\left(\vec a\right) &\leq 2\gamma\sum_{t=2}^{\infty}\eta^t\big \langle \mathbf 1- \g\f^{t-1,\lam} , \mathbf 1^t\big \rangle + \frac{\varepsilon\gamma}{3}
            \leq \frac{\varepsilon\gamma}{3} + \frac{\varepsilon\gamma}{3} + 2\gamma V^* + \frac{2\varepsilon\gamma}{3} + \frac{\varepsilon\gamma}{3}\\
            &= \frac{5\varepsilon\gamma}{3} + 2\gamma V^*.
        \end{align*}
        Dividing by $2\gamma$ yields $U^{\lambda,\delta}\left(\vec a\right)\leq V^* + \varepsilon$, as required.
\end{proof}

	\section{Summary}
	We have discussed a dynamic decision problem in which the decision maker's utility derived from a certain action diminishes with the frequency of its use. In the interesting cases where utility is measured by the limit inferior or discounting is applied, the optimal outcome is achieved by a stationary strategy, meaning that periodic consumption is optimal.

	\printbibliography

\end{document}